\crefname{appsec}{Appendix}{Appendices}
\newtheorem{thm}{Theorem}
\newtheorem{lem}[thm]{Lemma}
\newtheorem*{thmmain}{\cref{thm:main}}
\newtheorem*{thmhighdimensional}{\cref{thm:high-dimensional}}
\newtheorem*{thmspectraleig}{\cref{thm:spectraleig}}
\newtheorem*{lemrecinf}{\cref{lem:rec-inf}}
\newtheorem*{thmlambda}{\cref{thm:lambda1}}
\newtheorem*{lemhatinf}{\cref{lem:hatinf-rec-bound}}
\newtheorem*{lemMleq}{\cref{lem:M<=I+hatI}}
\newtheorem*{lemboundratio}{\cref{lem:bound-ratio}}
\crefname{lem}{Lemma}{Lemmas}
\crefname{thm}{Theorem}{Theorems}
\theoremstyle{definition}
\newtheorem{defn}[thm]{Definition}
\newtheorem{eg}[thm]{Example}
\newtheorem{obs}[thm]{Observation}
\crefname{defn}{Definition}{Definitions}
\theoremstyle{remark}
\newcommand{\zb}{\mathbf{z}}
\newcommand{\Ab}{\mathbf{A}}
\newcommand{\Db}{\mathbf{D}}
\newcommand{\pib}{\boldsymbol{\pi}}
\newcommand{\ones}{\mathbf{1}}
\newcommand{\one}{\mathbbm{1}}
\newcommand{\TV}[2]{\big\|#1 - #2\big\|_{\mathrm{TV}}}
\newcommand{\Influence}{\mathcal{I}}
\newcommand{\I}{\Influence}
\newcommand{\Tsaw}{T_{\textsc{saw}}}
\newcommand{\Tmix}{T_{\mathrm{mix}}}
\newcommand{\eps}{\varepsilon}
\newcommand{\bs}{\backslash}
\newcommand{\N}{\mathbb{N}}
\newcommand{\R}{\mathbb{R}}
\newcommand{\II}{\mathcal{I}}
\newcommand{\JJ}{\mathcal{J}}
\newcommand{\LL}{\mathcal{L}}
\newcommand{\MM}{\mathcal{M}}
\newcommand{\PP}{\mathbb{P}}
\newcommand{\Pc}{\mathcal{P}}
\newcommand{\wPc}{\widehat{\mathcal{P}}}
\newcommand{\sra}{\text{\scriptsize{\,$\rightarrow$\,}}}
\newcommand{\seq}{\text{\scriptsize{\,$=$\,}}}
\newcommand{\sneq}{\text{\scriptsize{\,$\neq$\,}}}
\begin{document}

\title{Rapid Mixing for Colorings via Spectral Independence}
\author{
Zongchen Chen\thanks{School of Computer Science, Georgia Institute of Technology, USA. Research supported in part by NSF grant CCF-1563838.}
\and
Andreas Galanis\thanks{
  Department of Computer Science, University of Oxford, Wolfson Building, Parks Road, Oxford, OX1~3QD, UK.}
\and
Daniel \v{S}tefankovi\v{c}\thanks{Department of Computer Science, University of Rochester, USA. Research supported in part by NSF grant CCF-1563757.}
\and
Eric Vigoda$^\star$
}

\maketitle
\thispagestyle{empty}
\begin{abstract} 
The spectral independence approach of Anari et al. (2020) utilized 
recent results on high-dimensional expanders of Alev and Lau (2020) and established  rapid mixing of the 
Glauber dynamics for the hard-core model defined on weighted independent sets.
We develop the spectral independence approach for colorings, and obtain new algorithmic results for the corresponding counting/sampling problems.

Let $\alpha^*\approx 1.763$ denote the solution to $\exp(1/x)=x$ and let $\alpha>\alpha^*$.
We prove that, for any triangle-free graph $G=(V,E)$ with maximum degree $\Delta$, for all $q\geq\alpha\Delta+1$,
the mixing time of the Glauber dynamics for $q$-colorings is polynomial in $n=|V|$, with the exponent of the polynomial independent of $\Delta$ and $q$.
In comparison, previous approximate counting results for colorings
held for a similar range of $q$ (asymptotically in $\Delta$) but with
larger girth requirement or with a running time where the polynomial exponent depended on~$\Delta$ and $q$ (exponentially). 
One further feature of using the spectral independence approach to study colorings is that it avoids many of the technical complications in previous approaches caused by coupling arguments or by passing to the complex plane;  the key improvement on the running time is based on relatively simple combinatorial arguments which are then  translated into spectral bounds. 
\end{abstract}

\newpage

\clearpage
 \setcounter{page}{1}

\section{Introduction}
The colorings model is one of the most-well studied models  in computer science, combinatorics, and statistical physics. Here, we will be interested in designing efficient algorithms for sampling colorings uniformly at random.  More precisely, given a graph $G=(V,E)$ of maximum degree $\Delta$ and an integer $q\geq 3$, let $\Omega$
denote the set of proper $q$-colorings of $G$; the goal is to
generate a coloring uniformly at random (u.a.r.) from $\Omega$ in time polynomial in $n=|V|$.  The colorings model can be interpreted as a ``spin system'', when we view colors as spins with interactions between spins induced by forbidding neighboring vertices to be assigned the same spin. Note, the colorings model is a multi-spin system, in contrast to 2-spin systems such as the hard-core and the Ising models.

For spin systems, 
the key algorithmic task for studying the equilibrium properties of the model is
sampling from the associated Gibbs distribution.  For integer $q\geq 2$, the Gibbs distribution
of a $q$-spin system on an $n$-vertex graph $G$ is defined on the $q^n$ possible assignments of the spins to the vertices of the graph, where the weight of a spin assignment is determined by nearest-neighbor interactions; our goal is a sampling algorithm with running time
polynomial in $n$.  An efficient approximate sampler is polynomial-time equivalent to  an efficient approximation scheme for the corresponding  partition function~\cite{JVV,SVV, Hub, Kol}, which is the normalizing
factor in the Gibbs distribution.  

The classical approach 
for the approximate sampling/counting problem is the {\em Markov Chain Monte Carlo (MCMC)} approach, 
where we design a Markov chain whose stationary distribution 
is the Gibbs distribution.  A particularly popular Markov chain is the 
{\em Glauber dynamics}. 
Due to its simplicity and easy applicability, it is also studied as an idealized 
model for how the physical system approaches equilibrium. The Glauber dynamics updates the spin at a random vertex based on its marginal distribution in
the Gibbs distribution conditional on the spins of its neighbors. The Glauber dynamics $(X_t)$
is quite simple to describe for the colorings problem. Starting from an arbitary coloring $X_0\in \Omega$, at time $t\geq 0$,    
choose a vertex $v$ u.a.r. and then set $X_{t+1}(w)=X_t(w)$ for all $w\neq v$ and choose
$X_{t+1}(v)$ u.a.r. from the set of colors that do not appear in the neighborhood of $v$. The key quantity for the Glauber dynamics is the 
{\em mixing time} which is the number of steps from the worst initial state $X_0$ to reach within 
total variation distance $\leq 1/4$ of its stationary distribution. Despite its simplicity, analyzing the mixing time of the Glauber dynamics even 
for the canonical case of the colorings model is surprisingly challenging.

There are two non-MCMC algorithmic methods that have been 
powerful and more amenable to a finer understanding so far: the correlation decay and Barvinok's interpolation methods.
The basis of the correlation decay method is the so-called {\em strong spatial mixing (SSM)} condition\footnote{Roughly speaking, the SSM condition captures whether, if we fix two partial assignments $\sigma,\tau$ on a subset of vertices $T$, the difference in the conditional marginal distribution at a vertex $v$ decays exponentially in the distance between $S$ and $v$, where $S\subseteq T$ is the subset of vertices that $\sigma,\tau$ differ.}; for 2-spin systems, one for example can utilize SSM together with a clever tree construction of Weitz~\cite{Weitz}
to efficiently estimate marginals and hence obtain an approximation algorithm.  The alternative algorithmic method by Barvinok~\cite{Barvinok}, which was further refined by Patel and Regts~\cite{PR17},   examines instead the roots
of the partition function in the complex plane and approximates the Taylor series of
the partition function in a zero-free~region. 

Both of these non-MCMC approaches have been shown 
to work for  antiferromagnetic 2-spin systems\footnote{A 2-spin system is called antiferromagnetic if neighboring spins prefer to be different, see for example~\cite{LLY13} for more details. Examples include the hard-core model and the antiferromagnetic Ising model.} up to the so-called tree uniqueness threshold, see \cite{Weitz,SST,LLY13} for the correlation decay approach and \cite{PR,SS} for the  interpolation method; see also~\cite{Sly,SlySun,GSV:ising} for complementary hardness results. However, the running time of these algorithmic approaches scales as $O(n^C)$ where the exponent $C$ depends on $\Delta$ and on the multiplicative gap $\delta$ from the tree uniqueness threshold; obtaining faster algorithms even for 2-spin systems is a major open problem. 

To this vein, MCMC methods typically give much faster (randomized) algorithms, however corresponding results were lacking until a recent breakthrough result of Anari, Liu and Oveis Gharan~\cite{ALO}, who proved rapid mixing of the Glauber dynamics for the hard-core model, matching the parameter range of the aforementioned non-MCMC approaches and also improving the running time with a polynomial exponent which is independent of the degree bound $\Delta$. They introduced a spectral
independence approach which utilizes high-dimensional expander results of Alev and Lau~\cite{AL} (cf. \cite{KO, Oppenheim1}).
The work of~\cite{ALO} establishes that, for 2-spin systems, it suffices to bound 
the largest eigenvalue of the $n\times n$ influence matrix $\I$ where the $(v,w)$ entry captures the influence of the
fixed spin at vertex $v$ on the marginal probability at vertex $w$; 
we explain this in more detail in \cref{sec:rough-outline}. The running time of the result of \cite{ALO} was further improved in~\cite{CLV}, who also generalised the approach to antiferromagnetic 2-spin systems up to the tree-uniqueness threshold by  showing how to utilize potential-function arguments that were previously used to establish SSM.

Going beyond 2-spin systems, all of these methods become harder to control even well above the tree-uniqueness threshold, $q=\Delta+1$, which marks the onset of computational hardness (even for triangle-free graphs, see~\cite{GSV}). Let $\alpha^*\approx 1.763$ be the solution to $\exp(1/x)=x$; this threshold has appeared in several related results for colorings, though obtaining corresponding algorithms has been challenging. For example, for $\alpha>\alpha^*$,  Gamarnik, Katz, and Misra~\cite{GKM} proved SSM on triangle-free
graphs when $q>\alpha\Delta+\beta$ for some constant $\beta=\beta(\alpha)$; see also \cite{GMP} for a related result on amenable graphs. However, the correlation decay approach has so far yielded an efficient algorithm only for $q\geq 2.58\Delta$, see~\cite{GK,LY}. It was not until recently that the SSM result of \cite{GKM} was converted to an algorithm for triangle-free graphs by Liu, Sinclair, and Srivastava~\cite{LSS} utilizing the complex
zeros approach; however, just as for 2-spin systems,  the polynomial exponent in the running time depends exponentially on $\Delta$ and the distance of $\alpha$ from~$\alpha^*$. 

The analysis of Glauber dynamics for colorings has not been easier. Jerrum~\cite{Jerrum} proved that the mixing time is $O(n\log{n})$ for all graphs when $q>2\Delta$. 
This was improved to $q>\frac{11}{6}\Delta$ with mixing time $O(n^2)$ by Vigoda~\cite{Vigoda}, which was only recently improved to $q>(\frac{11}{6}-\delta)\Delta$ for a small constant $\delta>0$~\cite{CDMPP}. Back to asymptotic results, for $\alpha>\alpha^*$ and large degrees $\Delta>\Delta_0(\alpha)$,  Dyer~et al.~\cite{DFHV} showed that on graphs with girth $\geq 5$ and maximum degree $\Delta$ the mixing time of the Glauber dynamics is $O(n\log{n})$ using sophisticated coupling arguments building upon local uniformity results of Hayes~\cite{Hayes}. See \cite{DFHV,HV} for improvements by imposing other degree/girth restrictions.

Our main contribution is to develop the spectral independence approach of \cite{AL,ALO} for colorings, and analyze Glauber dynamics in the regime $q\geq \alpha \Delta+1$ for all $\alpha>\alpha^*$ on triangle-free graphs. Our result applies for all $\Delta$ and we show that the exponent of the mixing time does not depend on $\Delta$ and $q$, 
yielding  substantially faster randomized algorithms for sampling/counting colorings than the previous deterministic ones (at the expense of using randomness).  
\newcommand{\statethmmain}{Let $\alpha^*\approx 1.763$ denote the solution to $\exp(1/x)=x$.  For all $\alpha>\alpha^*$, there exists $c=c(\alpha)>0$ such that, for any triangle-free graph $G=(V,E)$ with maximum degree $\Delta$ and any integer $q\geq \alpha \Delta+1$, the mixing time of the
Glauber dynamics on $G$ with $q$ colors is at most $n^{c}$, where $n=|V|$.}
\begin{thm}
\label{thm:main}
\statethmmain
\end{thm}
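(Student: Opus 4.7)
The plan is to instantiate the spectral independence framework of \cite{ALO} in the multi-spin (colorings) setting, combined with local-uniformity and potential-function ideas that were developed for establishing SSM on triangle-free graphs in \cite{Hayes,GKM}. Concretely, for any pinning $\tau$ that fixes a subset of vertices to colors, define the (signed) influence matrix $\I_\tau$ whose rows and columns are indexed by admissible (vertex, color) pairs and whose $((v,c),(w,c'))$-entry is
\[
  \I_\tau((v,c),(w,c')) \;=\; \PP_\tau\!\left[w=c'\mid v=c\right] - \PP_\tau\!\left[w=c'\right].
\]
Invoking the high-dimensional expander / local-to-global machinery of Alev--Lau \cite{AL}, it suffices to prove a bound $\lambda_{\max}(\I_\tau)\le \eta(\alpha)$ uniformly over all pinnings $\tau$; via the trickle-down argument this gives a Glauber mixing time $n^{O(\eta)}$ whose exponent depends only on $\alpha$, as required by \cref{thm:main}.

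The next step is to pass from the spectral bound to an $\ell_\infty$ row-sum bound, which reduces the task to showing
\[
  \sum_{w,c'}\bigl|\I_\tau((v,c),(w,c'))\bigr| \;\le\; \eta(\alpha)
\]
for every $(v,c)$ and $\tau$. I would estimate these influences via a Weitz-style tree recursion on the self-avoiding walk tree rooted at $v$, expressing each conditional marginal $\PP_\tau[w=c']$ as a function of the children's marginals so that the influence telescopes into a product of single-edge ``derivative'' contributions along the root-to-$w$ path in the tree. The combinatorial engine is Hayes' local uniformity, which on triangle-free graphs with $q\ge \alpha\Delta+1$ and $\alpha>\alpha^*$ certifies that the marginal at every vertex is near-uniform on its palette of available colors; feeding this into a Gamarnik--Katz--Misra-style potential function then produces a per-edge contraction factor $\rho<1$ that dominates the $\Delta$-fold branching of the tree.

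The main obstacle is that, compared with the scalar recursions analysed for 2-spin systems in \cite{CLV}, the row-sum here aggregates influences over $\Theta(q)\cdot\Delta^k$ targets at depth $k$ in the SAW tree, so a naive triangle inequality would demand a contraction much stronger than what local uniformity by itself supplies with only constant-additive slack above $\alpha^*\Delta$. The resolution must exploit sign cancellations across the target colors $c'$: since the conditional and unconditional marginals at $w$ are both probability distributions, the column-wise signed influences sum to zero, so the inner sum $\sum_{c'}|\I_\tau((v,c),(w,c'))|$ is really a total-variation distance between two color distributions and can be controlled by a scalar recursion that folds the color-axis complexity into the potential. Turning these cancellations into a clean recursive bound valid at the sharp threshold $\alpha>\alpha^*$, with additive slack only $1$, is the hard part; this is precisely where \cref{thm:main} improves on the deterministic algorithms of \cite{LSS} whose polynomial exponent blew up with $\Delta$ and with $1/(\alpha-\alpha^*)$.
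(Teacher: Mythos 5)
Your high-level plan matches the paper's framework (spectral independence via \cite{AL,ALO}, reduction to a row-sum bound on the influence matrix, and the observation that the key difficulty is the factor-of-$q$ loss in the naive triangle-inequality bound), but there are two substantive gaps.

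First, you propose deriving the influence recursion via a ``Weitz-style tree recursion on the self-avoiding walk tree rooted at $v$.'' The self-avoiding walk tree construction of \cite{Weitz} gives an \emph{exact} reduction to a tree only for 2-spin systems, and the paper explicitly notes this fails for colorings. The recursion actually used (\cref{lem:rec-inf}) is not a SAW tree at all: it deletes $v$ and expresses the influence as a telescoping sum over neighbors $u\in N_G(v)$ of differences on the graph $G\setminus v$ with carefully modified lists $L_u^{ij}$, following the computation-tree idea of \cite{GKM}. Your ``per-edge contraction along root-to-$w$ paths in the SAW tree'' picture therefore does not describe a valid reduction in the multi-spin setting; you would need to replace it by the $G\setminus v$ recursion or something equivalent.

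Second, and more seriously, the crux of the theorem --- getting an eigenvalue bound independent of $\Delta$ and $q$ --- is not supplied. You correctly observe that $\sum_{c'}\I_\tau((v,c),(w,c'))=0$ and that the inner sum of absolute values is a TV distance, and you correctly identify this cancellation as where the extra factor of $q$ must be saved. But you leave the resolution unspecified (``the resolution must exploit sign cancellations...''), and you also suggest controlling things via a GKM-style potential function, which the paper explicitly declines to do: the GKM potential is non-differentiable, so the usual differentiate-the-potential route to spectral independence (as in \cite{CLV}) is unavailable. What the paper does instead is introduce a second, ``biased'' influence quantity $\hat{\II}$ restricted to conditioning colors $i,j\neq k$, observe (via the star example) that the only $\Theta(1/q)$-sized entries of $\MM$ are the diagonal $k=i$ entries while the off-diagonal entries are $O(1/q^2)$, and then run a coupled two-quantity induction on $(\II^*,\hat{\II}^*)$ (\cref{lem:hatinf-rec-bound}, \cref{thm:bound-hatinfl}) using the exact zero-column-sum identity to bound the diagonal term by the off-diagonal sum (\cref{lem:M<=I+hatI}). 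Without this two-quantity refinement, your proposal only reaches the weaker $\lambda_1(\MM)=O(\Delta)$ bound of \cref{sec:simpler-bound}, which gives mixing time $n^{O(\Delta)}$, not the $n^{c(\alpha)}$ bound claimed in \cref{thm:main}.

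A smaller point: you invoke Hayes' local uniformity, which is a dynamic statement about burn-in trajectories of Glauber dynamics on girth-$\geq 5$ graphs. The paper needs only a static stationary-marginal bound on the ratio $\PP(\sigma_u=c)/\PP(\sigma_u\neq c)$ on triangle-free graphs (\cref{lem:bound-ratio}), proved directly via AM--GM following \cite{GMP,GKM}. The $\alpha^*$ threshold arises from the same $xe^{-1/x}$ calculus in both, but the static bound is what the spectral argument actually consumes.
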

One feature of using the spectral independence approach to study colorings is that it avoids many of the technical complications caused by coupling arguments or by passing to the complex plane, and allows us to get a better grip on the quantities of interest (marginals); indeed, as we shall explain in the next section, the key improvement on the running time is inspired by relatively simple combinatorial arguments and translating them into appropriate spectral bounds.

\subsection{Proof approach}
\label{sec:rough-outline}

Our work builds upon the spectral independence approach introduced by 
Anari, Liu, and Oveis Gharan~\cite{ALO}, which in turn utilizes the high-dimensional expander
work of Alev and Lau~\cite{AL}.  Consider a graph $G=(V,E)$ of maximum degree $\Delta$.
The key to this approach is to analyze the spectral radius of the $nq\times nq$ matrix 
$\MM$ where, for distinct $v,w\in V$ and $i,k\in [q]$, 
\[ \MM\big((v,i),(w,k)\big) = 
\PP(\sigma_w \seq k \mid \sigma_v \seq i) - \PP(\sigma_w \seq k).
\]
The spectral independence approach is formally presented in \cref{sec:outline},
and the connection to rapid mixing is formally stated in \cref{thm:high-dimensional}.

To be precise, in the spectral independence approach
we need to analyze the corresponding matrix $\MM$ for the Gibbs distribution $\mu_G$ conditional on 
all fixed assignments $\sigma_S$ for all $S\subseteq V$.  
A fixed assignment $\sigma_S$ yields a list-coloring problem instance and hence we need to consider the
more general list-coloring problem.  
At a high-level this is analogous to SSM (strong spatial mixing). 
We do not formally define SSM in this paper since it is not explicitly used.   Roughly speaking, in SSM we
consider the effect 
of a pair of boundary colorings on the marginal distribution of a specified vertex given the worst fixed assignment $\sigma_S$ 
for an arbitrary subset $S$.

In~\cite{CLV} it was shown for $2$-spin systems how the standard proof approach for establishing SSM also implies spectral independence. 
However, the restriction to $2$-spin systems is fundamental.
For $2$-spin systems, Weitz~\cite{Weitz} showed that for any graph $G=(V,E)$, any $v\in V$, there
is an appropriately defined tree $T=\Tsaw(G,v)$ (corresponding to the self-avoiding walks in $G$ starting from $v$ with a 
 particularly fixed assignment to the leaves) so that the marginal distribution for the root of $T$ (in the corresponding Gibbs distribution $\mu_T$)
is identical to the marginal distribution for $v$ (in $\mu_G$).  Utilizing this self-avoiding walk tree construction, the main idea in proofs establishing SSM is
to design a potential function on the ratio of the marginal distribution for the root of a tree and prove that this potential function is contracting 
for the corresponding tree recursions.

Gamarnik, Katz, and Misra~\cite{GKM} established SSM for the colorings problem when $k>\alpha^*\Delta+\beta_1$ for some constant $\beta_1>0$
for all triangle-free graphs of maximum degree $\Delta$.  Even though Weitz's self-avoiding walk tree connection no longer holds for colorings,~\cite{GKM} 
utilized an appropriately constructed computation tree for the more general list-coloring problem.  They then present a potential function
which is contracting with respect to the corresponding recursions for their computation tree.  

Previous proofs for the spectral independence study entries of the influence matrix using the derivative of the potential function. 
Instead, the SSM proof approach of \cite{GKM} uses a non-differentiable potential function so we cannot use the same analytical approach. 
We analyze the entries of the influence matrix by a more combinatorial argument, paying attention to the entries that are potentially large and therefore corresponds to highly correlated vertex-spin pairs. 

In particular, to bound the spectral radius of the matrix~$\MM$,
we consider the following quantity: for a pair of vertices $v,w\in V$ and a color $k\in [q]$, define the {\em maximum influence of $v$ on $(w,k)$} as:
\[
\II[v \sra (w,k)] 
= \max_{i,j\in [q]} 
\left| \PP (\sigma_w \seq k \mid \sigma_v \seq i) 
- 
\PP(\sigma_w \seq k \mid \sigma_v \seq j) \right|. 
\]
This is reminiscent of the potential function given in \cite{GKM} and an adaptation of their arguments allows us to write a recursion for $\II[v \sra (w,k)]$, expressing it in terms of the influences of the neighbors of $v$ in a graph where $v$ is deleted. In turn, this gives a recursion for the aggregate influences (over $w,k$); the growth rate of the aggregate influences in the recursion is controlled by the product of the degree of $v$ and the marginal probability at $v$ and the condition $q\geq \alpha \Delta+1$ guarantees that this product is less than 1. The end result of this ``vanilla'' approach yields that the spectral radius of $\MM$ is  $C\Delta/\eps$ when $q\ge(1+\eps)\alpha^*\Delta+1$ for arbitrarily small $\eps>0$ and $C$ is an absolute constant. This in turn gives a (weaker) polynomial bound for fixed values of $\Delta$ (the constant in the exponent grows linearly with $\Delta$). 
While this argument does not quite give what we want, it contains many of the relevant ideas that are used in the more refined argument later, so we present the simpler argument in \cref{sec:simpler-bound}.

To get the stronger polynomial bound stated in \cref{thm:main} for all $\Delta$, we need instead to prove that the spectral radius of $\MM$ is independent of $\Delta$ and $q$; 
achieving this stronger result requires further insight.  For the influences $\MM$ the only large entries are the ``diagonal'' entries 
corresponding to the cases when $i=k$.  This is illustrated by the simple example of a star on $\Delta+1$ vertices in \cref{sec:nottight}
where these diagonal entries are of order $\Theta(1/q)$ whereas the non-diagonal entries are $O(1/q^2)$.
To handle this discrepancy 
we introduce a new notion of maximum influence $\hat{\II}_L[v \sra (w,k)]$ corresponding to the cases $i,j\neq k$. 
We need a more intricate induction argument to simultaneously maintain appropriate bounds on both of these two quantities.
The final result upper bounds the row-sum of $\MM$ by $O((\Delta/q) \eps^{-2})$.
This proof which is the main ingredient of the proof of \cref{thm:main} is presented in \cref{sec:better-bound}.

\section{Spectral independence and proof outline}
\label{sec:outline}

\subsection{Preliminaries}
Let $q\geq 3$ be an integer and denote by $[q]:=\{1,\hdots,q\}$. 

A list-coloring instance is a pair $(G,L)$ where $G=(V,E)$ is a graph and  $L=\{L(v)\}_{v\in V}$ prescribes a list $L(v)\subseteq[q]$ of available colors for each $v\in V$; it will also be convenient to assume that the vertices of $G$ are ordered by some relation $<$ (the ordering itself does not matter). 
A proper list-coloring for the instance $(G,L)$ is an assignment $\sigma:V\rightarrow [q]$ such that 
$\sigma_v \in L(v)$ for each $v\in V$ and $\sigma_v \neq \sigma_w$ for each $\{v,w\} \in E$. 
The instance is satisfiable iff such a proper list-coloring exists. Note, $q$-colorings corresponds to the special case where $L(v)=[q]$ for each $v\in V$.  
For a satisfiable list-coloring instance $(G,L)$, we will denote by $U_{G,L}$ the set $\{(v,i)\mid v\in V, i\in L(v)\}$, by $\Omega_{G,L}$ the set of all proper list-colorings, and by $\PP_{G,L}$ the uniform distribution over $\Omega_{G,L}$; we will omit $G$ from notations when it is clear from context. We typically use $\sigma$ to denote a random list-coloring that is distributed according to $\PP_{G,L}$.

We will be interested in analyzing the Glauber dynamics on  $\Omega_{G,L}$. This is a Markov chain $(Z_t)_{t\geq 0}$ of list-colorings which starts from an arbitrary $Z_0\in \Omega_{G,L}$ and at each time $t\geq 0$ updates the current list-coloring $Z_{t}$ to $Z_{t+1}$ by selecting a vertex $v\in V$ u.a.r. and setting $Z_{t+1}(v)=c$, where $c$ is a color chosen u.a.r. from the set $L(v)\backslash Z_t(N_G(v))$; for a vertex $w\neq v$, the color of $w$  is unchanged, i.e.,  $Z_{t+1}(w)=Z_{t}(w)$. The transition matrix of the Glauber dynamics will be denoted by $\Pc=\Pc_{G,L}$.

To ensure satisfiability of $(G,L)$ as well as ergodicity of the Glauber dynamics, we will henceforth assume the well-known condition that $|L(v)|\geq \Delta_G(v)+2$ for all $v\in V$, where $\Delta_G(v)=|N_G(v)|$ and $N_G(v)$  is the set of  neighbors of $v$ in $G$.\footnote{\label{foot:ergod}To ensure satisfiability, it suffices to have the assumption $|L(v)|\geq \Delta_G(v)+1$ for all $v\in V$; in fact, for every $v\in V$ and $i\in L(v)$ there exists a list-coloring $\sigma$ of $(G,L)$ with $\sigma_v=i$. The slightly stronger condition $|L(v)|\geq \Delta_G(v)+2$ for every $v\in V$ ensures that any two list-colorings $\sigma,\tau$ are ``connected'' by a sequence of list-colorings where consecutive list-colorings differ at the color of a single vertex. (A clique with $q+1$ vertices gives a counterexample to this latter property for $q$-colorings).} Then, Glauber dynamics converges to the uniform distribution over $\Omega_{G,L}$. The mixing time of the chain is the number of steps needed to get within total variation distance $\leq 1/4$ from a worst-case initial state, i.e.,
\[\Tmix=\max_{\sigma\in \Omega_{G,L}} \min \Big\{t \ge 0\,\Big|\, X_0=\sigma, \TV{X_t}{\PP_{G,L}}\leq 1/4\Big\}.\]
It is well-known that, for any integer $k\geq 1$, after $k\Tmix$ steps the total variation distance from the stationary distribution is no more than $(1/2)^{k+1}$; see, e.g., \cite[Chapter 4]{LP}. Let $\lambda_2(\Pc)$ be the second largest eigenvalue\footnote{More generally, for a square matrix $M\in \mathbb{R}^{n\times n}$ all of whose eigenvalues are real, we let $\lambda_1(M),\lambda_2(M),\hdots,\lambda_n(M)$ denote the eigenvalues of $M$ in non-increasing order.} of $\Pc$, and since the Glauber dynamics on $(G,L)$ is reversible, irreducible, and aperiodic, we have the following bound by applying well-known results from the theory of Markov chains. 
\begin{lem}[see, e.g., {\cite[Theorem 12.3 \& 12.4]{LP}}]
Let $(G,L)$ be a list-coloring instance with $G=(V,E)$ and $L=\{L(v)\}_{v\in V}$. 
Let $n=|V|$ and $Q=\max_{v\in V}|L(v)|$. 

Then, denoting by $\lambda_2=\lambda_2(\Pc_{G,L})$ the second largest eigenvalue of $\Pc_{G,L}$, we have that the mixing time of the Glauber dynamics satisfies $\Tmix\leq \frac{n\ln (4Q)}{1-\lambda_2}$.
\end{lem}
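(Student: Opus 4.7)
The plan is to combine the standard spectral-gap bound on the mixing time of reversible, irreducible, aperiodic Markov chains with a crude volume bound on the state space $\Omega_{G,L}$, after first checking that the absolute spectral gap of $\Pc_{G,L}$ actually coincides with $1-\lambda_2$.

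First, I would verify the hypotheses of the standard spectral theorem. Reversibility of $\Pc := \Pc_{G,L}$ with respect to the uniform distribution $\PP_{G,L}$ is immediate from the detailed balance equations, since each valid single-vertex transition factors through $\tfrac{1}{n}\cdot \tfrac{1}{|L(v)\setminus \sigma(N_G(v))|}$ and the number of available colors at $v$ is the same whether we view $\sigma$ or the updated state as the source. Irreducibility and aperiodicity under the assumption $|L(v)|\geq \Delta_G(v)+2$ are exactly the content of the footnote (aperiodicity follows at once because $|L(v)\setminus \sigma(N_G(v))|\geq 2$ yields positive holding probability at every state).

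Second, I need to justify using $\lambda_2$ in place of the absolute spectral gap $1-\max_{i\geq 2}|\lambda_i(\Pc)|$ that appears in the most general version of the spectral bound. Writing $\Pc = \tfrac{1}{n}\sum_{v\in V}\Pc_v$, where $\Pc_v$ is the kernel that leaves the colors on $V\setminus\{v\}$ intact and resamples the color at $v$ uniformly from $L(v)\setminus \sigma(N_G(v))$, a short direct calculation shows $\Pc_v^2=\Pc_v$. Hence each $\Pc_v$ is a self-adjoint projection in $L^2(\PP_{G,L})$ and is positive semidefinite; by convexity, so is $\Pc$. All eigenvalues of $\Pc$ therefore lie in $[0,1]$, and the absolute spectral gap equals $1-\lambda_2$.

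Third, I would invoke Theorem 12.3 of \cite{LP}, which under these hypotheses yields
\[
\Tmix(\eps)\;\leq\; \frac{1}{1-\lambda_2}\,\ln\!\left(\frac{1}{\eps\,\pi_{\min}}\right),
\]
where $\pi_{\min}$ is the smallest mass under the stationary distribution. Since $\PP_{G,L}$ is uniform, $\pi_{\min}=1/|\Omega_{G,L}|$, and the trivial counting bound gives $|\Omega_{G,L}|\leq \prod_{v\in V}|L(v)|\leq Q^n$. Setting $\eps=1/4$ yields
\[
\Tmix \;\leq\; \frac{\ln(4Q^n)}{1-\lambda_2}\;=\; \frac{\ln 4 + n\ln Q}{1-\lambda_2}\;\leq\; \frac{n\ln(4Q)}{1-\lambda_2},
\]
where the last inequality uses $n\geq 1$. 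There is no substantial obstacle; the lemma is a standard application of spectral mixing-time theory and mainly serves to fix notation for the sequel. The only minor subtlety is replacing the absolute spectral gap by $1-\lambda_2$, which the PSD observation in the second step handles.
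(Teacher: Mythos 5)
Your proof is correct and fills in exactly the standard argument the paper is pointing to by citing Theorems 12.3 and 12.4 of Levin--Peres. The one step that genuinely needs spelling out — that the Glauber kernel $\Pc = \tfrac{1}{n}\sum_v \Pc_v$ is positive semidefinite (each $\Pc_v$ being a self-adjoint projection in $L^2(\PP_{G,L})$), so the absolute spectral gap coincides with $1-\lambda_2$ — you handle correctly, and the rest (reversibility, $\pi_{\min}\geq Q^{-n}$, and the numerical simplification $\ln 4 + n\ln Q \leq n\ln(4Q)$ for $n\geq 1$) is routine.
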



\subsection{Local expansion for list-colorings and connection to Glauber dynamics}
To analyze the Glauber dynamics on a list-coloring instance $(G,L)$, we will use the spectral independence approach of \cite{AL,ALO}. The key ingredient in this approach is to give a bound on the spectral gap of a random walk on an appropriate weighted graph; here we explain how these pieces can be adapted in the list-coloring setting and state the main result that allows us to conclude fast mixing of Glauber dynamics.

\begin{defn} Let $H_{G,L}$ be the weighted graph with vertex set $U_{G,L}$ and edges $\{(v,i),(w,k)\}$ for all $(v,i),(w,k)\in U_{G,L}$ with $v\neq w$, with corresponding edge weight  $\PP_{G,L} (\sigma_v \seq i,\sigma_w \seq k)$. 

Let $\widehat{\Pc}_{G,L}$ be the transition matrix of the simple non-lazy random walk on $H_{G,L}$.
\end{defn}
\begin{defn}
For $\alpha\in [0,1]$, we say that $(G,L)$ has local expansion bounded by  $\alpha$ if the second largest eigenvalue of the simple non-lazy random walk on the weighted graph $H_{G,L}$ is at most $\alpha$, i.e.,  $\lambda_{2}\big(\wPc\big)\leq \alpha$ where $\wPc=\widehat{\Pc}_{G,L}$ is the transition matrix of the random walk. 
\end{defn}
For the spectral independence approach of~\cite{AL,ALO}, we will need to consider conditional distributions of $\PP_{G,L}$ given a partial list-coloring\footnote{\label{foot:partial}For a subset $S\subseteq V$, we say that $\tau$ is a partial list-coloring of $(G,L)$ on $S$ if $\tau=\sigma_S$ for some $\sigma\in \Omega_{G,L}$.} on a subset of vertices; this setting is reminiscent of SSM, though the goal is different.  For a partial list-coloring $\tau$ on a subset $S\subseteq V$, let $(G_{\tau}, L_{\tau})$ be the list-coloring instance on the induced subgraph $G[V\backslash S]$ with lists obtained from $L$ by removing the unavailable colors that have been assigned by $\tau$ for each vertex in $V\backslash S$, i.e.,  $L_{\tau}=\{L_{\tau}(v)\}_{v\in V\backslash S}$ where for $v\in V\backslash S$ we have  $L_{\tau}(v) = L(v)\backslash \tau(N_G(v)\cap S)$.

To capture those instances of list-colorings obtained from an instance of $q$-colorings by assigning fixed colors to a subset of vertices, the following  notion of $(\Delta,q)$-list-colorings will be useful.
\begin{defn}
Let $\Delta,q$ be positive integers with $\Delta \ge 3$ and $q\geq \Delta+2$. We say that  $(G,L)$  is a $(\Delta,q)$-list-coloring instance if $G=(V,E)$ has maximum degree $\Delta$ and for each $v\in V$  it holds that  $L(v)\subseteq [q]$ and $|L(v)| \ge q-\Delta+\Delta_G(v)$.
\end{defn}

We are now ready to state the spectral independence approach for list-colorings.

\newcommand{\statethmhighdimensional}{Let $(G,L)$ be a $(\Delta,q)$-list-coloring instance where $G$ is an $n$-vertex graph. Suppose that for each integer $s=0,1,\dots,n-2$ there is $\ell_s\in [0,1)$ such that for every partial list-coloring $\tau$ on a subset $S\subseteq V$ with $|S|=s$, the conditioned instance $(G_\tau,L_\tau)$ has local expansion bounded by $\ell_s$.

Then, for $L:=\prod^{n-2}_{s=0} (1-\ell_s)^{-1}$, the spectral gap of the Glauber dynamics on $(G,L)$ is at least $1/(nL)$ and its mixing time is at most $Ln^2\ln(4q)$.}
\begin{thm}\label{thm:high-dimensional}
\statethmhighdimensional
\end{thm}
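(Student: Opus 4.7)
The plan is to realize the list-coloring instance $(G,L)$ as a pure weighted simplicial complex and view the Glauber dynamics $\mathcal{P}_{G,L}$ as the down-up random walk on the top-dimensional faces of this complex; the spectral gap bound then follows from the local-to-global theorem of Alev and Lau. First I would introduce the simplicial complex $X = X(G,L)$ with ground set $U_{G,L}$, whose faces are sets $\tau \subseteq U_{G,L}$ of the form $\{(v_1,i_1),\ldots,(v_k,i_k)\}$ with distinct $v_j$'s that correspond to partial list-colorings extending to some $\sigma \in \Omega_{G,L}$. The hypothesis $|L(v)| \geq \Delta_G(v)+2$ ensures that every such partial coloring does extend (cf.\ Footnote~\ref{foot:ergod}), so $X$ is a pure $(n-1)$-dimensional complex whose top faces are exactly the elements of $\Omega_{G,L}$. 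Assigning the uniform measure to top faces and pushing it down yields weights $\pi(\tau)$ on lower-dimensional faces proportional to the number of extensions of $\tau$ to a proper list-coloring.

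Next, for a face $\tau$ of size $s$ corresponding to a partial list-coloring on $S\subseteq V$ with $|S|=s$, I would identify the link of $\tau$ in $X$ with the simplicial complex of the conditioned instance $(G_\tau, L_\tau)$: a set $\tau'$ of vertex-color pairs on $V\setminus S$ lies in the link iff $\tau \cup \tau'$ is a face of $X$ iff $\tau'$ extends to a proper list-coloring of $(G_\tau, L_\tau)$. A direct calculation shows that the induced weight on a 1-face $\{(v,i),(w,k)\}$ of the link is proportional to $\PP_{G_\tau, L_\tau}(\sigma_v \seq i,\sigma_w \seq k)$, and therefore the simple non-lazy random walk on the 1-skeleton of the link coincides with $\widehat{\mathcal{P}}_{G_\tau, L_\tau}$. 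Hence the hypothesis of the theorem is precisely the assertion that, for every $s \in \{0,1,\ldots,n-2\}$, the local walk on the link of every face of size $s$ has second eigenvalue at most $\ell_s$.

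Finally, I would invoke the Alev--Lau local-to-global theorem to conclude that the down-up walk on the top-dimensional faces of $X$ has spectral gap at least $\tfrac{1}{n}\prod_{s=0}^{n-2}(1-\ell_s) = \tfrac{1}{nL}$. A direct unpacking shows that this down-up walk---drop a uniformly chosen vertex-color pair from the current list-coloring and resample from the conditional distribution---is exactly the heat-bath Glauber dynamics $\mathcal{P}_{G,L}$, so $1-\lambda_2(\mathcal{P}_{G,L})\geq \tfrac{1}{nL}$; combining with the quoted Markov chain bound $\Tmix \leq \tfrac{n\ln(4Q)}{1-\lambda_2}$ together with $Q\leq q$ yields $\Tmix \leq L n^2 \ln(4q)$. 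The main technical delicacy is matching conventions carefully: verifying that the stationary distribution of the down-up walk agrees with the uniform measure on $\Omega_{G,L}$, that the Alev--Lau normalization produces the factor $1/n$ rather than some other function of the dimension, and that Oppenheim's trickle-down phenomenon composes cleanly across all $n-1$ scales into the product $\prod_{s=0}^{n-2}(1-\ell_s)$ without incurring additional losses. Once these pieces are pinned down the argument is essentially bookkeeping, and the substantive content of later sections is to bound the local expansion parameters $\ell_s$.
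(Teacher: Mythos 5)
Your proposal is correct and follows essentially the same route as the paper: realize $(G,L)$ as a pure $(n-1)$-dimensional weighted simplicial complex whose top faces are $\Omega_{G,L}$, identify links with conditioned instances and the 1-skeleton walk on the link with $\widehat{\Pc}_{G_\tau,L_\tau}$, invoke the Alev--Lau local-to-global bound $\lambda_2(\Pc)\leq 1-\frac{1}{d+1}\prod_{k=-1}^{d-2}(1-\alpha_k)$ with $d=n-1$ to get spectral gap $\geq 1/(nL)$, and combine with the quoted eigenvalue-to-mixing-time bound. The only inessential difference is that the paper does not explicitly invoke Oppenheim's trickle-down; it simply cites the Alev--Lau theorem as a black box, which already bundles that composition across levels.
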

\subsection{Key lemmas: establishing local expansion for list-colorings}
The hard part for us is to verify the conditions of \cref{thm:high-dimensional}, i.e., bound the local expansion of a (conditioned) list-coloring instance. To do this the following matrix will help us to concentrate on the non-trivial eigenvalues of the corresponding random walk.
\begin{defn}\label{def:matrixM}
Let  $(G,L)$ be a list-coloring instance. Let $\MM=\MM_{G,L}$ be the square matrix with indices from the set $U_{G,L}$, where the entry indexed by $(v,i),(w,k)\in U_{G,L}$ is $0$ if $v=w$, and 
\begin{equation*}
\MM\big((v,i),(w,k)\big) = 
\PP_{G,L} (\sigma_w \seq k \mid \sigma_v \seq i) - \PP_{G,L} (\sigma_w \seq k), \quad \text{if $v \neq w$}.
\end{equation*}
\end{defn}

\newcommand{\statethmspectraleig}{Let $(G,L)$ be a list-coloring instance with $G=(V,E)$ and $L=\{L(v)\}_{v\in V}$ such that $|L(v)|\geq \Delta_G(v)+2$ for all $v\in V$, and $n=|V|\geq 2$. Let $\wPc$ be the transition matrix of the simple non-lazy random walk on the weighted graph $H_{G,L}$. Then, the eigenvalues of $\MM$ are all real and $\lambda_2(\wPc)=\frac{1}{n-1}\lambda_1(\MM)$ where $\MM=\MM_{G,L}$ is the matrix from \cref{def:matrixM}.}

In \cref{sec:spectraleig}, we show that the second largest eigenvalue of $\wPc$ can be studied by focusing on the largest eigenvalue of $\MM$.
\begin{thm}\label{thm:spectraleig}
\statethmspectraleig
\end{thm}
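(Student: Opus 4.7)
The plan is to diagonalize $\wPc$ into a ``trivial'' block supported on vectors depending only on the vertex (and not on the color), plus a complementary block on which $(n-1)\wPc$ literally equals $\MM$. Let $D_\pi$ be the diagonal matrix with entries $\pi((v,i))=\PP_{G,L}(\sigma_v=i)/n$. The assumption $|L(v)|\ge\Delta_G(v)+2$ guarantees $\PP_{G,L}(\sigma_v=i)>0$ for every $(v,i)\in U_{G,L}$, and a direct computation shows that each $(v,i)$ has degree $(n-1)\PP_{G,L}(\sigma_v=i)$ in $H_{G,L}$, so $\pi$ is the stationary distribution of $\wPc$. Conjugating either of $\wPc,\MM$ by $D_\pi^{1/2}$ produces a symmetric matrix (the $(v,i),(w,k)$-entry reduces to $\PP_{G,L}(\sigma_v=i,\sigma_w=k)/\sqrt{\PP_{G,L}(\sigma_v=i)\PP_{G,L}(\sigma_w=k)}$ up to a symmetric correction in the case of $\MM$), so both $\wPc$ and $\MM$ are $D_\pi$-self-adjoint and in particular have real eigenvalues.

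Expanding $\wPc((v,i),(w,k))=\PP_{G,L}(\sigma_w=k\mid\sigma_v=i)/(n-1)$ for $v\ne w$ yields the identity
\[
(n-1)\wPc \;=\; \MM + C, \qquad C((v,i),(w,k)) = \PP_{G,L}(\sigma_w=k)\,\one[v\ne w],
\]
so the task reduces to pinning down where $C$ acts trivially. For each $v\in V$ let $\mathbf{f}_v$ be the indicator of the block $\{(v,i):i\in L(v)\}$ and set $S=\mathrm{span}\{\mathbf{f}_v:v\in V\}$, which has dimension $n$. Summing the definition of $\MM$ over $k\in L(w)$ collapses both the conditional and the unconditional marginal at $w$ to $1$, giving $\MM\mathbf{f}_v=0$; an analogous calculation gives $\wPc\mathbf{f}_v=\tfrac{1}{n-1}(\ones-\mathbf{f}_v)$. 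Hence $S\subseteq\ker\MM$, and $S$ is $\wPc$-invariant with spectrum $\{1\}\cup\{-1/(n-1)\}^{\times(n-1)}$: the eigenvalue $1$ comes from $\ones=\sum_v\mathbf{f}_v$, and the eigenvalue $-1/(n-1)$ from the $(n-1)$-dimensional hyperplane $\{\sum_v c_v\mathbf{f}_v:\sum_v c_v=0\}$, as direct substitution confirms.

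Let $W$ be the $D_\pi$-orthogonal complement of $S$, equivalently the subspace of $\mathbf{f}$ satisfying $\sum_{k\in L(v)}\PP_{G,L}(\sigma_v=k)\mathbf{f}(v,k)=0$ for every $v\in V$. For such $\mathbf{f}$, the $(v,i)$-entry of $C\mathbf{f}$ splits as a global marginal-weighted sum over $U_{G,L}$ (which vanishes because $W$ is $D_\pi$-orthogonal to $\ones=\sum_v\mathbf{f}_v$) minus the block-sum at $v$ (which vanishes by the stronger $v$-orthogonality). Hence $C|_W=0$ and $(n-1)\wPc|_W=\MM|_W$; since $\wPc$ and $\MM$ are both $D_\pi$-self-adjoint and preserve $S$, they also preserve $W$, and the two spectra on $W$ are related by the factor $1/(n-1)$. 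Putting everything together, the full spectrum of $\wPc$ consists of $1$, the eigenvalue $-1/(n-1)$ with multiplicity $n-1$, and $\tfrac{1}{n-1}$ times the eigenvalues of $\MM|_W$. Since $\tr(\MM)=0$, we have $\lambda_1(\MM)\ge 0$; and eigenvectors of $\MM$ with nonzero eigenvalue are $D_\pi$-orthogonal to $\ker\MM\supseteq S$, so they lie in $W$ and $\lambda_1(\MM|_W)=\lambda_1(\MM)$ whenever this is positive (the degenerate case $\MM=0$ gives $\lambda_2(\wPc)=0=\tfrac{1}{n-1}\lambda_1(\MM)$ directly). Therefore $\lambda_2(\wPc)=\max\bigl(-\tfrac{1}{n-1},\tfrac{1}{n-1}\lambda_1(\MM)\bigr)=\tfrac{1}{n-1}\lambda_1(\MM)$, as claimed.

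The main obstacle is the bookkeeping issue that $\ker\MM$ may strictly contain $S$, padding the spectrum of $\MM|_W$ with extra zeros; this does no harm because those extra zeros enter the spectrum of $\wPc|_W$ as $0\le\tfrac{1}{n-1}\lambda_1(\MM)$ and hence do not affect the identification of $\lambda_2$. Beyond this, the three identities underlying the whole argument, namely $\MM\mathbf{f}_v=0$, $\wPc\mathbf{f}_v=\tfrac{1}{n-1}(\ones-\mathbf{f}_v)$, and $C|_W=0$, are each one-line probability computations using that marginals sum to one.
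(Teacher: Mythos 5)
Your proof is correct and follows essentially the same route as the paper: symmetrize by $D_\pi^{1/2}$ to get real eigenvalues, identify the $n$-dimensional subspace $S$ spanned by the block indicator vectors (on which $\MM$ vanishes and $\wPc$ has the trivial spectrum $\{1\}\cup\{-\tfrac{1}{n-1}\}^{n-1}$), and show that on the complementary invariant subspace $(n-1)\wPc$ and $\MM$ literally coincide. The only difference is presentational — you work with the $D_\pi$-orthogonal decomposition $S\oplus W$ rather than the paper's explicit eigenbasis $\overline{S}$ — and you are somewhat more careful than the paper in carrying out the final bookkeeping that identifies $\lambda_2(\wPc)$ with $\tfrac{1}{n-1}\lambda_1(\MM)$, including the degenerate case $\MM=0$.
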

\cref{thm:spectraleig} follows from spectral arguments and is inspired from ideas about $d$-partite simplicial complexes  in \cite{ALO,Oppenheim1}. Then, the  core of our argument behind the proof of \cref{thm:main} is to establish the following bound on $\lambda_1(\MM)$ by studying the list-coloring distribution.

\newcommand{\statethmlambda}{Let $\eps>0$ be arbitrary, and suppose that  $(G,L)$ is a $(\Delta,q)$-list-coloring instance with $q\geq (1+\eps)\alpha^*\Delta+1$ and $G$ a triangle-free graph. Then, $\lambda_1(\MM)\leq 64 \big( \frac{1}{\eps}+1 \big)^2 \frac{\Delta}{q}$  where $\MM=\MM_{G,L}$ is the matrix from \cref{def:matrixM}.}
\begin{thm}\label{thm:lambda1}
\statethmlambda
\end{thm}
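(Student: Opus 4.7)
The plan is to bound $\lambda_1(\MM)$ via the induced $\ell_\infty$-norm of $\MM$, namely by the maximum row $\ell_1$-sum. Fix $(v,i) \in U_{G,L}$; averaging the identity $\PP(\sigma_w = k) = \sum_{j \in L(v)} \PP(\sigma_v = j)\,\PP(\sigma_w = k \mid \sigma_v = j)$ against the matrix entry and applying the triangle inequality gives
\[
\bigl|\MM((v,i),(w,k))\bigr| \;\leq\; \bigl(1 - \PP(\sigma_v = i)\bigr)\,\II[v \sra (w,k)],
\]
so the whole theorem reduces to showing that $\sum_{w \neq v,\, k}\II[v \sra (w,k)] = O\bigl((\Delta/q)\,\eps^{-2}\bigr)$, uniformly over $v$.

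The core step is a recursion for $\II[v \sra (w,k)]$ in terms of influences in the conditioned instance $(G_\tau, L_\tau)$ obtained by pinning $v$. The two distributions $\PP(\cdot \mid \sigma_v = i)$ and $\PP(\cdot \mid \sigma_v = j)$ are list-colorings of $G - v$ whose lists differ only at $N_G(v)$ (removing $i$ vs.\ $j$); interpolating neighbor-by-neighbor (legitimate because $G$ is triangle-free and so $N_G(v)$ is an independent set), I can rewrite the difference of marginals at $w$ as a telescoping sum of single-neighbor list modifications, each bounded by the marginal probability of a color at the modified neighbor times the maximum influence in the reduced instance. Using the bound $\PP(\sigma_u = c) = O(1/q)$, valid in a triangle-free $(\Delta,q)$-list-coloring with $q \geq (1+\eps)\alpha^*\Delta + 1$ (proven via a Jerrum-style averaging/local-uniformity argument that crucially uses triangle-freeness to beat the universal bound $1/(q-\Delta)$), I obtain a recursion whose per-step aggregate contraction factor, after summing over targets $(w,k)$, is at most $\Delta \cdot O(1/q) < 1$.

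The naive version of this recursion, however, leaks a factor of $\Delta$: the ``diagonal'' target $(u,i)$ with $u \in N_G(v)$ contributes influence of order $1/q$, whereas off-diagonal targets contribute only $O(1/q^2)$. This is exactly why the refined quantity $\hat\II_L[v \sra (w,k)]$ (maximum influence restricted to $i,j \neq k$) is introduced. I will run a \emph{coupled} induction on the pair $\bigl(\II[v \sra (\cdot,\cdot)],\, \hat\II_L[v \sra (\cdot,\cdot)]\bigr)$: the off-diagonal contributions in the recursion for $\II$ are fed by $\hat\II_L$ at the neighbors (which is genuinely of order $1/q^2$, not $1/q$), while the diagonal contribution becomes the explicit $O(\Delta/q)$ term in the final bound. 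Summing the two resulting geometric series --- each with ratio $1 - \Omega(\eps)$ coming from the marginal bound --- yields the $(1/\eps+1)^2$ factor in the statement, and tracking constants carefully gives the $64$ prefactor.

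The main obstacle I expect is the coupled induction itself: setting up the correct combinatorial weights for both recursions simultaneously, and verifying that off-diagonal targets at one recursive level remain off-diagonal after unfolding, which requires care because pinning $v$ to $i$ deletes $i$ from the lists at $N_G(v)$ and thus changes the set of ``diagonal'' indices downstream. A secondary technical issue is obtaining the marginal bound $\PP(\sigma_u = c) = O(1/q)$ with the right $\eps$-slackness uniformly in $\Delta$ and $q$, since it is precisely this uniformity, combined with the coupled recursion, that allows the exponent in the final mixing time to be independent of $\Delta$ and $q$.
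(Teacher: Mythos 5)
Your opening reduction has a genuine gap. The inequality
\[
\bigl|\MM\big((v,i),(w,k)\big)\bigr| \;\le\; \bigl(1-\PP(\sigma_v\seq i)\bigr)\,\II_{G,L}[v\sra(w,k)]
\]
is correct, but the claim that the theorem therefore ``reduces to showing $\sum_{w\neq v,\,k}\II_{G,L}[v\sra(w,k)]=O\big((\Delta/q)\eps^{-2}\big)$'' is false: that sum is $\Theta(\Delta)$, not $O(\Delta/q)$. This is exactly the content of \cref{eg:star}: on a star centered at $v$, $\II_{G,L}[v\sra(w,k)]=\tfrac{1}{q-1}$ for \emph{every} target $(w,k)$ (because the max over $i,j\in L(v)$ is allowed to take $j=k$), giving $\sum_{w,k}\II = \tfrac{q\Delta}{q-1}\approx \Delta$. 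So the factor of $q$ you are trying to save is lost already in this reduction, before your recursion or your $\hat\II$ quantity ever enters; moving $\hat\II$ inside the recursion cannot recover it.

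What is missing is the top-level decomposition that the paper carries out in \cref{lem:M<=I+hatI}. Split the row sum into $k\neq i$ and $k=i$. For $k\neq i$ the relevant quantity is not $\II[v\sra(w,k)]$ but $\hat{\JJ}[v\sra(w,k)]=\max_{i'\in L(v)\bs\{k\}}|\PP(\sigma_w\seq k\mid\sigma_v\seq i')-\PP(\sigma_w\seq k)|$, which admits the bound $(1-P(v))\,\hat\II+P(v)\,\II$ where $P(v)=\max_c\PP(\sigma_v\seq c)=O(1/q)$; the $k=i$ column entry is then controlled by the $k\neq i$ row because the row sums of $\MM$ restricted to a fixed $w$ vanish. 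This gives $\sum_{w,k}|\MM|\le 2\Delta_G(v)\big(\hat\II^*(v)+P(v)\II^*(v)\big)$, and now it suffices that $\II^*(v)=O(1)$ (your ``naive'' recursion already delivers this) and $\hat\II^*(v)=O(1/q)$ (a second recursion in which $\hat\II^*$ feeds on \emph{both} $\hat\II^*$ and $\II^*$ at the neighbors). In short: you have all the right pieces---the row-sum bound, the telescoping recursion, the marginal ratio bound, and the biased influence---but the diagonal/off-diagonal split must be performed on the matrix entry itself, not inside the unfolded recursion, and the correct targets are $\II^*=O(1)$ and $\hat\II^*=O(1/q)$, not $\II^*=O(\Delta/q)$. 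Two smaller imprecisions worth flagging: triangle-freeness is not needed for the telescoping step (it is needed for the marginal bound, where one conditions outside $\{u\}\cup N(u)$ to reduce to a star); and the per-step contraction constant used in the recursions is $R(u)\Delta_{G_v}(u)\le 1/(1+\eps)$, not $\Delta\cdot O(1/q)$---the latter need not even be below 1 when $\alpha$ is close to $\alpha^*$.
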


\subsection{Combining the pieces: proof of \texorpdfstring{\cref{thm:main}}{Theorem 1}}
\label{sec:proofmain}
Assuming \cref{thm:high-dimensional,thm:spectraleig,thm:lambda1}, we can complete here the proof of~\cref{thm:main}.
\begin{thmmain}
\statethmmain
\end{thmmain}
\begin{proof}
We may assume that $\alpha < 2$, otherwise the result follows from \cite{Jerrum}. Let $\eps>0$ be such that $\alpha = (1+\eps)\alpha^*$. We will show the result with $c=80C_\alpha^2$ where $C_{\alpha}=\frac{64}{\alpha} \big( \frac{1}{\eps}+1 \big)^2$. 
Suppose that $G$ is an $n$-vertex triangle-free graph with maximum degree $\Delta$, and $q\geq \alpha \Delta+1$. Again, from the result of \cite{Jerrum} we may assume that $q \le 2 \Delta$.   If $n=1$ the result is immediate, so assume $n\geq 2$ in what follows. Let $C=64 \big( \frac{1}{\eps}+1 \big)^2 \frac{\Delta}{q}$ be the bound from \cref{thm:lambda1}, and note that $1<C\le C_{\alpha}$.

Consider the list-coloring instance $(G,L)$ where $L(v)=[q]$ for each $v\in V$. Then, Glauber dynamics with $q$ colors on $G$ is the same as Glauber dynamics on $(G,L)$, so it suffices to bound the mixing time of the latter.  We will show that \cref{thm:high-dimensional} applies with $\ell_s=\min\{\frac{C}{n-1-s},1-2(1/q^4)^{n-s}\}$ for each $s\in \{0,1,\hdots,n-2\}$. Indeed, let $\tau$ be an arbitrary partial list-coloring on $S\subseteq V$ with $|S|=s$ for some $s\in \{0,1,\hdots,n-2\}$ and consider the conditioned instance $(G_\tau,L_\tau)$ with $G_\tau=(V_\tau,E_\tau)$. Then, for every vertex $v\in V_{\tau}$ we have that $|L_\tau(v)|\geq q-\Delta+\Delta_{G_\tau}(v)$ since the conditioning on $\tau$ disallows at most $\Delta-\Delta_{G_\tau}(v)$ colors from $v$, and hence $(G_\tau,L_\tau)$ is a $(\Delta,q)$-list-coloring instance. Therefore, by~\cref{thm:spectraleig} and~\cref{thm:lambda1} applied to $(G_\tau,L_\tau)$, we obtain that $(G_\tau,L_\tau)$ has local expansion bounded by $\frac{C}{n-1-s}$. The local expansion is also bounded by $1-2(1/q^4)^{n-s}$ using  conductance arguments.\footnote{For any reversible Markov chain with transition matrix $\Pc$, it holds that $1-\lambda_2\geq \Phi^{2}/2$, where $\Phi$ is the conductance of the chain, see, e.g., \cite[Theorem 13.14]{LP}. In the proof of~\cref{thm:lambda1}, it is shown that the stationary distribution of the random walk on $H_\tau$ is given by $\{\frac{1}{n-s-1}\PP_{G_\tau,L_\tau}(\sigma_v \seq k)\}_{(v,k)\in U_{G_\tau,L_\tau}}$,  whose entries are crudely lower-bounded by $1/q^{2(n-s)}$, see~\cref{foot:ergod}. This in turn yields the desired bound on the local expansion of $(G_\tau,L_\tau)$.}  This verifies the assumptions of \cref{thm:high-dimensional}, so it follows that the mixing time of the Glauber dynamics on $G$ is at most $Ln^2\ln(4q)$, where 
$L=\prod^{n-2}_{s=0} (1-\ell_s)^{-1}$. Let $k_0=\left\lceil 2C\right\rceil\leq 3C_\alpha$, then we have that 
\[
L \le 
\left( \frac{q^{4k_0}}{2} \right)^{k_0-1} \cdot  
\prod^{n-1-k_0}_{s=0} \left(1-\frac{C}{n-s-1}\right)^{-1} 
\le q^{4k_0^2} \cdot n^{2C} 
\le n^{74 C_\alpha^2},
\] 
since $-\sum^{n-1}_{i=k_0} \ln(1-\frac{C}{i})\leq 2C\sum^{n-1}_{i=k_0} \frac{1}{i}\leq 2C \ln n$ and $q \le 2\Delta \le n^2$.  

Using the bound on $L$, \cref{thm:high-dimensional} yields that $\Tmix\leq n^{c}$ with $c=80C_\alpha^2$, finishing the proof.
\end{proof}

\noindent \textbf{Organisation of the rest of the paper.} \cref{sec:slower,sec:better-bound} are devoted to the proof of the key \cref{thm:lambda1}, and~\cref{sec:recboundmarg} finishes off a couple of left-over technical lemmas used in the proof. In \cref{sec:aux}, we give the details of the spectral independence approach for colorings and prove \cref{thm:high-dimensional,thm:spectraleig}. 

In our proofs henceforth, it will be convenient to define the following slightly more accurate  form of the region of $(\Delta,q)$ where our results apply to. 
\begin{defn}[Parameter Region $\Lambda_\eps$]
Let $\alpha^* \approx 1.763$ denote the solution to $\exp(1/x)=x$. 
For $\eps > 0$, define $\Lambda_\eps = \left\{ (\Delta,q) \in \N^2 \mid \Delta \ge 3, \; q \ge \alpha\Delta+\beta \right\}$ where $\alpha = (1+\eps) \alpha^*$ and $\beta = 2-\alpha + \frac{\alpha}{2(\alpha^2-1)} < 0.655$. 
\end{defn}

\section{Simpler proof of a slower mixing result}
\label{sec:slower}
Let $(G,L)$ be a $(\Delta,q)$-list-coloring instance as in \cref{thm:lambda1}, our goal is to bound the spectral radius of the matrix $\MM_{G,L}$ from \cref{def:matrixM}. In this section, we will prove a weaker result than the one in \cref{thm:lambda1} which already contains some of the key ideas and will motivate our refinement in \cref{sec:better-bound}. 

In particular, we will show that for $\alpha>\alpha^*$ there exists a constant $C=C(\alpha)$ such that whenever $q\geq \alpha\Delta+1$ it holds that $\lambda_1(\MM_{G,L})\leq C\Delta$. Note the dependence on $\Delta$ of this bound, in contrast to that of \cref{thm:lambda1}; mimicking the proof of \cref{thm:main} given earlier would give a mixing time bound of $O(n^{C'\Delta})$ for the Glauber dynamics for some constant $C'=C'(\alpha)>0$, which is much weaker than what \cref{thm:main} asserts. Nevertheless, we will introduce several of the relevant quantities/lemmas that will also be relevant in the more involved argument of \cref{sec:better-bound}.

 It is well-known that, for any square matrix the spectral radius is bounded by the maximum of the $L_1$-norms of the rows. In our setting, the (weaker) bound on $\lambda_1(\MM_{G,L})$ will therefore be obtained by showing that, for an arbitrary vertex $v$ of $G$ and a  color $i\in L(v)$, it holds that
 \footnote{\label{foot:ext}
 Henceforth, it will be convenient to extend  $\MM_{G,L}$ by setting $\MM_{G,L}((v,i),(w,k))=0$ when $k \notin L(w)$ or $i \notin L(v)$.}
\begin{equation}\label{eq:suma}
\sum_{w \in V \bs \{v\}} \sum_{k\in [q]} \big| \MM_{G,L} \big((v,i),(w,k)\big) \big| \leq 4\left( \frac{1}{\eps}+1 \right)\Delta.
\end{equation}
To bound the sum in \eqref{eq:suma}, we introduce the \emph{maximum influence}, which describes the maximum difference of the marginal probability of $\sigma_w = k$ under all color choices of~$v$. 

\begin{defn}[Maximum Influences]
Let $(G,L)$ be a $(\Delta,q)$-list-coloring instance. Let $v,w$ be two vertices of $G$,  and $k\in [q]$. The maximum influence of $v$ on $(w,k)$ is defined to be 
\begin{equation*}
\II_{G,L}[v \sra (w,k)] 
= \max_{i,j\in L(v)} 
\big| \PP_{G,L} (\sigma_w \seq k \mid \sigma_v \seq i) 
- 
\PP_{G,L} (\sigma_w \seq k \mid \sigma_v \seq j) \big|. 
\end{equation*}
\end{defn}

\begin{obs}\label{obs:bound}
$|\MM_{G,L}((v,i),(w,k))| \le \II_{G,L}[v \sra (w,k)]$ for all distinct $v,w \in V$, $i\in L(v)$, and $k\in [q]$.
\end{obs}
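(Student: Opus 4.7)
The proof is a direct application of the law of total probability combined with the triangle inequality, so the plan is quite short.

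First I would note that under the standing assumption $|L(v)| \ge \Delta_G(v)+2$ (which guarantees $\PP_{G,L}(\sigma_v = j) > 0$ for every $j \in L(v)$, cf.~\cref{foot:ergod}), one can condition on the color at $v$. Expanding $\PP_{G,L}(\sigma_w = k)$ via the law of total probability with respect to the color at $v$ gives the identity
\[
\PP_{G,L}(\sigma_w = k) = \sum_{j \in L(v)} \PP_{G,L}(\sigma_v = j) \, \PP_{G,L}(\sigma_w = k \mid \sigma_v = j).
\]
Since $\sum_{j\in L(v)} \PP_{G,L}(\sigma_v = j) = 1$, this lets me rewrite the matrix entry as a weighted average of differences:
\[
\MM_{G,L}\big((v,i),(w,k)\big) = \sum_{j \in L(v)} \PP_{G,L}(\sigma_v = j) \Big[ \PP_{G,L}(\sigma_w = k \mid \sigma_v = i) - \PP_{G,L}(\sigma_w = k \mid \sigma_v = j) \Big].
\]

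Then I would apply the triangle inequality and bound each difference in absolute value by its maximum over $j\in L(v)$, which is exactly $\II_{G,L}[v \sra (w,k)]$. Since the weights $\PP_{G,L}(\sigma_v=j)$ sum to one, this yields $|\MM_{G,L}((v,i),(w,k))| \le \II_{G,L}[v \sra (w,k)]$, as required.

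Finally, I would address the edge case $i \notin L(v)$: by the extension convention of~\cref{foot:ext}, the matrix entry is zero and the inequality is trivial. There is no real obstacle here; the whole observation is that the marginal is a convex combination of the conditional marginals, and the spread of a convex combination around any one of its terms is bounded by the diameter of the support.
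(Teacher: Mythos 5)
Your proof is correct and follows essentially the same route as the paper: rewrite $\PP(\sigma_w = k)$ via total probability over the color at $v$, obtaining a convex combination of conditional-probability differences, and bound each term by the maximum. One small note: the edge case you flag, $i \notin L(v)$, is excluded by the hypothesis of the observation; the relevant boundary case is $k \notin L(w)$, which the paper treats separately but which your total-probability identity also handles automatically (both sides are zero).
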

\begin{proof}
If $k\notin L(w)$, then $\MM_{G,L}((v,i),(w,k))=\II_{G,L}[v \sra (w,k)]=0$. For $k\in L(w)$, since $v\neq w$, we have $\MM_{G,L}((v,i),(w,k))=\PP(\sigma_w \seq k \mid \sigma_v \seq i) - \PP(\sigma_w \seq k)$ and so the law of total probability gives
\[\MM_{G,L}((v,i),(w,k))=\sum_{j\in L(v)}\big(\PP(\sigma_w \seq k \mid \sigma_v \seq i)-\PP(\sigma_w \seq k\mid \sigma_v \seq j)\big)\PP(\sigma_v \seq j),\]
from where the desired inequality follows.
\end{proof}

 Hence, to bound the sum in \eqref{eq:suma}, it suffices to bound the sum $\sum_{w \in V \bs \{v\}} \sum_{k\in [q]}\II_{G,L}[v \sra (w,k)]$ instead. Our ultimate goal is to write a recursion for this latter sum, bounding by an analogous sum for the neighbors of $v$ (in the graph where $v$ is deleted). To get on the right track, we start by writing a recursion for influences.

\subsection{A recursive approach to bound influences}
\label{sec:recursion}

In this section, we derive a recursion on influences. Recall that a list-coloring instance is a pair $(G,L)$ where $G=(V,E)$ is a graph, $L=\{L(v)\}_{v\in V}$ prescribes a list $L(v)$ of available colors for each $v\in V$, and the vertices of $G$ are ordered by some relation $<$.

\begin{defn}
Let $(G,L)$ be a list-coloring instance with $G=(V,E)$ and $L=\{L(v)\}_{v\in V}$. 

Let $v\in V$. For $u\in N_G(v)$ and colors $i,j\in L(v)$ with $i\neq j$, we denote by $(G_v, L_u^{ij})$ the list-coloring instance with $G_v=G\bs v$ and lists $L_u^{ij} = \{L_u^{ij}(w)\}_{w\in V\bs \{v\}}$ obtained from $L$ by:
\begin{itemize}[itemsep=-0.1cm,topsep=0.05cm]
\item removing the color $i$ from the lists $L(u')$ for $u'\in N_G(v)$ with $u' < u$, 
\item removing the color $j$ from the lists $L(u')$ for $u' \in N_G(v)$ with $u' > u$, and  
\item keeping the remaining lists unchanged.
\end{itemize}
\end{defn}
The following lemma will be crucial in our recursive approach to bound influences, and follows  by adapting suitably ideas from \cite{GKM}.
\newcommand{\statelemrecinf}{Let $(G,L)$ be a $(\Delta,q)$-list-coloring instance with $G=(V,E)$ and $L=\{L(v)\}_{v\in V}$. Then, for $v\in V$ and arbitrary colors $i,j \in L(v)$ with $i\neq j$, for all $w\in V\bs \{v\}$ and $k \in [q]$, we have 
\begin{align*}
\PP(\sigma_w \seq k \mid \sigma_v \seq i) 
&- 
\PP(\sigma_w \seq k \mid \sigma_v \seq j)=\\ 
&\sum_{u \in N_G(v)} 
\frac{\PP^{ij}_{u}(\sigma_u \seq j)}{\PP^{ij}_{u}(\sigma_u \sneq j)} \cdot \MM^{ij}_{u}\big((u,j), (w,k)\big)-\frac{\PP^{ij}_{u}(\sigma_u \seq i)}{\PP^{ij}_{u}(\sigma_u \sneq i)} \cdot \MM^{ij}_{u}\big((u,i), (w,k)\big),
\end{align*}
where $\PP:=\PP_{G,L}$ and, for $u\in N_G(v)$, $\PP^{ij}_{u}:=\PP_{G_v,L_u^{ij}}$ and $\MM^{ij}_{u}:=\MM_{G_v,L_u^{ij}}$.
}
\begin{lem}\label{lem:rec-inf}
\statelemrecinf
\end{lem}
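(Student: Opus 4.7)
The plan is a telescoping argument along the ordered neighbors of $v$. Let $d = \Delta_G(v)$ and enumerate the neighbors of $v$ in the prescribed order as $u_1 < u_2 < \cdots < u_d$. For each $t \in \{0,1,\dots,d\}$ I define a hybrid list-coloring instance $M_t$ on $G_v = G \bs v$ by removing color $i$ from $L(u_s)$ for every $s \le t$ and removing color $j$ from $L(u_s)$ for every $s > t$, leaving all other lists unchanged. The key observation is that $M_0$ coincides with the instance obtained by conditioning $(G,L)$ on $\sigma_v \seq j$ (since then $j$ is forbidden at every neighbor of $v$), while $M_d$ coincides with the instance obtained by conditioning on $\sigma_v \seq i$. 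Hence the left-hand side of the lemma equals the telescoping sum
\begin{equation*}
\sum_{t=1}^{d}\Big[\PP_{M_t}(\sigma_w \seq k) - \PP_{M_{t-1}}(\sigma_w \seq k)\Big],
\end{equation*}
and it remains to identify each summand with the term indexed by $u = u_t$ in the claimed formula.

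Next I relate the consecutive pair $(M_{t-1}, M_t)$ to the common parent instance $(G_v, L_{u_t}^{ij})$. By the very definition of $L_{u_t}^{ij}$, the only difference between the parent and $M_t$ is that in the parent both colors $i$ and $j$ remain available at $u_t$; conditioning the parent on $\sigma_{u_t} \sneq i$ recovers $M_t$, and conditioning on $\sigma_{u_t} \sneq j$ recovers $M_{t-1}$. The law of total probability, applied to the parent instance and rearranged, gives for each $c \in \{i,j\}$
\begin{equation*}
\PP^{ij}_{u_t}(\sigma_w \seq k \mid \sigma_{u_t} \sneq c) = \PP^{ij}_{u_t}(\sigma_w \seq k) - \frac{\PP^{ij}_{u_t}(\sigma_{u_t} \seq c)}{\PP^{ij}_{u_t}(\sigma_{u_t} \sneq c)} \cdot \MM^{ij}_{u_t}\big((u_t,c), (w,k)\big).
\end{equation*}
Subtracting this identity for $c = i$ from the one for $c = j$ cancels the baseline $\PP^{ij}_{u_t}(\sigma_w \seq k)$ and produces precisely the term indexed by $u_t$ in the claim; summing over $t$ then gives the stated recursion.

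The only technical point to verify is that every conditional probability above is well-defined, i.e., that the intermediate instances $M_t$ and the parents $(G_v, L_{u_t}^{ij})$ are satisfiable and that both events $\{\sigma_{u_t} \seq i\}$ and $\{\sigma_{u_t} \seq j\}$ carry positive mass in the parent distribution. This follows from the $(\Delta,q)$-list-coloring hypothesis: each neighbor $u$ of $v$ satisfies $|L(u)| \ge \Delta_G(u) + 2$, and after deleting $v$ and removing at most one color from $L(u)$ the inequality $|L_{u_t}^{ij}(u)| \ge \Delta_{G_v}(u) + 2$ is preserved, so by footnote~\ref{foot:ergod} every color in the restricted list at $u_t$ is realized with positive probability. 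I do not foresee a serious obstacle: the lemma is at heart a careful telescoping combined with a one-line use of the law of total probability, and the main care required is bookkeeping to confirm that the parameters of $L_{u_t}^{ij}$ align correctly with the hybrids $M_{t-1}$ and $M_t$.
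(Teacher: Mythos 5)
Your proof is correct and takes essentially the same approach as the paper: telescoping along the ordered neighbors of $v$ via hybrid instances, then identifying each increment with the $u$-term via the law of total probability applied to the parent instance $(G_v, L_{u_t}^{ij})$. Making the hybrid instances $M_t$ explicit is a nice clarification of what the paper does implicitly; the only small point worth adding is that when $i$ or $j$ is absent from $L(u_t)$ the corresponding $\MM$-entry and probability ratio are both zero (per the convention of \cref{foot:ext}), so the identity degenerates harmlessly rather than requiring both events $\{\sigma_{u_t}\seq i\}$ and $\{\sigma_{u_t}\seq j\}$ to have positive mass.
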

Recall that we set $\MM^{ij}_{u}\big((u,c), (w,k)\big) = 0$ for $c \notin L_u^{ij}(u)$ (see \cref{foot:ext}). 
To apply~\cref{lem:rec-inf} recursively, it will be helpful to consider multiple list-coloring instances on the same graph $G$. 
For a collection of lists $\LL=\{L_1,\hdots, L_t\}$, where each $L\in \LL$ is a set of lists of all vertices for $G$, we use $(G,\LL)$ to denote the collection of $|\LL|$ list-coloring  instances $\{(G,L_1),\hdots, (G,L_t)\}$. 
When considering the pair $(G,\LL)$ or $(G,L)$, we usually omit the graph $G$ when it is clear from the context. 

\begin{defn}\label{def:GvLv}
Let $(G,\LL)$ be a collection of list-colorings instances with $G=(V,E)$ and a collection of lists $\LL$ on $G$. For $v\in V$, we define $\LL_v$ to be the collection of lists for $G_v=G\bs v$ obtained from $\LL$ by setting
\[\LL_v = \big\{L^{ij}_u\mid L \in \LL, u\in N_G(v), i,j \in L(v) \mbox{ with } i\neq j\big\}. \]
Note that $(G_v,\LL_v)$ consists of $|\LL_v| = \sum_{L \in \LL} \Delta_G(v) \cdot |L(v)| \cdot (|L(v)|-1)$ list-coloring instances. 
\end{defn}

\begin{lem}\label{claim:induced}
If $(G,\LL)$ is a collection of $(\Delta,q)$-list-coloring instances, then for every vertex $v$ of~$G$, $(G_v,\LL_v)$ is also a collection of $(\Delta,q)$-list-coloring instances.
\end{lem}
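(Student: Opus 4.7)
The plan is to verify directly that each list-coloring instance in $(G_v,\LL_v)$ satisfies the two defining properties of a $(\Delta,q)$-list-coloring instance: the underlying graph has maximum degree at most $\Delta$, and every list $L'(w)$ is a subset of $[q]$ of size at least $q-\Delta+\Delta_{G_v}(w)$. The first property is immediate since deleting the vertex $v$ from $G$ cannot increase any degree, so $G_v$ has maximum degree at most $\Delta$. For the second property, fix an arbitrary $L \in \LL$, $u \in N_G(v)$, and distinct colors $i,j \in L(v)$, and consider the resulting instance $(G_v,L')$ where $L' = L_u^{ij}$.

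The key observation is a simple case analysis on the relation between $w$ and $v$ in $G$, exploiting the fact that a neighbor of $v$ loses exactly one unit of degree in $G_v$, which compensates for losing at most one color from its list. Concretely, if $w \notin N_G(v)$, then $L'(w)=L(w)$ and $\Delta_{G_v}(w)=\Delta_G(w)$, so the bound $|L'(w)|\geq q-\Delta+\Delta_{G_v}(w)$ follows immediately from the corresponding bound for $(G,L)$. If $w \in N_G(v)$, then $\Delta_{G_v}(w)=\Delta_G(w)-1$, and by inspection of the construction of $L_u^{ij}$, the list $L'(w)$ is obtained from $L(w)$ by deleting at most one color (either $i$, $j$, or none in the case $w=u$). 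Hence
\[
|L'(w)| \;\geq\; |L(w)|-1 \;\geq\; (q-\Delta+\Delta_G(w))-1 \;=\; q-\Delta+\Delta_{G_v}(w),
\]
as required. The inclusion $L'(w)\subseteq L(w)\subseteq [q]$ is also clear from the construction.

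Since the above holds for every $L \in \LL$, every $u \in N_G(v)$, and every pair $i \neq j$ in $L(v)$, each member of $\LL_v$ yields a $(\Delta,q)$-list-coloring instance on $G_v$, finishing the proof. There is no real obstacle here; the only subtle point is noticing that the ``one color removed per neighbor'' in the definition of $L_u^{ij}$ is precisely absorbed by the ``one degree lost'' in moving from $G$ to $G_v$, which is exactly why the parameter pair $(\Delta,q)$ is preserved under this recursive operation.
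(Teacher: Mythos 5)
Your proof is correct and follows essentially the same case analysis as the paper's proof: for $w \notin N_G(v)$ both list size and degree are unchanged, while for $w \in N_G(v)$ the list loses at most one color and the degree drops by exactly one, so the defining inequality $|L'(w)| \ge q-\Delta+\Delta_{G_v}(w)$ is preserved. You also spell out the (trivial) preservation of max degree and the inclusion $L'(w)\subseteq[q]$, which the paper leaves implicit, but the argument is the same.
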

\begin{proof}
Let $L_v \in \LL_v$ be arbitrary, so that $L_v$ is obtained from some $L\in \LL$. Then, by definition, for $u \notin N_G(v)$ we have $|L_v(u)| = |L(u)|$ and $\Delta_{G\bs v}(u) = \Delta_G(u)$, while for $u \in N_G(v)$ we have $|L_v(u)| \ge |L(u)|-1$ and $\Delta_{G_v}(u) = \Delta_G(u)-1$. 
This implies that $\LL_v$ is $(\Delta,q)$-induced.  
\end{proof}

\subsection{Aggregating influences}
\label{sec:simpler-bound}
\begin{defn}\label{def:maxinf}
Let $(G,\LL)$ be a collection of $(\Delta,q)$-list-coloring instances with $G=(V,E)$. Fix a vertex $v\in V$ and let $w\in V\backslash\{v\}$, $k\in [q]$. The maximum influence of $v$ on $(w,k)$ with respect to $(G,\LL)$ is defined to be
\begin{equation*}
\II_{G,\LL}[v \sra (w,k)] 
= \max_{L\in \LL} \;\II_{G,L}[v \sra (w,k)]. 
\end{equation*}
The \emph{total} maximum influence of $v$ with respect to $(G,\LL)$ is defined to be 0 if  $\Delta_G(v) = 0$, and
\begin{equation*}
\II^*_{G,\LL}(v) = \frac{1}{\Delta_G(v)} \sum_{w\in V \bs \{v\}} \sum_{k\in [q]} \II_{G,\LL}[v \sra (w,k)]\quad \mbox{ if $\Delta_G(v)\geq1$}.
\end{equation*}
\end{defn}

The following lemma gives a recursive bound on the total maximum influence. 
\begin{lem}\label{lem:inf-rec-bound}
Let $(G,\LL)$ be a collection of list-coloring instances and $v$ be a vertex of $G$ with $\Delta_G(v)\geq 1$. Then, with $G_v,\LL_v$ as in~\cref{def:GvLv},
\[
\II^*_{G,\LL}(v) \le 
\max_{u\in N_G(v)} \Big\{ R_{G_v,\LL_v}(u)  \big( \Delta_{G_v}(u) \cdot \II^*_{G_v,\LL_v}(u) + q \big) \Big\},
\]
where $R_{G_v,\LL_v}(u) = \max_{L \in \LL_v} \max_{c\in L(u)} 
\frac{\PP_{G_v,L}(\sigma_u \seq c)}{\PP_{G_v,L}(\sigma_u \sneq c)}$ for $u\in N_G(v)$.
\end{lem}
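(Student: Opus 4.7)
The approach is to apply \cref{lem:rec-inf} to each summand of $\II^*_{G,\LL}(v)$, use \cref{obs:bound} to convert absolute values of $\MM$-entries into inner maximum influences on $G_v$, and then aggregate over $w$ and $k$. Concretely, I would first fix an arbitrary $L\in\LL$ and a pair $(w,k)$ with $w\in V\setminus\{v\}$ and $k\in[q]$, and choose $i^*=i^*_{w,k}$, $j^*=j^*_{w,k}\in L(v)$ that achieve the maximum in the definition of $\II_{G,L}[v\sra(w,k)]$. Applying \cref{lem:rec-inf} with these colors and taking absolute values yields
\[
\II_{G,L}[v\sra(w,k)] \le \sum_{u\in N_G(v)}\!\Big(\tfrac{\PP^{i^*j^*}_u(\sigma_u\seq j^*)}{\PP^{i^*j^*}_u(\sigma_u\sneq j^*)}\big|\MM^{i^*j^*}_u((u,j^*),(w,k))\big|+\tfrac{\PP^{i^*j^*}_u(\sigma_u\seq i^*)}{\PP^{i^*j^*}_u(\sigma_u\sneq i^*)}\big|\MM^{i^*j^*}_u((u,i^*),(w,k))\big|\Big).
\]
Since $L^{i^*j^*}_u\in\LL_v$, the probability ratios are each bounded by $R_{G_v,\LL_v}(u)$, and by \cref{obs:bound} each $|\MM^{i^*j^*}_u((u,c),(w,k))|$ is bounded by $\II_{G_v,\LL_v}[u\sra(w,k)]$. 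A useful observation is that when $u=w$ both $\MM$-entries in the bracket vanish due to the zero-diagonal convention in \cref{def:matrixM}, so effectively only $u\neq w$ contributes on the right-hand side.

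Next, I would take the maximum over $L\in\LL$ on both sides (which upgrades the left-hand side to $\II_{G,\LL}[v\sra(w,k)]$), then sum over $w\in V\setminus\{v\}$ and $k\in[q]$, swapping the order of summation to group by $u\in N_G(v)$. For each such $u$, the inner sum equals
\[
R_{G_v,\LL_v}(u)\cdot\!\!\sum_{w\in V\setminus\{v,u\}}\sum_{k\in[q]}\II_{G_v,\LL_v}[u\sra(w,k)]\;=\;R_{G_v,\LL_v}(u)\cdot\Delta_{G_v}(u)\cdot\II^*_{G_v,\LL_v}(u),
\]
by the definition of $\II^*_{G_v,\LL_v}(u)$. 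One may then enlarge this by the harmless additive term $R_{G_v,\LL_v}(u)\cdot q$, which upper-bounds the omitted ``diagonal'' contribution $\sum_{k}\II_{G_v,\LL_v}[u\sra(u,k)]\le q$ (since each such influence is trivially at most one and $L(u)\subseteq[q]$). Finally, dividing by $\Delta_G(v)$ converts the sum over $u\in N_G(v)$ into an average, which is bounded by its maximum, giving the stated recursion.

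The main technical obstacle is controlling the two-term combination $\tfrac{p_{j^*}}{1-p_{j^*}}M_{j^*}-\tfrac{p_{i^*}}{1-p_{i^*}}M_{i^*}$ from \cref{lem:rec-inf} tightly: a naive triangle inequality doubles the constant in front of $\max_u$, so producing the bound exactly as stated requires a sharper estimate, for instance using the identity $\sum_c p_c M_c=0$ implied by the law of total probability, to avoid adding the contributions $|M_{i^*}|$ and $|M_{j^*}|$ independently. Apart from this step, the argument is essentially bookkeeping built on \cref{lem:rec-inf} and \cref{obs:bound} together with the definitions of $R_{G_v,\LL_v}(u)$, $\II^*$, and $\LL_v$.
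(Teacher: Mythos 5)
Your overall skeleton matches the paper's: apply \cref{lem:rec-inf}, pass to a per-$(w,k)$ recursion, then aggregate by swapping sums and peeling off the $w=u$ terms. But as you yourself flag in the final paragraph, the proof you actually wrote down does not deliver the stated constant: bounding the two terms from \cref{lem:rec-inf} separately with the triangle inequality and \cref{obs:bound} gives $\II_{G,L}[v\sra(w,k)]\le\sum_u 2R_{G_v,\LL_v}(u)\,\II_{G_v,\LL_v}[u\sra(w,k)]$, and the factor of $2$ propagates to a final bound of $\max_u\{2R(u)(\Delta_{G_v}(u)\II^*_v(u)+q)\}$, which is not the lemma. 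You correctly identify that the fix is the total-probability identity $\sum_{c\in L_u^{ij}(u)}\PP^{ij}_u(\sigma_u\seq c)\,\MM^{ij}_u((u,c),(w,k))=0$, but you stop at naming it; the actual argument requires the consequences: (a) $m^{ij}_u:=\min_{c}\MM^{ij}_u((u,c),(w,k))\le 0\le M^{ij}_u:=\max_c\MM^{ij}_u((u,c),(w,k))$, so the two summands in \cref{lem:rec-inf} (one positive contribution $\le R_v(u)M^{ij}_u$, one negative contribution $\ge R_v(u)m^{ij}_u$) combine to $\le R_v(u)(M^{ij}_u-m^{ij}_u)$ rather than $\le R_v(u)(M^{ij}_u+|m^{ij}_u|)+R_v(u)(M^{ij}_u+|m^{ij}_u|)$; and (b) $M^{ij}_u-m^{ij}_u=\II^{ij}_u[u\sra(w,k)]$, which is a \emph{different} use of the two-sided maximum than \cref{obs:bound} gives. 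This is precisely the estimate the paper carries out; without it the proof is incomplete.

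A smaller inconsistency: you first observe that the $u=w$ terms vanish because the relevant $\MM^{ij}_u$-entries are zero on the diagonal, and then you reintroduce the additive term $R_{G_v,\LL_v}(u)\cdot q$ to cover the diagonal contribution anyway. Either route is fine (the paper takes the slightly looser one, bounding $\sum_k\II_{G_v,\LL_v}[u\sra(u,k)]\le q$ since $\II_{G_v,\LL_v}[u\sra(u,k)]=1$ for $k\in L(u)$), but you should pick one and be clear which step produces the $+q$ in the stated bound; as written it reads as if you both discard and then re-add the same contribution.
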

\begin{proof}
Suppose that $G=(V,E)$. For convenience, we will drop the subscripts $G,\LL$ from influences  and use the subscript $v$ as a shorthand for the subscripts $G_v,\LL_v$ of influences and the quantity $R$. We will soon show that for every  $w\in V\bs \{v\}$ and color $k \in [q]$, we have 
\begin{equation}\label{eq:vwk2434}
\II[v \sra (w,k)] \le 
\sum_{u\in N_G(v)} R_{v}(u) \cdot \II_{v}[u \sra (w,k)]. 
\end{equation}
Assuming \eqref{eq:vwk2434} for the moment, we have that
\begin{align*}
\II^*(v) 
&= \frac{1}{\Delta_G(v)} \sum_{w\in V \bs \{v\}}\sum_{k\in [q]}  \II[v \sra (w,k)]
\le \frac{1}{\Delta_G(v)} \sum_{w\in V \bs \{v\}} \sum_{k\in [q]} \sum_{u\in N_G(v)} R_{v}(u) \cdot \II_{v} [u \sra (w,k)]\\ 
&= \frac{1}{\Delta_G(v)} \sum_{u\in N_G(v)} R_{v}(u) \cdot 
\bigg(
\sum_{w\in V \bs \{v,u\}} \sum_{k\in [q]} \II_{v} [u \sra (w,k)] 
+ 
\sum_{k\in [q]} \II_{v} [u \sra (u,k)]
\bigg)\\
&\le \max_{u\in N_G(v)} \Big\{ R_{v}(u)  \big( \Delta_{G_v}(u) \cdot \II^*_{v}(u) + q \big) \Big\}, 
\end{align*}
which is precisely the desired inequality. To prove \eqref{eq:vwk2434}, consider $L\in \LL$ and $i,j \in L(v)$ with $i\neq j$. For simplicity, let $\PP:=\PP_{G,L}$ and, for $u\in N_G(v)$, $\PP^{ij}_{u}:=\PP_{G_v,L_u^{ij}}$,  $\MM^{ij}_{u}:=\MM_{G_v,L_u^{ij}}$, and $\II^{ij}_u=\II_{G_v,L_u^{ij}}$. Let also $P^{ij}_{w,k}:=\PP(\sigma_w \seq k \mid \sigma_v \seq i) - \PP(\sigma_w \seq k \mid \sigma_v \seq j)$, so that from \cref{lem:rec-inf} we have
\begin{equation}\label{eq:5rvrfv56}
P^{ij}_{w,k}=\sum_{u \in N_G(v)} 
\frac{\PP^{ij}_{u}(\sigma_u \seq j)}{\PP^{ij}_{u}(\sigma_u \sneq j)} \cdot \MM^{ij}_{u}\big((u,j), (w,k)\big)-\frac{\PP^{ij}_{u}(\sigma_u \seq i)}{\PP^{ij}_{u}(\sigma_u \sneq i)} \cdot \MM^{ij}_{u}\big((u,i), (w,k)\big).
\end{equation}
By the law of total probability, we have
\begin{align*}
\sum_{c \in L_u^{ij}(u)} \PP^{ij}_{u}(\sigma_u \seq c) \cdot \MM^{ij}_{u}\big((u,c),(w,k)\big)=\sum_{c \in L_u^{ij}(u)} \PP^{ij}_{u}(\sigma_u \seq c)  \big(\PP^{ij}_{u}(\sigma_w \seq k \mid \sigma_u \seq c) - \PP^{ij}_{u} (\sigma_w \seq k) \big)= 0; 
\end{align*}
so we conclude that
\begin{equation}\label{eq:5rvrfv56b}
m^{ij}_u:=\min_{i' \in L_u^{ij}(u)} \MM^{ij}_u\big((u,i'),(w,k)\big) \le 0
\mbox{~~and~~} 
M^{ij}_u:=\max_{j' \in L_u^{ij}(u)} \MM^{ij}_{u}\big((u,j'),(w,k)\big) \ge 0.
\end{equation}
Observe further that 
\begin{equation}\label{eq:5rvrfv56c}
\II^{ij}_u[u \sra (w,k)]=\max_{i',j'\in L^{ij}_u(u)} 
\big| \PP^{ij}_u (\sigma_w \seq k \mid \sigma_u \seq i') 
- 
\PP^{ij}_u (\sigma_w \seq k \mid \sigma_v \seq j') \big|=M^{ij}_u-m^{ij}_u.
\end{equation}
Combining \eqref{eq:5rvrfv56}, \eqref{eq:5rvrfv56b}, \eqref{eq:5rvrfv56c} we obtain that
\begin{align*}
P^{ij}_{w,k}&\le \sum_{u \in N_G(v)} R_v(u)  \big(M^{ij}_u-m^{ij}_u\big)= \sum_{u \in N_G(v)} 
R_{v}(u) \cdot \II^{ij}_u[u \sra (w,k)]\le \sum_{u \in N_G(v)} R_{v}(u) \cdot \II_{v}[u \sra (w,k)].
\end{align*}
Since $\II_{G,L}[v \sra (w,k)]=\max_{i,j\in L(v)}P^{ij}_{w,k}$,  by taking maximum over $i,j \in L(v)$ of the left-hand side, we obtain the same upper for  $\II_{G,L}[v \sra (w,k)]$. 
We then obtain \eqref{eq:vwk2434} by taking maximum over $L\in\LL$, and thus finish the proof. 
\end{proof}

For the bound in \cref{lem:inf-rec-bound} to be useful, we need to show that the ratio $R(u)$ defined there is strictly less than $1/\Delta_G(u)$. The following lemma does this for $(\Delta,q) \in \Lambda_\eps$, building on ideas from \cite{GMP,GKM}. \footnote{\label{foot:GMP}We remark that our region $\Lambda_\eps$ is slightly smaller than that of \cite{GMP}, where similar bounds are shown for $q \ge \alpha \Delta -\gamma$ for $\gamma\approx 0.4703$. The difference is that the arguments in \cite{GMP}  upper-bound $\PP_L(\sigma_u \seq c)$ instead of the ratio $\PP_L(\sigma_u \seq c)/\PP_L(\sigma_u \sneq c)$ which is relevant here, and which is clearly larger than $\PP_L(\sigma_u \seq c)$. See also the discussion before the upcoming~\cref{lem:Phi-bound}.}
\newcommand{\statelemboundratio}{Let $\eps > 0$ and $(\Delta,q) \in \Lambda_\eps$. 
Let $(G,L)$ be a $(\Delta,q)$-list-coloring instance with $G$ a triangle-free graph.
Then for every vertex $u$ of $G$ with degree at most $\Delta-1$ and every color $c\in L(u)$, we have 
\[
\frac{\PP_{G,L}(\sigma_u \seq c)}{\PP_{G,L}(\sigma_u \sneq c)} 
\le 
\min \left\{ 
	\frac{1}{(1+\eps) \Delta_G(u)},\; 
	\frac{4}{q} 
\right\}. 
\]}
\begin{lem}\label{lem:bound-ratio}
\statelemboundratio 
\end{lem}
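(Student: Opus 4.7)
The proof plan is to proceed by induction on $|V|$, establishing the ratio bound for all $(\Delta,q)$-list-coloring instances in $\Lambda_\eps$ simultaneously. Fix $u$ with $d := \Delta_G(u) \le \Delta-1$ and a color $c \in L(u)$, and let $N_G(u) = \{v_1, \ldots, v_d\}$. Identifying list-colorings of $(G,L)$ with $\sigma_u = c'$ with list-colorings of $(G-u, L \setminus c')$ (the latter denoting $L$ restricted to $G-u$ with color $c'$ removed from $L(v)$ for each $v \in N_G(u)$) gives the identity
\begin{equation*}
\frac{\PP_{G,L}(\sigma_u \seq c)}{\PP_{G,L}(\sigma_u \sneq c)} = \frac{q_c}{\sum_{c' \in L(u) \setminus \{c\}} q_{c'}}, \quad q_{c'} := \PP_{G-u, L}\bigl(\sigma_v \sneq c' \text{ for all } v \in N_G(u)\bigr).
\end{equation*}

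The denominator is straightforward: $\sum_{c' \in L(u)} q_{c'} = \E_{G-u,L}\bigl[|L(u) \setminus \sigma(N_G(u))|\bigr] \ge |L(u)| - d \ge q - \Delta$ via the $(\Delta,q)$-list-coloring condition $|L(u)| \ge q - \Delta + d$, so combining with the trivial $q_c \le 1$ gives $\PP_{G,L}(\sigma_u \seq c)/\PP_{G,L}(\sigma_u \sneq c) \le 1/(q - \Delta - 1) \le 4/q$, using that $\alpha = (1+\eps)\alpha^* > 4/3$ in $\Lambda_\eps$. This handles the second term in the min.

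The core of the argument is the sharper $1/((1+\eps) d)$ bound, for which I would show $q_c$ decays like $\exp(-\Theta(d/q))$. Using triangle-freeness, $v_1, \ldots, v_d$ are pairwise non-adjacent in $G - u$, enabling the telescoping decomposition
\begin{equation*}
q_c = \prod_{i=1}^d \PP_{G-u, L^{(i-1)}}\bigl(\sigma_{v_i} \sneq c\bigr),
\end{equation*}
where $L^{(i)}$ is obtained from $L$ on $V(G-u)$ by removing $c$ from $L(v_1), \ldots, L(v_i)$. A direct check (in the style of \cref{claim:induced}) shows each $(G-u, L^{(i-1)})$ is again a $(\Delta,q)$-list-coloring instance, and the vertex $v_i$ has degree $\Delta_G(v_i)-1 \le \Delta - 1$ in $G-u$, so the inductive hypothesis applies at $v_i$. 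Feeding this information into the potential function $\Phi$ of the upcoming \cref{lem:Phi-bound}---calibrated via the defining equation $\exp(1/\alpha^*) = \alpha^*$---converts the factorwise marginal control into an aggregate exponential estimate $q_c \le C \exp(-c_0 d/q)$ for constants $C, c_0$ depending on $\alpha$. Combining with the denominator lower bound and using $\alpha = (1+\eps)\alpha^*$ along with the specific $\beta$ in $\Lambda_\eps$ then yields the desired $1/((1+\eps) d)$ bound after a short calculation.

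The main obstacle is the third step. The inductive hypothesis provides upper bounds on marginal ratios (equivalently, upper bounds on $\PP(\sigma_{v_i} \seq c)$), whereas upper bounding $q_c$ naturally calls for lower bounds on $\PP(\sigma_{v_i} \seq c)$; reconciling these opposing directions is precisely what the potential $\Phi$ is engineered to do, and its detailed form (deferred to \cref{lem:Phi-bound}) is the real technical work. This is also the principal difference from \cite{GMP}: bounding the ratio $\PP(\sigma_u \seq c)/\PP(\sigma_u \sneq c)$ is strictly harder than bounding $\PP(\sigma_u \seq c)$ alone (it is always at least as large), and is exactly why the slightly more restrictive parameter region $\Lambda_\eps$ (rather than that of \cite{GMP}) is needed here.
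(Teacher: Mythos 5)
Your handling of the $4/q$ bound, via the identity $\frac{\PP_{G,L}(\sigma_u=c)}{\PP_{G,L}(\sigma_u\neq c)}=\frac{q_c}{\sum_{c'\neq c}q_{c'}}$ together with $\sum_{c'\in L(u)}q_{c'}=\E[|L(u)\setminus\sigma(N_G(u))|]\ge q-\Delta$ and $q_c\le 1$, is correct and close in spirit to the paper's crude bound $\PP_L(\sigma_u=c)\le 1/(q-\Delta)$; the only small imprecision is that the passage from $1/(q-\Delta-1)$ to $4/q$ needs $\alpha\ge 5/3$ (which $\alpha^*\approx 1.763$ satisfies), not merely $\alpha>4/3$.

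For the main bound $1/((1+\eps)\Delta_G(u))$, however, your proposal has a genuine gap that you yourself flag but do not resolve. To upper bound $q_c=\prod_{i=1}^d\PP_{G-u,L^{(i-1)}}(\sigma_{v_i}\neq c)$ you need \emph{lower} bounds on the marginal probabilities $\PP(\sigma_{v_i}=c)$ at the neighbors, but the inductive hypothesis (which is exactly the statement of the lemma) only provides \emph{upper} bounds on $\PP(\sigma_{v_i}=c)/\PP(\sigma_{v_i}\neq c)$. No potential function, including the $\Phi$ of \cref{lem:Phi-bound}, converts a one-sided upper bound into the needed lower bound. Worse, the lower bound is simply false in general: if $c\notin L(v_i)$ for every $i$ then each factor equals $1$ and $q_c=1$, so the decay $q_c\le C\exp(-c_0 d/q)$ that you assert cannot hold. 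The lemma nevertheless holds in that case because the denominator colors $c'$ still constrain the neighborhood, so the correct argument must compare $q_c$ against the $q_{c'}$ rather than bound $q_c$ in isolation.

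The paper's proof avoids all of this and is not an induction. By the law of total probability, one conditions on an arbitrary partial list-coloring $\tau$ on $V\setminus(\{u\}\cup N_G(u))$, and bounding the ratio is reduced to bounding the worst-case conditional ratio. Triangle-freeness is used here, not to justify a telescoping product, but to guarantee that $N_G(u)$ is an independent set, so that the conditioned instance $G_\tau$ is a star centered at $u$. On a star the marginals are exact: $\PP(\sigma_u=c')\propto\prod_{w\in N_G(u)}(|L(w)|-\one\{c'\in L(w)\})$, and the ratio $\PP(\sigma_u=c')/\PP(\sigma_u=c)$ is bounded below by $\prod_w(1-1/|L(w)|)^{\one\{c'\in L(w)\}}$ after discarding the helpful $-\one\{c\in L(w)\}$ terms in the denominator. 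Summing over $c'\in L(u)\setminus\{c\}$ and applying the arithmetic--geometric mean inequality then produces the closed-form quantity $\Phi(\Delta,q)$ directly, with no recursion and no lower bounds on neighbor marginals required. The conditioning-to-a-star reduction and the AM--GM step are the two ideas your proposal is missing, and they are precisely what makes the direction mismatch you identify disappear.
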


We remark that when $\Delta_G(u)$ is small, the bound $1/\Delta_G(u)$ is poor and we shall apply the simpler crude bound $4/q$. 
The proof of \cref{lem:bound-ratio} can be found in \cref{sec:margin}.  Combining \cref{lem:inf-rec-bound,lem:bound-ratio}, we can now bound the total influence. 
\begin{thm}\label{thm:bound-infl}
Let $\eps > 0$ and $(\Delta,q) \in \Lambda_\eps$. 
Suppose that $(G,\LL)$ is a collection of $(\Delta,q)$-list-coloring instances where $G$ is a triangle-free graph. 
Then for every vertex $v$ of $G$ we have $\II^*_{G,\LL}(v) \le 4\big( \frac{1}{\eps} + 1 \big)$. 
\end{thm}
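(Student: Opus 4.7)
The plan is to induct on the number of vertices of $G$. The base case is immediate: if $|V|=1$ or $v$ is isolated then $\Delta_G(v)=0$ and $\II^*_{G,\LL}(v)=0$ by definition, so the bound holds trivially.

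For the inductive step, assume $\Delta_G(v)\geq 1$ and apply \cref{lem:inf-rec-bound}, which gives
\[
\II^*_{G,\LL}(v)\;\le\;\max_{u\in N_G(v)}\Big\{R_{G_v,\LL_v}(u)\,\big(\Delta_{G_v}(u)\cdot \II^*_{G_v,\LL_v}(u)+q\big)\Big\}.
\]
By \cref{claim:induced}, $(G_v,\LL_v)$ is a collection of $(\Delta,q)$-list-coloring instances, and $G_v$ is triangle-free as a subgraph of $G$. Since $|V(G_v)|<|V(G)|$, the inductive hypothesis yields $\II^*_{G_v,\LL_v}(u)\le 4\big(\tfrac{1}{\eps}+1\big)$ for every $u\in N_G(v)$.

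To bound the ratio factor, note that for $u\in N_G(v)$ we have $\Delta_{G_v}(u)=\Delta_G(u)-1\leq \Delta-1$, so \cref{lem:bound-ratio} applies to $(G_v,L)$ for any $L\in\LL_v$ and gives
\[
R_{G_v,\LL_v}(u)\;\leq\;\min\!\Big\{\tfrac{1}{(1+\eps)\Delta_{G_v}(u)},\;\tfrac{4}{q}\Big\}.
\]
The idea is to apply each part of this minimum to a different term inside the parentheses. Using the first bound on the term containing $\Delta_{G_v}(u)$,
\[
R_{G_v,\LL_v}(u)\cdot\Delta_{G_v}(u)\cdot\II^*_{G_v,\LL_v}(u)\;\leq\;\frac{4(1/\eps+1)}{1+\eps}\;=\;\frac{4}{\eps},
\]
and using the second bound on the term containing $q$,
\[
R_{G_v,\LL_v}(u)\cdot q\;\leq\;\frac{4}{q}\cdot q\;=\;4.
\]
(When $\Delta_{G_v}(u)=0$ the first product vanishes anyway.) Adding these,
\[
R_{G_v,\LL_v}(u)\,\big(\Delta_{G_v}(u)\cdot \II^*_{G_v,\LL_v}(u)+q\big)\;\le\;\tfrac{4}{\eps}+4\;=\;4\big(\tfrac{1}{\eps}+1\big),
\]
which closes the induction. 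The only step that requires care is lining up the two halves of the min-bound on $R$ with the two terms in the recursion so that the dependence on $\Delta_{G_v}(u)$ (which ranges from $0$ to $\Delta-1$) cancels uniformly; the rest is bookkeeping on top of \cref{lem:inf-rec-bound,claim:induced,lem:bound-ratio}.
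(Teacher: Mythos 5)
Your proof is correct and uses essentially the same ingredients as the paper: combine \cref{lem:inf-rec-bound} with the two halves of the minimum in \cref{lem:bound-ratio} (applying $R(u)\cdot\Delta_{G_v}(u)\le \tfrac{1}{1+\eps}$ to the influence term and $R(u)\le \tfrac{4}{q}$ to the additive $q$ term), then close a fixed point $x\le \tfrac{x}{1+\eps}+4$. The paper phrases this by unrolling the recursion along a maximizing path of vertices $v_0,v_1,\dots,v_m$ until a vertex of degree zero is reached and summing the geometric series, whereas you package the same calculation as strong induction on $|V|$; the two are interchangeable.
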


\begin{proof}
Let $v_0 = v$, $G^{0}=G$ and $\LL^0 = \LL$.  For $\ell \ge 0$, we will define inductively a sequence of $(\Delta,q)$-list-coloring instances $(G^{\ell},\LL^{\ell})$ and a vertex $v_{\ell}$ in $G^{\ell}$  as follows. Let $G^{\ell+1}$ be the graph obtained from $G^{\ell}$ by deleting $v_{\ell}$, i.e., $G^{\ell+1}=G^{\ell}\backslash v_{\ell}$ and $\LL^{\ell+1}=\LL^{\ell}_{v_\ell}$. Note that all neighbors of $v_{\ell}$ in $G_{\ell}$ have degree at most $\Delta-1$ in $G^{\ell+1}$. Moreover, since by induction $(G^{\ell},\LL^{\ell})$ is a set of $(\Delta,q)$-list-coloring instances, by~\cref{claim:induced} so is $(G^{\ell+1},\LL^{\ell+1})$. Since $q\geq (1+\eps)\alpha\Delta+1$,  combining \cref{lem:inf-rec-bound,lem:bound-ratio}, we obtain that
\begin{equation}\label{eq:I*-rec-bound}
\II^*_{G^{\ell},\LL^{\ell}}(v_{\ell}) 
\le 
\frac{1}{1+\eps} \cdot \max_{u\in N_{G^{\ell}}(v_{\ell})} \left\{ \II^*_{G^{\ell+1},\LL^{\ell+1}}(u) \right\} 
+ 
4. 
\end{equation}
  We  let $v_{\ell+1}$ be the vertex $u\in N_{G^{\ell}}(v_{\ell})$ that attains the maximum of the right-hand side of~\eqref{eq:I*-rec-bound}, so 
\begin{equation}\label{eq:proof-I*}
\II^*_{G^{\ell},\LL^{\ell}}(v_{\ell})  
\le \frac{1}{1+\eps} \cdot \II^*_{G^{\ell+1},\LL^{\ell+1}}(v_{\ell+1}) 
+ 4. 
\end{equation}
Hence, we obtain a sequence of vertices $v_0,v_1,\dots,v_m$ and collections of lists $\LL^0,\LL^1,\dots,\LL^m$, till when $\Delta_{G^m}(v_m) = 0$ and thus $\II_{G^m,\LL^m}^*(v_m) = 0$. From this, and since \eqref{eq:proof-I*} holds for all $0\le \ell \le m-1$, we obtain by solving the recursion that  
$\II_{G,\LL}^*(v) \le 
\frac{4}{1 - (1+\eps)^{-1}} 
= 4\left( \frac{1}{\eps} + 1 \right)$, as wanted. 
\end{proof}
Combining \cref{thm:bound-infl} with~\cref{obs:bound} and~\cref{def:maxinf} of total maximum influence  gives \eqref{eq:suma}, which therefore yields the bound $\lambda_1(\MM_{G,L})\leq 4\left( \frac{1}{\eps}+1 \right)\Delta$ for any $(\Delta,q)$-list-coloring instance $(G,L)$ with $(\Delta,q)\in \Lambda_\eps$, as claimed at the beginning of this section.

\subsection{An example where this spectral bound is not tight}
\label{sec:nottight}
From the arguments of the previous section we get that, for a $(\Delta,q)$-list-coloring instance $(G,L)$ with $(\Delta,q)\in \Lambda_\eps$ it holds that $\lambda_1(\MM_{G,L})\leq 4\left( \frac{1}{\eps}+1 \right)\Delta$. As discussed earlier, this only yields an $n^{C\Delta}$ upper bound on the mixing time for some $C=C(\alpha)>0$, which is exponential in the maximum degree $\Delta$. The following example shows that \eqref{eq:suma} and threfore the bound on $\lambda_1(\MM_{G,L})$  are not tight.  

\begin{eg}\label{eg:star}
Consider $q$-colorings of a star graph $G=(V,E)$ on $\Delta+1$ vertices centered at $v$. 
Then for every $w\in N_G(v) = V \bs \{v\}$ and every $k\in [q]$, we have $\II_{G,L}[v \sra (w,k)] = \frac{1}{q-1}$, and hence, 
\begin{equation}\label{eq:4f3ffwfwe}
\sum_{w \in V \bs \{v\}} \sum_{k\in [q]} \II_{G,L}[v \sra (w,k)] = \frac{q}{q-1} \cdot \Delta \geq \Delta. 
\end{equation}
Meanwhile, given $i\in [q]$, for every $w\in N_G(v) = V \bs \{v\}$ and every $k\in [q]$ we have
\[
\MM_{G,L}((v,i),(w,k)) = \frac{1}{q(q-1)} \mbox{~~if~~} k\neq i, \quad \MM_{G,L}((v,i),(w,k)) =-\frac{1}{q} \mbox{~~if~~} k= i.
\]
Therefore, for every $i\in[q]$ we have $\sum_{w \in V \bs \{v\}} \sum_{k\in [q]} \Big| \MM_{G,L}((v,i),(w,k)) \Big| = \frac{2\Delta}{q}$, which is a factor of at least $q/2$ smaller than the bound in \eqref{eq:4f3ffwfwe}.
\end{eg}
\cref{eg:star} indicates that the maximum influence $\II_L[v \sra (w,k)]$ does not always provide a good bound on $\MM_L((v,i),(w,k))$; in fact, as we will see in the next section in detail, it loses a factor of roughly $q$ when it comes to the off-diagonal entries, i.e., when $k \neq i$.

\section{Polynomial mixing time for all \texorpdfstring{$\Delta$}{Delta}}
\label{sec:better-bound}

In this section, we prove the constant upper bound on the largest eigenvalue of $\MM_{G,L}$ for list-coloring instances $(G,L)$ as  in~\cref{thm:lambda1}. To tighten the analysis of the previous section and motivated from  the bad example of~\cref{sec:nottight}, we introduce the \emph{maximum biased influence} which describes the maximum difference of the marginal probability of $\sigma_w = k$ under ``non-$k$'' color choices of $v$. 

\begin{defn}
Let $(G,\LL)$ be a collection of $(\Delta,q)$-list-coloring instances with $G=(V,E)$. Fix a vertex $v\in V$, and let $w\in V$ and $k\in [q]$.  For $L\in \LL$, the \emph{maximum biased influence} of $v$ on $(w,k)$ with respect to $(G,L)$ is defined as
\[
\hat{\II}_{G,L}[v \sra (w,k)] 
= 
\max_{i,j\in L(v) \bs \{k\}} 
\big| \PP_{G,L} (\sigma_w \seq k \mid \sigma_v \seq i) 
- 
\PP_L (\sigma_w \seq k \mid \sigma_v \seq j) \big|. 
\] 
The maximum biased influence of $v$ on $(w,k)$ with respect to $(G,\LL)$ is defined to be $\hat{\II}_{G,\LL}[v \sra (w,k)] = \max_{L \in \LL}\, \hat{\II}_{G,L}[v \sra (w,k)]$.  The \emph{total maximum biased influence} of $v$ with respect to $(G,\LL)$ is defined to be 0 if $\Delta_G(v) = 0$, and  
\[
\hat{\II}^*_{G,\LL}(v) = \frac{1}{\Delta_G(v)} \sum_{w\in V \bs \{v\}} \sum_{k\in [q]} \hat{\II}_{G,\LL}[v \sra (w,k)], \mbox{\ \ \ if $\Delta_G(v) \ge 1$}.
\]
\end{defn}

We can upper bound $\sum_{w \in V \bs \{v\}} \sum_{k\in [q]} |\MM_L((v,i),(w,k))|$ by a weighted sum of $\II^*_L(v)$ and $\hat{\II}^*_L(v)$, and from that we are able to get a more precise bound, saving a factor of $q$. 

\newcommand{\statelemMleq}{Let $(G,L)$ be a $(\Delta,q)$-list-coloring instance with $G=(V,E)$. For $v\in V$ and $i\in L(v)$, we have
\[
\sum_{w \in V \bs \{v\}} \sum_{k\in [q]} \Big| \MM_{G,L} ((v,i),(w,k)) \Big| 
\le 
2 \Delta_G(v) \left( \hat{\II}^*_{G,L}(v) + P_{G,L}(v) \cdot \II^*_{G,L}(v) \right) 
\]
where $P_{G,L}(v) = \max_{c\in L(v)} \PP_{G,L}(\sigma_v \seq c)$.
}
\begin{lem}\label{lem:M<=I+hatI}
\statelemMleq
\end{lem}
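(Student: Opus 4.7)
The plan is to combine two observations: (i) a zero-sum identity on each row of $\MM$ lets us restrict to off-diagonal entries $k\neq i$ at the cost of a factor $2$, and (ii) expanding $\MM((v,i),(w,k))$ by the law of total probability over $\sigma_v$ and then classifying terms by whether the conditioning color equals $k$ lets us cleanly split the bound into a $\hat{\II}$-part and an $\II$-part weighted by $P_{G,L}(v)$.

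First, since for any $w\neq v$ the conditional distribution of $\sigma_w$ sums to one under both conditionings appearing in $\MM$, we have $\sum_{k\in[q]}\MM_{G,L}((v,i),(w,k))=0$. This lets me write
\[
|\MM_{G,L}((v,i),(w,i))| = \Big|\sum_{k\neq i} \MM_{G,L}((v,i),(w,k))\Big| \leq \sum_{k\neq i}|\MM_{G,L}((v,i),(w,k))|,
\]
so $\sum_{k\in[q]}|\MM_{G,L}((v,i),(w,k))|\leq 2\sum_{k\neq i}|\MM_{G,L}((v,i),(w,k))|$. This is the crucial step that avoids the ``bad'' diagonal entry identified in~\cref{eg:star}; without it, the $\II[v\sra(w,i)]$ term would not pick up the factor $P_{G,L}(v)$.

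Next, for $k\neq i$, I apply the law of total probability on $\sigma_v$ to write
\[
\MM_{G,L}((v,i),(w,k)) = \sum_{j\in L(v)} \PP_{G,L}(\sigma_v\seq j)\Big(\PP_{G,L}(\sigma_w\seq k\mid\sigma_v\seq i) - \PP_{G,L}(\sigma_w\seq k\mid\sigma_v\seq j)\Big).
\]
The $j=i$ term vanishes. For $j\neq i,k$ both conditioning colors avoid $k$, so the bracket is bounded by $\hat{\II}_{G,L}[v\sra(w,k)]$, and the total weight $\sum_{j\neq i,k}\PP_{G,L}(\sigma_v\seq j)\leq 1$. For $j=k$ (if $k\in L(v)$) I can only use the looser $\II_{G,L}[v\sra(w,k)]$, but the weight is $\PP_{G,L}(\sigma_v\seq k)\leq P_{G,L}(v)$. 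By the triangle inequality,
\[
|\MM_{G,L}((v,i),(w,k))|\leq \hat{\II}_{G,L}[v\sra(w,k)] + P_{G,L}(v)\cdot \II_{G,L}[v\sra(w,k)].
\]

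Finally I combine the two steps. Summing the displayed inequality over all $w\in V\bs\{v\}$ and $k\neq i$, and crudely extending the sums to all $k\in[q]$ (all terms are nonnegative), yields
\[
\sum_{w\neq v}\sum_{k\neq i}|\MM_{G,L}((v,i),(w,k))|\leq \Delta_G(v)\cdot \hat{\II}^*_{G,L}(v) + P_{G,L}(v)\cdot\Delta_G(v)\cdot \II^*_{G,L}(v),
\]
and multiplying by $2$ from the first step gives exactly the claimed bound. I do not anticipate a serious obstacle here; the only subtlety is remembering to invoke the zero-sum identity \emph{before} bounding individual entries, since otherwise the $k=i$ entry forces a naked $\II^*_{G,L}(v)$ term (without the $P_{G,L}(v)$ factor) that would defeat the purpose of introducing the biased influence.
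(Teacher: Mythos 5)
Your proof is correct and essentially matches the paper's: the law-of-total-probability split on $\sigma_v$ (separating $j=k$ from $j\neq i,k$) is exactly the content of the paper's auxiliary Lemma~\ref{claim:J-K} on $\hat{\JJ}_{G,L}$, and the zero-sum identity on each row is the paper's handling of the $k=i$ terms; you merely apply the factor-of-2 step first rather than last. No gap.
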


\cref{lem:M<=I+hatI} is proved by applying the law of total probability to the left-hand side and bounding each term with either $\II_L[\cdot]$ or $\hat{\II}_L[\cdot]$ respectively; the proof  can be found in \cref{subsec:pf-M}. 
Since we know that $P_{G,L}(v) = O(1/q)$ from \cref{lem:bound-ratio} and also $\II^*_{G,L}(v) = O(1)$ from \cref{thm:bound-infl}, it suffices to show $\hat{\II}^*_{G,L}(v) = O(1/q)$ in order to get a bound of $O(\Delta/q)$ for the row sums of $\MM_{G,L}$. 
The remaining of this section aims to prove this. We first give a recursive upper bound on the total maximum biased influence, which can be viewed as an analogue of \cref{lem:inf-rec-bound} for maximum biased influences. 
\newcommand{\statelemhatinf}{Let $(G,\LL)$ be a collection of list-coloring instances and $v$ be a vertex of $G$ with $\Delta_G(v)\geq 1$. Then, with $G_v,\LL_v$ as in~\cref{def:GvLv},
\[
\hat{\II}^*_{G,\LL}(v) \le 
\max_{u\in N_G(v)} 
\left\{
	R_{G_v,\LL_v}(u) \cdot 
	\left[ 
		\Delta_{G\bs v}(u) \cdot \hat{\II}^*_{G_v, \LL_v}(u) 
		+ 
		R_{G_v, L_v}(u) \cdot \left( \Delta_{G_v}(u) \cdot \II^*_{G_v,\LL_v}(u) + q\right) 
	\right]
\right\},
\]
where 
$R_{G_v,\LL_v}(u) = \max_{L \in \LL_v} \max_{c\in L(u)} 
\frac{\PP_{G_v,L}(\sigma_u \seq c)}{\PP_{G_v,L}(\sigma_u \sneq c)}$ for $u \in N_G(v)$.}
\begin{lem}\label{lem:hatinf-rec-bound}
\statelemhatinf
\end{lem}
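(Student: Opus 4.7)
The plan is to mirror the proof of \cref{lem:inf-rec-bound}, invoking \cref{lem:rec-inf} for arbitrary $i,j \in L(v) \setminus \{k\}$ with $i\neq j$, and then applying a refined decomposition to each per-$u$ summand that exploits the constraint $i,j \neq k$. Fix $u \in N_G(v)$ and $w \in V\setminus\{v\}$, $k \in [q]$; abbreviating $p_c := \PP^{ij}_u(\sigma_u = c)$ and $\mu_c := \MM^{ij}_u((u,c),(w,k))$, the law of total probability gives $\sum_c p_c \mu_c = 0$, so the pivot
\[
\bar\mu := \frac{1}{1-p_k}\sum_{c\neq k} p_c\,\mu_c = -\frac{p_k}{1-p_k}\,\mu_k
\]
is a convex combination of $\{\mu_c\}_{c\neq k}$ and therefore lies in $[m', M']$, where $m' = \min_{c\neq k}\mu_c$ and $M' = \max_{c\neq k}\mu_c$. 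Consequently the centered quantities $\tilde m := m' - \bar\mu \le 0$ and $\tilde M := M' - \bar\mu \ge 0$ straddle zero, with $\tilde M - \tilde m = M' - m' \le \hat{\II}^{ij}_u[u\to(w,k)]$ because the max/min range over $c\neq k$.

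Writing
\[
\frac{p_j}{1-p_j}\mu_j - \frac{p_i}{1-p_i}\mu_i = \left[\frac{p_j}{1-p_j}(\mu_j - \bar\mu) - \frac{p_i}{1-p_i}(\mu_i - \bar\mu)\right] + \bar\mu\left(\frac{p_j}{1-p_j} - \frac{p_i}{1-p_i}\right),
\]
the bracketed term is bounded above by $\frac{p_j}{1-p_j}\tilde M - \frac{p_i}{1-p_i}\tilde m$ (since $i,j\neq k$ forces $\mu_j, \mu_i \in [m',M']$), and the sign pattern $\tilde M \ge 0 \ge \tilde m$ then yields the clean estimate $R_{G_v,\LL_v}(u)(\tilde M - \tilde m) \le R_{G_v,\LL_v}(u)\,\hat{\II}^{ij}_u[u\to(w,k)]$, exactly mirroring the trick of \cref{lem:inf-rec-bound}. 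For the pivot-correction term, I would bound $|\bar\mu| = \frac{p_k}{1-p_k}|\mu_k| \le R_{G_v,\LL_v}(u)\,\II^{ij}_u[u\to(w,k)]$ (using that $|\mu_k|$ is at most the full influence by convexity of the marginal), together with $|\frac{p_j}{1-p_j} - \frac{p_i}{1-p_i}| \le R_{G_v,\LL_v}(u)$ since both ratios lie in $[0, R_{G_v,\LL_v}(u)]$. This contributes $R_{G_v,\LL_v}(u)^2\,\II^{ij}_u[u\to(w,k)]$, explaining the second factor of $R_{G_v,\LL_v}(u)$ in the claim.

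Combining the two per-$u$ estimates through the triangle inequality and \cref{lem:rec-inf}, then replacing $\hat{\II}^{ij}_u$ and $\II^{ij}_u$ by $\hat{\II}_{G_v,\LL_v}$ and $\II_{G_v,\LL_v}$ respectively (since $L^{ij}_u \in \LL_v$) and taking the max over the ordered pair $(i,j)$ and over $L\in\LL$ gives a pointwise bound on $\hat{\II}_{G,\LL}[v\to(w,k)]$. Summing over $w\in V\setminus\{v\}$ and $k\in[q]$, the terms with $w\neq u$ assemble into $\Delta_{G_v}(u)\hat{\II}^*_{G_v,\LL_v}(u)$ and $\Delta_{G_v}(u)\II^*_{G_v,\LL_v}(u)$ by definition. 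The additional $q$ in the statement comes from the self-term $w=u$: here $\mu_c = \mathbf{1}[c=k] - p_k$ explicitly, the bracketed $\hat{\II}$-contribution vanishes because $\mu_i = \mu_j = -p_k$ forces $\tilde M = \tilde m = 0$, and the pivot-correction reduces to $|p_k(\frac{p_j}{1-p_j} - \frac{p_i}{1-p_i})| \le R_{G_v,\LL_v}(u)^2$ per $k$, hence $qR_{G_v,\LL_v}(u)^2$ after summing. Dividing by $\Delta_G(v)$ and replacing the average over $u\in N_G(v)$ by a maximum gives the claim.

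The main technical obstacle is arranging the signs so that the sharp coefficient $R_{G_v,\LL_v}(u)$ (rather than $2R_{G_v,\LL_v}(u)$) multiplies $\hat{\II}$: the raw quantities $M'$ and $m'$ need not bracket zero when $\mu_k$ is extreme, so a naive estimate would either incur a factor of $2$ or force a lossy additive combination with $\II$. Subtracting the pivot $\bar\mu$ realigns the non-$k$ range symmetrically around zero, and the leftover $\bar\mu$-correction is controlled via the identity $\bar\mu = -\frac{p_k}{1-p_k}\mu_k$, feeding back into the full influence $\II$ with an additional factor of $R_{G_v,\LL_v}(u)$. This structural feature is precisely what produces the product-form bound $R_v\cdot[\Delta\hat{\II}^* + R_v(\Delta\II^* + q)]$ demanded by the statement, rather than a weaker sum-form bound that would not give the eventual $O(\Delta/q)$ control on the row-sums of $\MM$ in \cref{thm:lambda1}.
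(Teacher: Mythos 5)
Your proof is correct, and it takes a genuinely different route through the pointwise step (the analogue of \cref{lem:hatinf(wk)-rec-bound} in the paper). The paper bounds each per-$u$ summand via the positive/negative part decomposition $(\mu_j)^+ + (\mu_i)^-$, observes that maximizing over $i',j'\neq k$ gives $\max\{\hat{\II},\hat{\JJ}\}$ (the case distinction depending on whether $\min_{c\neq k}\mu_c$ and $\max_{c\neq k}\mu_c$ straddle zero), and then invokes \cref{claim:J-K} to absorb $\hat{\JJ}$ into $\hat{\II} + R\cdot\II$. You instead re-center the $\mu_c$'s around the pivot $\bar\mu = -\frac{p_k}{1-p_k}\mu_k$, which is the conditional expectation of $\mu$ over $c\neq k$; this forces the re-centered min and max to straddle zero by construction, so the clean $R\cdot(\text{range})=R\cdot\hat{\II}$ bound is available directly without any case analysis, and the leftover pivot-correction term $\bar\mu\,\big(\frac{p_j}{1-p_j}-\frac{p_i}{1-p_i}\big)$ is controlled by $R^2\II$ using $|\mu_k|\le\II$ and the $[0,R]$ confinement of the ratios. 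Both approaches arrive at the same per-$u$ bound $R\big(\hat{\II}+R\,\II\big)$, and the aggregation step (separating out the $w=u$ self-terms, which contribute $0$ for $\hat{\II}$ and at most $q$ for $\II$) is essentially identical. Your route is somewhat more self-contained for this particular lemma since it bypasses the auxiliary quantity $\hat{\JJ}$ entirely, though the paper still needs \cref{claim:J-K} elsewhere (in the proof of \cref{lem:M<=I+hatI}).
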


Notice that the right-hand side of inequality in the lemma includes both the influence $\II^*_{G_v,\LL_v}(u)$ and the biased influence $\hat{\II}^*_{G_v,\LL_v}(u)$. 
Combining \cref{lem:bound-ratio} and \cref{thm:bound-infl}, we obtain the following. 

\begin{thm}\label{thm:bound-hatinfl}
Let $\eps > 0$. 
Let $(G,\LL)$ be a collection of $(\Delta,q)$-list-coloring instances where $G$ is a triangle-free graph and $(\Delta,q) \in \Lambda_\eps$,
Then, for every vertex $v$ of $G$, and with $(G_v,\LL_v)$ as in~\cref{def:GvLv}, we have 
\begin{equation}\label{eq:hatI*-rec-bound}
\hat{\II}^*_{G,\LL}(v) \le \frac{1}{1+\eps} \cdot \max_{u\in N_G(v)} \left\{ \hat{\II}^*_{G_v,\LL_v}(u) \right\} 
+ \frac{16}{q} \left( \frac{1}{\eps}+1 \right).  
\end{equation}
Therefore, $\hat{\II}^*_{G,\LL}(v) \le \frac{16}{q} \big( \frac{1}{\eps}+1 \big)^2$. 
\end{thm}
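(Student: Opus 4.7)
The plan is to combine the one-step recursion of \cref{lem:hatinf-rec-bound} with the ratio estimate of \cref{lem:bound-ratio} and the total-influence bound of \cref{thm:bound-infl} to first establish \eqref{eq:hatI*-rec-bound}, and then to unroll the resulting recursion along a carefully chosen sequence of vertices, mirroring the proof of \cref{thm:bound-infl}.

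First I would verify that \cref{lem:bound-ratio,thm:bound-infl} remain applicable after passing to $(G_v,\LL_v)$: since $G_v$ is an induced subgraph of the triangle-free graph $G$, it is itself triangle-free, and by \cref{claim:induced}, $(G_v,\LL_v)$ is still a collection of $(\Delta,q)$-list-coloring instances. Fix any $u \in N_G(v)$; because $u$ has degree at most $\Delta-1$ in $G_v$, \cref{lem:bound-ratio} gives $R_{G_v,\LL_v}(u) \le \min\bigl\{1/\bigl((1+\eps)\Delta_{G_v}(u)\bigr),\; 4/q\bigr\}$. I would then estimate the three summands inside the brackets of \cref{lem:hatinf-rec-bound} separately. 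For $R\cdot \Delta_{G_v}(u)\cdot \hat{\II}^*_{G_v,\LL_v}(u)$, use $R \le 1/((1+\eps)\Delta_{G_v}(u))$ to obtain $\hat{\II}^*_{G_v,\LL_v}(u)/(1+\eps)$. For $R^2 \cdot \Delta_{G_v}(u) \cdot \II^*_{G_v,\LL_v}(u)$, spend one factor of $R$ via the $1/((1+\eps)\Delta_{G_v}(u))$ branch and the other via the $4/q$ branch, and then invoke $\II^*_{G_v,\LL_v}(u)\le 4(1/\eps+1)$ from \cref{thm:bound-infl}, contributing $\tfrac{16}{q(1+\eps)}(1/\eps+1)$. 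For $R^2\cdot q$, apply $R\le 4/q$ twice to get $16/q$. The degenerate case $\Delta_{G_v}(u)=0$ is harmless since then $\II^*_{G_v,\LL_v}(u)=\hat{\II}^*_{G_v,\LL_v}(u)=0$ by convention. Summing and using the identity $(1/\eps+1)/(1+\eps)+1=1/\eps+1$ yields \eqref{eq:hatI*-rec-bound} after maximizing over $u \in N_G(v)$.

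For the closed-form bound I would unroll \eqref{eq:hatI*-rec-bound} exactly as in the proof of \cref{thm:bound-infl}: set $(G^0,\LL^0)=(G,\LL)$ and $v_0=v$, and inductively take $(G^{\ell+1},\LL^{\ell+1})=(G^\ell\bs v_\ell,\LL^\ell_{v_\ell})$ and $v_{\ell+1}\in N_{G^\ell}(v_\ell)$ to be a neighbor attaining the maximum in \eqref{eq:hatI*-rec-bound} applied at $v_\ell$. \cref{claim:induced} preserves the $(\Delta,q)$-list-coloring property, so \eqref{eq:hatI*-rec-bound} applies at each level, and the process terminates at some $m$ with $\Delta_{G^m}(v_m)=0$, whence $\hat{\II}^*_{G^m,\LL^m}(v_m)=0$. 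Telescoping yields $\hat{\II}^*_{G,\LL}(v)\le \tfrac{16}{q}(1/\eps+1)\sum_{\ell\ge 0}(1+\eps)^{-\ell}=\tfrac{16}{q}(1/\eps+1)\cdot \tfrac{1+\eps}{\eps}=\tfrac{16}{q}(1/\eps+1)^2$.

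The one delicate step is the split of the two factors of $R$ in the $\II^*$ term between the two branches of the minimum in \cref{lem:bound-ratio}: one factor must use $1/((1+\eps)\Delta_{G_v}(u))$ to cancel the $\Delta_{G_v}(u)$ and produce the $1/(1+\eps)$ contraction that drives the recursion, while the other must use $4/q$ to produce the $1/q$ prefactor essential for the final $O(1/q)$ scaling. The corresponding split in the $\hat{\II}^*$ term is forced (only one factor of $R$ is present), and every remaining step is a direct invocation of a lemma already proved in the paper.
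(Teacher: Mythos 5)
Your proposal is correct and follows essentially the same route as the paper: invoke \cref{lem:hatinf-rec-bound}, bound the ratio $R$ via both branches of \cref{lem:bound-ratio} (spending one factor of $R$ as $1/((1+\eps)\Delta_{G_v}(u))$ to cancel the degree and the other as $4/q$ to supply the $1/q$ scaling), use \cref{thm:bound-infl} for $\II^*$, simplify via $(1/\eps+1)/(1+\eps)+1 = 1/\eps+1$, and unroll the recursion exactly as in the proof of \cref{thm:bound-infl}. No gaps.
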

\begin{proof}
To prove \eqref{eq:hatI*-rec-bound}, we bound each of the terms appearing in the maximization for $\hat{\II}^*_{G,\LL}(v)$ in \cref{lem:hatinf-rec-bound}. By \cref{lem:bound-ratio}, for every $u\in N_G(v)$, we have that $R_{G_v,\LL_v}(u)\leq \frac{4}{q}$ and $R_{G_v,\LL_v}(u) \cdot \Delta_{G_v}(u)\leq \frac{1}{1+\eps}$ and by \cref{thm:bound-infl} we have that $\II^*_{G,\LL}(v) \le 4\big( \frac{1}{\eps} + 1 \big)$. Therefore, the bound in \cref{lem:hatinf-rec-bound} gives
\begin{align*}
\hat{\II}^*_{G,\LL}(v) 
&\le \max_{u\in N_G(v)} 
\left\{
	\frac{1}{1+\eps} \cdot \hat{\II}^*_{G_v,\LL_v}(u) 
	+ 
	\frac{4}{q} \cdot \frac{1}{1+\eps} \cdot 4\left( \frac{1}{\eps} + 1 \right) 
	+
	\frac{16}{q^2} \cdot q
\right\}\\
&= \frac{1}{1+\eps} \cdot \max_{u\in N_G(v)} \left\{ \hat{\II}^*_{G_v,\LL_v}(u) \right\} + \frac{16}{q} \left( \frac{1}{\eps} + 1 \right). 
\end{align*}
This establishes \eqref{eq:hatI*-rec-bound}. From this, $\hat{\II}^*_{G,\LL}(v) \le \frac{16}{q} \big( \frac{1}{\eps}+1 \big)^2$ is obtained analogously to \cref{thm:bound-infl}, see \eqref{eq:I*-rec-bound} and \eqref{eq:proof-I*}, by solving the recursion induced by \eqref{eq:hatI*-rec-bound}.   
\end{proof}

We are now ready to prove~\cref{thm:lambda1} which we restate here for convenience.
\begin{thmlambda}
\statethmlambda
\end{thmlambda}
\begin{proof}
From \cref{lem:M<=I+hatI}, \cref{thm:bound-infl}, and \cref{thm:bound-hatinfl} we get for every vertex $v$ of $G$ and every $i\in L(v)$ that
\[
\sum_{w \in V \bs \{v\}} \sum_{k\in [q]} \Big| \MM_{G,L} ((v,i),(w,k)) \Big| 
\le 64 \left( \frac{1}{\eps}+1 \right)^2 \frac{\Delta}{q}. 
\]
This implies that the row sums of $\MM_{G,L}$ are bounded by the same quantity, yielding therefore the desired bound on $\lambda_1(\MM_{G,L})$. Note, the bound is tight in $\Delta$ and $q$ as illustrated in \cref{eg:star}. 
\end{proof}

\subsection{Proof of \texorpdfstring{\cref{lem:M<=I+hatI,lem:hatinf-rec-bound}}{Lemmas 23 and 24}}

In this section, we give the proof of the remaining \cref{lem:M<=I+hatI,lem:hatinf-rec-bound} that were used in the proof of \cref{thm:lambda1}. Let $(G,L)$ be a $(\Delta,q)$-list-coloring instance with $G=(V,E)$. Given $v,w\in V$ and $k\in [q]$, it will be helpful to define
\[
\hat{\JJ}_{G,L}[v \sra (w,k)] 
= 
\max_{i\in L(v) \bs \{k\}} 
\left| \PP_{G,L} (\sigma_w \seq k \mid \sigma_v \seq i) 
- 
\PP_{G,L} (\sigma_w \seq k) \right|. 
\]
The quantity $\hat{\JJ}_{G,L}[v \sra (w,k)]$ is upper bounded by a weighted sum of $\II_{G,L}[v \sra (w,k)]$ and $\hat{\II}_{G,L}[v \sra (w,k)]$, as shown by the following lemma.
\begin{lem}\label{claim:J-K}
Fix an arbitrary vertex $v\in V$. 
Let $w\in V$ and $k \in [q]$. 
Then we have 
\[
\hat{\JJ}_{G,L}[v \sra (w,k)] \le 
(1-P_{G,L}(v)) \cdot \hat{\II}_{G,L}[v \sra (w,k)] + P_{G,L}(v) \cdot \II_{G,L}[v \sra (w,k)], 
\]
where $P_{G,L}(v) = \max_{c\in L(v)} \PP_{G,L}(\sigma_v \seq c)$. 
\end{lem}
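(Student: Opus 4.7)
The plan is to start from the law of total probability and peel off the term corresponding to $\sigma_v = k$, which is the only spin value at $v$ that is not directly controlled by the biased influence $\hat{\II}$.

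First I would fix $i \in L(v) \bs \{k\}$ (the argument for the case $L(v) \bs \{k\} = \emptyset$ is vacuous) and write
\[
\PP_{G,L}(\sigma_w \seq k) = \sum_{j\in L(v)} \PP_{G,L}(\sigma_w \seq k \mid \sigma_v \seq j)\, \PP_{G,L}(\sigma_v \seq j),
\]
so that
\[
\PP_{G,L}(\sigma_w \seq k \mid \sigma_v \seq i) - \PP_{G,L}(\sigma_w \seq k) = \sum_{j\in L(v)} \bigl[\PP_{G,L}(\sigma_w \seq k \mid \sigma_v \seq i) - \PP_{G,L}(\sigma_w \seq k \mid \sigma_v \seq j)\bigr]\,\PP_{G,L}(\sigma_v \seq j).
\]
Apply the triangle inequality and split the sum according to whether $j = k$ or $j \in L(v) \bs \{k\}$. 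For every $j \in L(v) \bs \{k\}$ both indices $i,j$ lie in $L(v) \bs \{k\}$, so that term's contribution is bounded by $\hat{\II}_{G,L}[v \sra (w,k)]$; for $j = k$ (only relevant if $k \in L(v)$, otherwise set $\PP_{G,L}(\sigma_v \seq k) := 0$) we can only bound the bracket by $\II_{G,L}[v \sra (w,k)]$. This yields
\[
\bigl|\PP_{G,L}(\sigma_w \seq k \mid \sigma_v \seq i) - \PP_{G,L}(\sigma_w \seq k)\bigr| \le \bigl(1 - \PP_{G,L}(\sigma_v \seq k)\bigr)\,\hat{\II}_{G,L}[v \sra (w,k)] + \PP_{G,L}(\sigma_v \seq k)\,\II_{G,L}[v \sra (w,k)].
\]

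The second step is to replace $\PP_{G,L}(\sigma_v \seq k)$ by $P_{G,L}(v)$. The key observation is the monotonicity $\hat{\II}_{G,L}[v \sra (w,k)] \le \II_{G,L}[v \sra (w,k)]$, which holds because $\hat{\II}$ is a maximum over the strictly smaller set $L(v)\bs \{k\}\subseteq L(v)$. Rewriting the right-hand side above as $\hat{\II} + \PP_{G,L}(\sigma_v \seq k)\,(\II - \hat{\II})$ and using $\PP_{G,L}(\sigma_v \seq k) \le P_{G,L}(v)$ together with $\II - \hat{\II} \ge 0$ gives
\[
\bigl|\PP_{G,L}(\sigma_w \seq k \mid \sigma_v \seq i) - \PP_{G,L}(\sigma_w \seq k)\bigr| \le (1 - P_{G,L}(v))\,\hat{\II}_{G,L}[v \sra (w,k)] + P_{G,L}(v)\,\II_{G,L}[v \sra (w,k)].
\]
Taking the maximum over $i \in L(v)\bs\{k\}$ on the left-hand side yields exactly $\hat{\JJ}_{G,L}[v \sra (w,k)]$, finishing the proof.

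The argument is essentially a clean computation with no significant obstacle; the only subtle step is the monotonicity observation $\hat{\II} \le \II$ that lets us enlarge the probability $\PP_{G,L}(\sigma_v \seq k)$ to the uniform bound $P_{G,L}(v)$, which is what makes the stated inequality nontrivially stronger than the trivial bound $\hat{\JJ} \le \II$.
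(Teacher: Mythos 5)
Your proof is correct and follows essentially the same route as the paper's: law of total probability, triangle inequality, split the sum over $j$ according to whether $j = k$, then use the monotonicity $\hat{\II}\le\II$ together with $\PP_{G,L}(\sigma_v\seq k)\le P_{G,L}(v)$ before taking the maximum over $i$. Your explicit rewriting as $\hat{\II}+\PP_{G,L}(\sigma_v\seq k)(\II-\hat{\II})$ is just a transparent way of presenting the same final step.
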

\begin{proof}
For convenience, we drop the subscript $G$ from notation. For $i\in L(v) \bs \{k\}$, using the law of total probability and the triangle inequality we have
\begin{align*}
 \big| \PP_{L} (\sigma_w \seq k \mid \sigma_v \seq i) 
- 
\PP_{L} (\sigma_w \seq k) \big|& \leq \sum_{j \in L(v)} \PP_{L}(\sigma_v \seq j) \cdot 
\big| \PP_L (\sigma_w \seq k \mid \sigma_v \seq i) 
- 
\PP_L (\sigma_w \seq k \mid \sigma_v \seq j) \big|\\ 
&\leq 
\big( 1-\PP_L(\sigma_v \seq k) \big) \cdot \hat{\II}_L[v \sra (w,k)] 
+ 
\PP_L(\sigma_v \seq k) \cdot \II_L[v \sra (w,k)],
\end{align*}
where the last inequality follows from the definitions of $\hat{\II}_L[v \sra (w,k)]$ and $\II_L[v \sra (w,k)]$. 
Now since $\hat{\II}_L[v \sra (w,k)] \le \II_L[v \sra (w,k)]$ and $\PP_L(\sigma_v \seq k) \le P_L(v)$, we deduce that
\[
\big| \PP_L (\sigma_w \seq k \mid \sigma_v \seq i) 
- 
\PP_L (\sigma_w \seq k) \big|
\le (1-P_L(v)) \cdot \hat{\II}_L[v \sra (w,k)] 
+ P_L(v) \cdot \II_L[v \sra (w,k)]. 
\]
The lemma then follows by taking maximum over $i\in L(v) \bs \{k\}$ on the left-hand side. 
\end{proof}

\subsubsection{Proof of \texorpdfstring{\cref{lem:M<=I+hatI}}{Lemma 23}}
\label{subsec:pf-M}
We are now ready to prove Lemma~\cref{lem:M<=I+hatI}, which we restate here for convenience.
\begin{lemMleq}
\statelemMleq
\end{lemMleq}
\begin{proof}
For convenience, we drop the subscript $G$ from notation. We consider separately the terms where $k\neq i$ and $k=i$.  By \cref{claim:J-K}, we get
\begin{align}
\sum_{w \in V \bs \{v\}} \sum_{k\in [q] \bs \{i\}} \Big| \MM_L \big((v,i),(w,k)\big) \Big|&\leq  
\sum_{w \in V \bs \{v\}} \sum_{k\in [q] \bs \{i\}} 
\hat{\JJ}_L[v \to (w,k)]\notag\\
& \leq
\sum_{w \in V \bs \{v\}} \sum_{k\in [q] \bs \{i\}} 
\left( \hat{\II}_L[v \sra (w,k)] + P_L(v) \cdot \II_L[v \sra (w,k)] \right)\notag\\
&\leq \Delta_G(v) \left( \hat{\II}_L^*(v) + P_L(v) \cdot \II_L^*(v) \right). \label{eq:4f45f5f5rfa}
\end{align}
Note that 
\[\MM_L \big((v,i),(w,i)\big)=\PP_L (\sigma_w \seq i \mid \sigma_v \seq i) - \PP_L (\sigma_w \seq i)=
- \sum_{k \in [q] \bs \{i\}} 
\Big( \PP_L (\sigma_w \seq k \mid \sigma_v \seq i) - \PP_L (\sigma_w \seq k) \Big)\]
and hence using the triangle inequality we obtain that 
\begin{align}
\sum_{w \in V \bs \{v\}} \Big| \MM_L \big((v,i),(w,i)\big) \Big| &\le \sum_{w \in V \bs \{v\}} \sum_{k \in [q] \bs \{i\}} \big| \PP_L (\sigma_w \seq k \mid \sigma_v \seq i) - \PP_L (\sigma_w \seq k) \big|\notag\\ 
&= \sum_{w \in V \bs \{v\}} \sum_{k\in [q] \bs \{i\}} \Big| \MM_{L} \big((v,i),(w,k)\big) \Big|\leq  
\sum_{w \in V \bs \{v\}} \sum_{k\in [q] \bs \{i\}} 
\hat{\JJ}_L[v \to (w,k)]\notag\\ 
&\le \Delta_G(v) \left( \hat{\II}_L^*(v) + P_L(v) \cdot \II_L^*(v) \right). \label{eq:4f45f5f5rfab}
\end{align}
The lemma then follows by adding \eqref{eq:4f45f5f5rfa} and \eqref{eq:4f45f5f5rfab}. 
\end{proof}

\subsubsection{Proof of \texorpdfstring{\cref{lem:hatinf-rec-bound}}{Lemma 24}}
We now prove \cref{lem:hatinf-rec-bound}. First, we establish a recursive inequality for the maximum biased influence on a specific pair $(w,k)$.
\begin{lem}\label{lem:hatinf(wk)-rec-bound}
Let $(G,\LL)$ be a collection of $(\Delta,q)$-list-coloring instances with $G=(V,E)$. Fix an arbitrary vertex $v\in V$ and let $w\in V\backslash \{v\}$ and $k \in [q]$. 
Then, with $G_v,\LL_v$ as in~\cref{def:GvLv}, we have 
\[
\hat{\II}_{G,\LL}[v \sra (w,k)] \le 
\sum_{u\in N_G(v)} R_{G_v,\LL_v}(u) \cdot 
\left( \hat{\II}_{G_v,\LL_v}[u \sra (w,k)] 
+ 
R_{G_v,\LL_v}(u) \cdot \II_{G_v,\LL_v}[u \sra (w,k)] \right);
\]
where $R_{G_v,\LL_v}(u) = \max_{L \in \LL_v} \max_{c\in L(u)} 
\frac{\PP_{G_v,L}(\sigma_u \seq c)}{\PP_{G_v,L}(\sigma_u \sneq c)}$.
\end{lem}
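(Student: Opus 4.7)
The plan is to adapt the proof of \cref{lem:inf-rec-bound}, restricting the maximum over color choices at $v$ to colors different from~$k$ and tracking the resulting arithmetic loss carefully. Fix $L \in \LL$ and $i, j \in L(v) \setminus \{k\}$, and apply \cref{lem:rec-inf} to $i, j$. Using the shorthand $r_{u,c} := \PP^{ij}_u(\sigma_u \seq c)/\PP^{ij}_u(\sigma_u \sneq c)$ and $M_{u,c} := \MM^{ij}_u\bigl((u,c),(w,k)\bigr)$, each summand on the right-hand side of \cref{lem:rec-inf} has the form $r_{u,j} M_{u,j} - r_{u,i} M_{u,i}$, with each ratio at most $R := R_{G_v, \LL_v}(u)$. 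The crux of the argument is the per-summand bound
\[
\bigl| r_{u,j} M_{u,j} - r_{u,i} M_{u,i} \bigr| \le R \cdot \hat{\II}^{ij}_u[u \sra (w,k)] + R^2 \cdot \II^{ij}_u[u \sra (w,k)],
\]
after which the lemma follows by the triangle inequality, summation over $u$, and taking maxima over $i, j \in L(v) \setminus \{k\}$ and over $L \in \LL$ (using $L^{ij}_u \in \LL_v$ to dominate $\II^{ij}_u$ and $\hat{\II}^{ij}_u$ by their $(G_v, \LL_v)$-versions).

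For the per-summand bound, set $\hat{m} := \min_{c \in L^{ij}_u(u) \setminus \{k\}} M_{u,c}$ and $\hat{M} := \max_{c \in L^{ij}_u(u) \setminus \{k\}} M_{u,c}$, so that $\hat{M} - \hat{m} = \hat{\II}^{ij}_u[u \sra (w,k)]$. Since $i, j \ne k$, we have $M_{u,i}, M_{u,j} \in [\hat{m}, \hat{M}]$, and nonnegativity of the ratios yields the crude estimate $r_{u,j} M_{u,j} - r_{u,i} M_{u,i} \le r_{u,j}\hat{M} - r_{u,i}\hat{m}$. The identity $\sum_c \PP^{ij}_u(\sigma_u \seq c) M_{u,c} = 0$ (invoked in the proof of \cref{lem:inf-rec-bound}), after isolating the $c = k$ term and dividing by $\PP^{ij}_u(\sigma_u \sneq k)$, rearranges to $\hat{m} \le -r_{u,k} M_{u,k} \le \hat{M}$. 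This lets me write the decomposition
\[
r_{u,j}\hat{M} - r_{u,i}\hat{m} = r_{u,j}\bigl(\hat{M} + r_{u,k} M_{u,k}\bigr) - r_{u,i}\bigl(\hat{m} + r_{u,k} M_{u,k}\bigr) + (r_{u,i} - r_{u,j})\, r_{u,k} M_{u,k},
\]
in which the first two terms have definite sign (nonnegative and nonpositive respectively), so replacing each ratio by $R$ bounds their sum by $R(\hat{M} - \hat{m}) = R \cdot \hat{\II}^{ij}_u[u \sra (w,k)]$; the cross term is bounded by $R^2 |M_{u,k}| \le R^2 \cdot \II^{ij}_u[u \sra (w,k)]$ using $r_{u,k} \le R$ and \cref{obs:bound}. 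Swapping the roles of $i$ and $j$ handles the absolute value.

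The main obstacle I anticipate is the cross term $(r_{u,i} - r_{u,j})\, r_{u,k} M_{u,k}$: it is precisely the contribution that cannot be absorbed into the biased influence $\hat{\II}^{ij}_u[u \sra (w,k)]$, and it is responsible for the appearance of the full (unbiased) $\II^{ij}_u[u \sra (w,k)]$ with coefficient $R^2$ on the right-hand side. Fortunately, the extra factor of $R$ encodes the small marginal probability of color $k$ at $u$, which in our regime (via \cref{lem:bound-ratio}) is of order $1/q$; this is precisely the saving that ultimately drives the $O(\Delta/q)$ bound on the row sums of $\MM_{G,L}$ in the proof of \cref{thm:lambda1}.
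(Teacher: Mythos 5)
Your proof is correct, and while it arrives at the same per-neighbor bound $R_v(u)\bigl(\hat{\II}^{ij}_u[u\sra(w,k)] + R_v(u)\,\II^{ij}_u[u\sra(w,k)]\bigr)$ as the paper, the route is genuinely different. The paper replaces the actual colors $i,j$ by worst-case colors via positive/negative parts, i.e.\ bounds each term by $R_v(u)\bigl[(\MM^{ij}_u((u,j),(w,k)))^+ + (\MM^{ij}_u((u,i),(w,k)))^-\bigr]$, recognizes the maximum of this expression over $i',j'\in L(u)\bs\{k\}$ as $\max\{\hat\II^{ij}_u,\,\hat\JJ^{ij}_u\}$ by a sign case analysis, and then invokes \cref{claim:J-K} (which itself encodes the total-probability identity) to convert $\hat\JJ^{ij}_u$ into $\hat\II^{ij}_u + R_v(u)\,\II^{ij}_u$. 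You bypass $\hat\JJ$ and \cref{claim:J-K} entirely: you pass to the worst-case colors by $r_{u,j}M_{u,j}-r_{u,i}M_{u,i}\le r_{u,j}\hat M - r_{u,i}\hat m$, then use the total-probability identity directly to produce the anchor $-r_{u,k}M_{u,k}\in[\hat m,\hat M]$, shift both terms by this anchor so that each acquires a definite sign (allowing the ratios to be replaced uniformly by $R$), and pay for the shift with the cross term $(r_{u,i}-r_{u,j})r_{u,k}M_{u,k}$, which is where the factor $R^2\,\II$ comes from. This self-contained algebraic decomposition is arguably cleaner as a one-off computation; the paper's choice to factor the argument through $\hat\JJ$ and \cref{claim:J-K} is motivated by the fact that the same lemma is reused again in the proof of \cref{lem:M<=I+hatI}, so the modularization pays off elsewhere. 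Two small notational points you should tighten in a full write-up: the claim $M_{u,i},M_{u,j}\in[\hat m,\hat M]$ needs the proviso that $i$ (resp.\ $j$) actually lies in $L^{ij}_u(u)$ — otherwise the corresponding ratio $r_{u,i}$ (resp.\ $r_{u,j}$) is zero and the term vanishes so the bound is trivial; and in the phrase ``the first two terms have definite sign (nonnegative and nonpositive respectively),'' it is $r_{u,i}(\hat m + r_{u,k}M_{u,k})$ without the leading minus that is nonpositive, which is what you then use.
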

\begin{proof}
Let $L\in \LL$ and $i,j \in L(v) \bs \{k\}$.  For simplicity, we will use the shorthands $\PP:=\PP_{G,L}$ and, for $u\in N_G(v)$, 
\[\PP^{ij}_{u}:=\PP_{G_v,L_u^{ij}},\quad  \MM^{ij}_{u}:=\MM_{G_v,L_u^{ij}},\quad \hat{\II}^{ij}_u:=\hat{\II}_{G_v,L_u^{ij}},\quad R_v(u):=R_{G_v,\LL_v}(u).\]
Let also $P^{ij}_{w,k}:=\PP(\sigma_w \seq k \mid \sigma_v \seq i) - 
\PP(\sigma_w \seq k \mid \sigma_v \seq j)$, so that from \cref{lem:rec-inf} we have
\begin{equation*}
P^{ij}_{w,k}=\sum_{u \in N_G(v)} 
\frac{\PP^{ij}_{u}(\sigma_u \seq j)}{\PP^{ij}_{u}(\sigma_u \sneq j)} \cdot \MM^{ij}_{u}\big((u,j), (w,k)\big)-\frac{\PP^{ij}_{u}(\sigma_u \seq i)}{\PP^{ij}_{u}(\sigma_u \sneq i)} \cdot \MM^{ij}_{u}\big((u,i), (w,k)\big).
\end{equation*}
Define $x^+ = \max\{x,0\}$ and $x^- = -\min\{x,0\}$ for $x \in \R$. From \cref{lem:rec-inf} we have
\begin{align*}
P^{ij}_{w,k}&\leq \sum_{u \in N_G(v)} 
R_v(u) \cdot 
\left[ 
\left( \MM^{ij}_u\big((u,j),(w,k)\big) \right)^+ +  \left( \MM^{ij}_u\big((u,i),(w,k)\big) \right)^-\right]\\ 
& \leq 
\sum_{u \in N_G(v)} 
R_v(u) \cdot 
\max_{i',j' \in L(u) \bs \{k\}}
\left[ \left( \MM^{ij}_u\big((u,j'),(w,k)\big) \right)^+ + \left(\MM^{ij}_u\big((u,i'),(w,k)\big) \right)^- \right]\\ 
&\leq 
\sum_{u \in N_G(v)} 
R_v(u) \cdot 
\max \left\{ \hat{\II}^{ij}_u[u \sra (w,k)],\, \hat{\JJ}^{ij}_u[u \sra (w,k)] \right\}.
\end{align*}
By \cref{claim:J-K}, for $u\in N_G(v)$ we can bound $\hat{\JJ}^{ij}_u[u \sra (w,k)]$ by $\hat{\II}^{ij}_u[u \sra (w,k)] + R_v(u) \cdot \II^{ij}_u[u \sra (w,k)]$.  Therefore, we get
\[
P^{ij}_{w,k}
\leq 
 \sum_{u \in N_G(v)} 
R_v(u) \cdot \left( 
\hat{\II}^{ij}_u[u \sra (w,k)] + R(u) \cdot \II^{ij}_u[u \sra (w,k)] \right).
\]
Taking maximum over $L\in \LL$ and $i,j \in L(v) \bs \{k\}$, we obtain the lemma. 
\end{proof}

We then deduce \cref{lem:hatinf-rec-bound} from \cref{lem:hatinf(wk)-rec-bound}. 
\begin{lemhatinf}
\statelemhatinf
\end{lemhatinf}
\begin{proof}
For convenience, we will drop the subscripts $G,\LL$ from influences  and use the subscript $v$ as a shorthand for the subscripts $G_v,\LL_v$ of influences and the $R$-quantity. By \cref{lem:hatinf(wk)-rec-bound}, we have
\begin{align*}
\hat{\II}^*&(v) 
= \frac{1}{\Delta_G(v)} \sum_{w\in V \bs \{v\}} \sum_{k\in [q]} \hat{\II}[v \sra (w,k)]\\ 
&\le \frac{1}{\Delta_G(v)} \sum_{w\in V \bs \{v\}} \sum_{k\in [q]} 
\sum_{u\in N_G(v)} R_v(u) \cdot 
\left( \hat{\II}_{v}[u \sra (w,k)] 
+ 
R_{v}(u) \cdot \II_{v}[u \sra (w,k)] \right)\\ 
&= \frac{1}{\Delta_G(v)} \sum_{u\in N_G(v)} R_{v}(u) \cdot 
\bigg(
	\sum_{w\in V \bs \{v\}} \sum_{k\in [q]} \hat{\II}_{v}[u \sra (w,k)] 
	+ 
	R_{v}(u) \sum_{w\in V \bs \{v\}} \sum_{k\in [q]} \II_{v}[u \sra (w,k)]
\bigg).
\end{align*}
Now we have
\begin{align*}
\sum_{w\in V \bs \{v\}} \sum_{k\in [q]} \II_{v}[u \sra (w,k)] 
&= \sum_{w\in V \bs \{v,u\}} \sum_{k\in [q]} \II_{v}[u \sra (w,k)]
+ \sum_{k\in [q]} \II_{v}[u \sra (u,k)]\le \Delta_{G\bs v}(u) \cdot \II_{v}^*(u) + q.
\end{align*}
Observing further that 
$$
\hat{\II}_{v}[u \sra (u,k)] 
= \max_{L \in \LL_v} \max_{i,j\in L(u) \bs \{k\}} 
\left| \PP_{G_v,L} (\sigma_u \seq k \mid \sigma_u \seq i) 
- 
\PP_{G_v,L} (\sigma_u \seq k \mid \sigma_u \seq j) \right| = 0,
$$ 
we have 
\begin{align*}
\sum_{w\in V \bs \{v\}} \sum_{k\in [q]} \hat{\II}_{v}[u \sra (w,k)] 
&= \sum_{w\in V \bs \{v,u\}} \sum_{k\in [q]} \hat{\II}_{v}[u \sra (w,k)]
+ \sum_{k\in [q]} \hat{\II}_{v}[u \sra (u,k)]= \Delta_{G\bs v}(u) \cdot \hat{\II}_{v}^*(u).
\end{align*}
Hence, we deduce that
\begin{align*}
\hat{\II}_\LL^*(v) 
&\le \frac{1}{\Delta_G(v)} \sum_{u\in N_G(v)} R_{v}(u) \cdot 
\left[
	\Delta_{G_v}(u) \cdot \hat{\II}_{v}^*(u) 
	+ 
	R_{v}(u) \cdot \left( \Delta_{G_v}(u) \cdot \II_{v}^*(u) + q \right)
\right]\\
&\le \max_{u\in N_G(v)} 
\left\{ R_{v}(u) \cdot 
	\left[
		\Delta_{G_v}(u) \cdot \hat{\II}_{v}^*(u) 
		+ 
		R_{v}(u) \cdot \left( \Delta_{G_v}(u) \cdot \II_{v}^*(u) + q \right)
	\right]
\right\},
\end{align*}
which finishes the proof of the lemma. 
\end{proof}

\section{Remaining proofs: recursion and marginal bounds}
\label{sec:recboundmarg}
In this section, we give the proof of~\cref{lem:rec-inf} and~\cref{lem:bound-ratio}, which were used in the proof of~\cref{thm:lambda1}.
\subsection{Proof of \texorpdfstring{\cref{lem:rec-inf}}{Lemma 14}}
\label{sec:rec-inf}
In this section, we prove the recursion of \cref{lem:rec-inf} which we restate here for convenience.
\begin{lemrecinf}
\statelemrecinf
\end{lemrecinf}
\begin{proof}
For convenience, set $P:=\PP(\sigma_w \seq k \mid \sigma_v \seq j) -\PP(\sigma_w \seq k \mid \sigma_v \seq i)$.

Let $d=\Delta_G(v)$ and $u_1,\hdots, u_d$ be the neighbors of $v$ in $G$ in the order prescribed by the labelling on $G$. Let $N=N_G(v)$ and, for $t=1,\hdots,d$, let $N_{t}=\{u_1,\hdots u_{t-1}\}$ be the set of  vertices preceding $v_t$. Then, with $G_v=G\backslash v$ and $L_v=\{L(u)\}_{u\in V\bs\{v\}}$, we have
\begin{align*}
P&=\PP_{G_v,L_v}\big(\sigma_w \seq k,\, j\notin \sigma_N\big)  -\PP_{G_v,L_v}\big(\sigma_w \seq k,\, i\notin  \sigma_N\big)\\
&=\sum^{d}_{t=1}\PP_{G_v,L_v}\big(\sigma_w \seq k,\, i\notin \sigma_{N_{t}},\, j\notin \sigma_{N\bs N_t} \big)  -\PP_{G_v,L_v}\big(\sigma_w \seq k,\, i\notin \sigma_{N_{t+1}},\, j\notin \sigma_{N\bs N_{t+1}} \big)\\
&=\sum_{u\in N_G(v)} \PP^{ij}_u(\sigma_w \seq k \mid \sigma_u \sneq j) - \PP^{ij}_u(\sigma_w \seq k \mid \sigma_u \sneq i).  
\end{align*}
Now, for $u\in N_G(v)$, we have that
\begin{align*}
\PP^{ij}_u(\sigma_w \seq k \mid \sigma_u \sneq i) - \PP^{ij}_u(\sigma_w \seq k)=\begin{cases} 0,& \mbox{~if~} i\notin L(u),\\ -\displaystyle\frac{\PP^{ij}_u(\sigma_u \seq i)}{\PP^{ij}_u(\sigma_u \sneq i)} \cdot \MM^{ij}_u\big((u,i),(w,k)\big),&\mbox{~if~} i\in L(u).\end{cases} 
\end{align*}
Summing this over $u\in N_G(v)$ yields the equality in the lemma.
\end{proof}

\subsection{Bounding marginal probabilities}
\label{sec:margin}
In this section, we prove \cref{lem:bound-ratio}. For integers $\Delta, q \ge 3$ with $q \ge \Delta+1$, the following function will be relevant for this section:
\begin{equation}\label{eq:Phi}
\Phi(\Delta,q) = \frac{q-2}{\Delta-1} \cdot \bigg[ \Big( 1 - \frac{1}{q-\Delta+1} \Big)^{q-\Delta+1} \bigg]^{\frac{\Delta-1}{q-2}}. 
\end{equation}
The following lemma is implicitly given in \cite{GMP} in their proof of Lemma 15. 
Here we present a more direct proof, combining ideas from both \cite{GMP} and \cite{GKM}. 
\begin{lem}\label{lem:new-ratio-bound}
Suppose that $(G,L)$ is a $(\Delta,q)$-list-coloring instance with $G=(V,E)$ a triangle-free graph. 
Then for every vertex $u\in V$ of degree at most $\Delta-1$ and every color $c\in L(u)$, we have 
\[
\frac{\PP_{G,L}(\sigma_u \seq c)}{\PP_{G,L}(\sigma_u \sneq c)} 
\le \frac{1}{\Phi(\Delta,q)} \cdot \frac{1}{\Delta_G(u)}. 
\]
\end{lem}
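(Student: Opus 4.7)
I plan to prove the bound by strong induction on $|V(G)|$; when $\Delta_G(u)=0$ the right-hand side is infinite so the bound is vacuous, and we may assume $d:=\Delta_G(u)\ge 1$. The starting point is the standard decomposition of the marginal at $u$ as a ratio of partition functions on $G\setminus u$: letting $N:=N_G(u)=\{w_1,\ldots,w_d\}$ and $L_u$ denote $L$ restricted to $V\setminus\{u\}$, partitioning $\Omega_{G,L}$ by the color of $u$ yields
\[
\PP_{G,L}(\sigma_u=c) \;=\; \frac{Q(c)}{\sum_{c'\in L(u)}Q(c')}, \qquad Q(c') \,:=\, \PP_{G\setminus u,L_u}(c'\notin\sigma_N),
\]
so the target ratio equals $Q(c)/\sum_{c'\in L(u)\setminus\{c\}}Q(c')$. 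The trivial bound $Q(c)\le 1$ takes care of the numerator, and the bulk of the work is a uniform lower bound on $Q(c')$.

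The triangle-free hypothesis enters precisely here: $N$ is an independent set of $G\setminus u$. Fixing any ordering of $N$, the chain rule gives
\[
Q(c') \;=\; \prod_{i=1}^d \PP\bigl(\sigma_{w_i}\neq c' \,\bigm|\, \sigma_{w_j}\neq c' \text{ for all } j<i\bigr),
\]
and each conditional probability equals the marginal of $w_i$ in the modified instance $(G\setminus u, L^{(i)})$ obtained from $L_u$ by removing $c'$ from the lists of $w_1,\ldots,w_{i-1}$. A short calculation using $|L(w_j)|\ge q-\Delta+\Delta_G(w_j)=q-\Delta+\Delta_{G\setminus u}(w_j)+1$ confirms that $(G\setminus u, L^{(i)})$ remains a $(\Delta,q)$-list-coloring instance, so the inductive hypothesis applies and yields $\PP(\sigma_{w_i}=c'\mid\cdot)\le 1/(\Phi(\Delta,q)\,d_i+1)$, where $d_i:=\Delta_{G\setminus u}(w_i)=\Delta_G(w_i)-1$, whenever $d_i\ge 1$. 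In the degenerate case $d_i=0$, the vertex $w_i$ is isolated in $G\setminus u$ and its marginal is uniform on $L(w_i)$, giving the direct bound $\PP(\sigma_{w_i}=c'\mid\cdot)\le 1/|L(w_i)|\le 1/(q-\Delta+1)$ instead.

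Substituting the per-factor bounds into the chain product and using $|L(u)|\ge q-\Delta+d$ reduces the lemma to an algebraic inequality in $\Phi(\Delta,q)$. The specific expression for $\Phi$ in~\eqref{eq:Phi} is engineered to be exactly the fixed point of this tree recursion in the extremal case where every $w_i$ has $d_i=\Delta-1$ and $|L(u)|-1=q-1$, so the inequality is tight there. To extend to arbitrary degree sequences $(d_1,\ldots,d_d)$ subject to $\sum_i d_i\le d(\Delta-1)$, I would apply Jensen's inequality to the logarithm of the chain product, reducing to this uniform extremal case; the base $(1-1/(q-\Delta+1))^{q-\Delta+1}$ and exponent $(\Delta-1)/(q-2)$ in $\Phi$ are precisely what is needed for the resulting expression to telescope correctly.

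The main obstacle is this last algebraic step: the precise form of $\Phi(\Delta,q)$ must simultaneously accommodate the two boundary regimes for the factors (``leaves'' with $d_i=0$ handled by the direct bound, versus ``internal'' vertices with $d_i=\Delta-1$ handled by induction), and the Jensen reduction must interpolate between them without loss. This is also the step where the threshold $\alpha^*$ (defined by $e^{1/\alpha^*}=\alpha^*$) is felt implicitly: for large $\Delta$, $\Phi(\Delta,q)$ tends to $1$ exactly as $q$ approaches $\alpha^*\Delta$, which is why the lemma is useful beyond the threshold and why there is no slack to weaken the definition of $\Phi$.
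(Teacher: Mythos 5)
Your proof takes a genuinely different route from the paper's: the paper conditions on a partial coloring of $V\setminus(\{u\}\cup N_G(u))$ to reduce at once to a star, then derives $\Phi$ from a direct arithmetic--geometric mean computation with no induction; you instead pass to $G\setminus u$, factor $Q(c')$ by the chain rule over the (independent) neighborhood, and apply the lemma inductively to each neighbor.

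There is, however, a genuine gap in your approach: the ``uniform lower bound on $Q(c')$'' cannot work. Applying the inductive hypothesis to each factor gives $Q(c') \ge \prod_i \tfrac{\Phi d_i}{\Phi d_i+1}$ (or $\tfrac{q-\Delta}{q-\Delta+1}$ for leaves), and this per-factor bound is tight only when $c'\in L(w_i)$; but each $L(w_i)$ contains only $|L(w_i)|$ colors, so the worst case cannot occur for all $c'\in L(u)\setminus\{c\}$ simultaneously. The paper's AM--GM step bounds the aggregate $\sum_{c'\neq c}Q(c')$ precisely to exploit the conservation law $\sum_{c'\in L(u)\setminus\{c\}}\delta_{c'}(w)\le|L(w)|$; your per-$c'$ bound discards this. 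Concretely, take the star where $u$ has degree $d=\Delta-1$, each leaf has $d_i=0$, $|L(u)|=q-1$, and $|L(w_i)|=q-\Delta+1$. Your argument needs
\[
(q-2)\left(\frac{q-\Delta}{q-\Delta+1}\right)^{\Delta-1}\;\ge\;\Phi(\Delta,q)\,(\Delta-1)
\;=\;(q-2)\left[\left(\frac{q-\Delta}{q-\Delta+1}\right)^{q-\Delta+1}\right]^{\frac{\Delta-1}{q-2}},
\]
which, comparing exponents of a base in $(0,1)$, is equivalent to $q-2\le q-\Delta+1$, i.e., $\Delta\le3$. For example with $\Delta=4$, $q=8$ the left side is $6(4/5)^3\approx3.07$ while the right side is $\Phi(4,8)\cdot3\approx3.43$, so the inequality fails. (The lemma itself still holds there --- the actual $\sum_{c'}Q(c')$ is about $4.05$ --- but your lower bound on it is too weak.) The Jensen step you sketch only smooths over the degree sequence $(d_1,\ldots,d_d)$, not over $c'$, so it does not repair this. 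Relatedly, $\Phi$ is not a ``fixed point'' of the tree recursion in the sense you describe; it arises from the aggregate AM--GM bound together with the monotonicity of $y\mapsto y\,b^{1/y}$.
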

\begin{proof}
By the law of total probability, it suffices to give an upper bound on $\frac{\PP_{G_\tau,L_\tau}(\sigma_u \seq c)}{\PP_{G_\tau,L_\tau}(\sigma_u \sneq c)}$ for an arbitrary partial list-coloring $\tau$ on $V\backslash ({u}\cup N_G(u))$. In turn, since $(G_\tau,L_\tau)$ is also a $(\Delta,q)$-list-coloring instance 
(see for example the argument in the proof of~\cref{thm:main} in~\cref{sec:proofmain}) 
and $G_\tau$ is a star graph centered at $u$, it suffices to prove the lemma when $G$ is a star graph centered at $u$. Henceforth, for convenience,  we drop the subscript $G$ from notation.

For $w \in N_G(u)$ and $c \in L(u)$ we define $\delta_c(w) = \one\{c \in L(w)\}$. 
For any $c,c' \in L(u)$, we have
\[
\frac{\PP_L(\sigma_u \seq c')}{\PP_L(\sigma_u \seq c)} = \prod_{w \in N_G(u)} \frac{|L(w)| - \delta_{c'}(w)}{|L(w)| - \delta_c(w)} 
\ge \prod_{w \in N_G(u)} \left( 1 - \frac{\delta_{c'}(w)}{|L(w)|} \right) 
= \prod_{w \in N_G(u)} \left( 1 - \frac{1}{|L(w)|} \right)^{\delta_{c'}(w)}. 
\]
From this, and using the arithmetic-geometric mean inequality, it follows that
\begin{align*}
\frac{\PP_L(\sigma_u \sneq c)}{\PP_L(\sigma_u \seq c)} 
&= \sum_{c' \in L(u) \bs \{c\}} \prod_{w \in N_G(u)} \bigg( 1 - \frac{1}{|L(w)|} \bigg)^{\delta_{c'}(w)}\\ 
&\ge \big(|L(u)|-1\big) \bigg( \prod_{c' \in L(u) \bs \{c\}} \prod_{w \in N_G(u)} \left( 1 - \frac{1}{|L(w)|} \right)^{\delta_{c'}(w)} \bigg)^{\frac{1}{|L(u)|-1}}\\
&= \big(|L(u)|-1\big) \bigg( \prod_{w \in N_G(u)} \left( 1 - \frac{1}{|L(w)|} \right)^{\sum_{c' \in L(u) \bs \{c\}} \delta_{c'}(w)} \bigg)^{\frac{1}{|L(u)|-1}}\\ 
&\ge \big(|L(u)|-1\big) \bigg( \prod_{w \in N_G(u)} \left( 1 - \frac{1}{|L(w)|} \right)^{|L(w)|} \bigg)^{\frac{1}{|L(u)|-1}}.
\end{align*}
Since $(1-1/m)^m$ is an increasing sequence in $m$ and $|L(w)| \ge q-\Delta+1$, we get 
\begin{equation*}
\frac{1}{\Delta_G(u)} \cdot \frac{\PP_L(\sigma_u \sneq c)}{\PP_L(\sigma_u \seq c)} 
\ge \frac{|L(u)|-1}{\Delta_G(u)} \cdot \left[ \left( 1 - \frac{1}{q-\Delta+1} \right)^{q-\Delta+1} \right]^{\frac{\Delta_G(u)}{|L(u)|-1}}. 
\end{equation*}
Since we have
\[
\frac{|L(u)|-1}{\Delta_G(u)} \ge \frac{q-\Delta-1}{\Delta_G(u)} + 1 \ge \frac{q-\Delta-1}{\Delta-1} + 1 = \frac{q-2}{\Delta-1}, 
\]
we deduce that
\[
\frac{1}{\Delta_G(u)} \cdot \frac{\PP_L(\sigma_u \sneq c)}{\PP_L(\sigma_u \seq c)} 
\ge \frac{q-2}{\Delta-1} \cdot \left[ \left( 1 - \frac{1}{q-\Delta+1} \right)^{q-\Delta+1} \right]^{\frac{\Delta-1}{q-2}} 
= \Phi(\Delta,q). 
\]
This shows the lemma. 
\end{proof}
We then give a lower bound on the key function $\Phi(\Delta,q)$ defined in \cref{lem:new-ratio-bound} when $(\Delta,q) \in \Lambda_\eps$. 
Relevant to~\cref{foot:GMP}, numerical experiments demonstrate that $\Phi(\Delta,q) < 1$ when $q = \alpha \Delta$ for $\alpha$ very close to $\alpha^*$, indicating that the current proof approach cannot go beyond $q \ge \alpha \Delta$. 
\begin{lem}\label{lem:Phi-bound}
For every $\eps > 0$ and $(\Delta,q) \in \Lambda_\eps$, we have $\Phi(\Delta,q) \ge 1+\left(1+\frac{1}{\alpha^*}\right)\eps$. 
\end{lem}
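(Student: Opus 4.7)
The plan is to take logarithms and exploit the relation $g(\alpha^*) = 0$, where $g(r) := \log r - 1/r$ (this is just a rewriting of the defining equation $e^{1/\alpha^*} = \alpha^*$). Setting $r := (q-2)/(\Delta-1)$ and $t := q - \Delta + 1$, a direct rewriting gives $\log \Phi(\Delta,q) = g(r) + (1 + t\log(1-1/t))/r$. Expanding the Taylor series $-t\log(1-1/t) = 1 + \sum_{k\ge 1} \frac{1}{(k+1)t^k}$ and bounding the tail by a geometric series yields $-t\log(1-1/t) \le 1 + \frac{1}{2(t-1)}$, and hence
\[
\log \Phi(\Delta,q) \;\ge\; g(r) - \frac{1}{2r(t-1)}.
\]

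From the assumption $q \ge \alpha\Delta + \beta$ with $\beta = 2-\alpha + \frac{\alpha}{2(\alpha^2-1)}$, a direct calculation gives $r \ge \alpha + \delta$ where $\delta := \frac{\alpha}{2(\alpha^2-1)(\Delta-1)}$; the precise form of $\beta$ is chosen exactly to yield this extra $\delta$ on top of $\alpha$. Thus it suffices to establish two inequalities: (a) $g(\alpha) \ge \log(1+(1+1/\alpha^*)\eps)$, and (b) $g(r) - g(\alpha) \ge \frac{1}{2r(t-1)}$.

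For (a), using $\log \alpha^* = 1/\alpha^*$ and $\alpha = (1+\eps)\alpha^*$, a short computation gives $g(\alpha) = \log(1+\eps) + \frac{\eps}{(1+\eps)\alpha^*}$. The required inequality then reduces to $\log(1+x) \le x$ applied to $x = \frac{\eps/\alpha^*}{1+\eps}$. For (b), I rewrite $g(r) - g(\alpha) = \log(1+\delta/\alpha) + \delta/(\alpha r)$ and apply the elementary bound $\log(1+x) \ge x/(1+x)$ to obtain $g(r) - g(\alpha) \ge \delta(\alpha+1)/(\alpha r)$. This exceeds $\frac{1}{2r(t-1)}$ precisely when $t-1 \ge (\alpha-1)(\Delta-1)$, which follows from $t-1 = q-\Delta \ge (\alpha-1)\Delta + \beta$ using only $\beta + (\alpha-1) \ge 0$; the latter is trivial since $\beta > 0$ and $\alpha > 1$.

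The main subtlety I expect is in calibrating $\beta$ against the finite-$t$ error: the extra slack $\delta$ in $r - \alpha$ coming from $\beta$ must precisely compensate the loss $\frac{1}{2r(t-1)}$ through the \emph{refined} tangent bound $\log(1+x) \ge x/(1+x)$. This is where the specific constant $\frac{\alpha}{2(\alpha^2-1)}$ in $\beta$ originates; using a cruder bound such as $\log(1+x) \ge x - x^2/2$ or the naive $\delta/r$ would leave a gap and force a larger value of $\beta$.
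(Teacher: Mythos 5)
Your proof is correct and follows essentially the same approach as the paper: both rely on the tail bound $-t\log(1-1/t)\le 1+\frac{1}{2(t-1)}$ (the paper cites this as \cite[Lemma~17(ii)]{GMP}), the observation $t-1=q-\Delta\ge(\alpha-1)(\Delta-1)$, and the precise form of $\beta$ so that the slack in $r-\alpha$ absorbs the finite-$t$ loss via a tangent-line inequality; you simply reorganize the estimate in log-space through $g(r)=\log r-1/r$ and split off the clean $\eps$-dependence in step (a), whereas the paper manipulates $\Phi$ directly via $e^{-x}\ge 1-x$. One small inaccuracy in your justification of (b): $\beta=2-\alpha+\frac{\alpha}{2(\alpha^2-1)}$ is \emph{not} positive for all $\alpha>\alpha^*$ (it crosses zero around $\alpha\approx 2.3$), but the inequality you actually need, $\beta+\alpha-1\ge 0$, holds regardless since $\beta+\alpha-1=1+\frac{\alpha}{2(\alpha^2-1)}>0$.
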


\begin{proof}
Note that the condition $q \ge \alpha \Delta + \beta$ can be rewritten as 
\begin{equation}\label{eq:q-2-bound}
q-2 \ge \alpha(\Delta-1) + \frac{\alpha}{2(\alpha^2-1)}. 
\end{equation}
First by Lemma 17 (ii) of \cite{GMP}, which can be proved directly by comparing the power series expansions, we have
\[
-(q-\Delta+1) \log\left( 1-\frac{1}{q-\Delta+1} \right) \le 1 + \frac{1}{2(q-\Delta)}. 
\]
Since we have
\[
q-\Delta = (q-2) - (\Delta-1) + 1 > (\alpha-1)(\Delta-1), 
\]
it follows that
\[
\Phi(\Delta,q) 
\ge \frac{q-2}{\Delta-1} \cdot 
\exp\left[ - \left( 1+\frac{1}{2(\alpha-1)(\Delta-1)} \right) \cdot \frac{\Delta-1}{q-2} \right]. 
\]
Notice that the right-hand side above is monotone increasing in $q-2$. Plugging in \cref{eq:q-2-bound}, we deduce that
\begin{align*}
\Phi(\Delta,q) &\ge 
\alpha \left( 1+\frac{1}{2(\alpha^2-1)(\Delta-1)} \right) \cdot 
\exp\left[ - \frac{1}{\alpha} \cdot \frac{1+\frac{1}{2(\alpha-1)(\Delta-1)}}{1+\frac{1}{2(\alpha^2-1)(\Delta-1)}} \right]\\ 
&= \alpha \left( 1+\frac{1}{2(\alpha^2-1)(\Delta-1)} \right) \cdot 
\exp\left( -\frac{1}{\alpha} -\frac{1}{2(\alpha^2-1)(\Delta-1)+1} \right)\\ 
&\ge \alpha e^{-\frac{1}{\alpha}} \cdot \left( 1+\frac{1}{2(\alpha^2-1)(\Delta-1)} \right) \cdot \left( 1-\frac{1}{2(\alpha^2-1)(\Delta-1)+1} \right)\\
&= \alpha e^{-\frac{1}{\alpha}}. 
\end{align*}
Finally, since $\alpha = (1+\eps)\alpha^*$ and $\alpha^* e^{-1/\alpha^*} = 1$, we obtain
\[
\Phi(\Delta,q) 
\ge (1+\eps) \alpha^* e^{-\frac{1}{\alpha^*} + \frac{\eps}{\alpha}} 
\ge (1+\eps) \left(1+\frac{\eps}{\alpha}\right) 
= 1+\left(1+\frac{1}{\alpha^*}\right)\eps. \qedhere
\]
\end{proof}

We are now ready to prove \cref{lem:bound-ratio}. 
\begin{lemboundratio}
\statelemboundratio
\end{lemboundratio}
\begin{proof}[Proof of \cref{lem:bound-ratio}]
The bound $\frac{1}{(1+\eps)\Delta_G(u)}$ follows from \cref{lem:new-ratio-bound} and \cref{lem:Phi-bound}. 
For the second bound, first we have the following crude bound
\[
\PP_L(\sigma_u \seq c) \le \frac{1}{|L(u)| - \Delta_G(u)} \le \frac{1}{q-\Delta}. 
\]
Therefore, 
\[
\frac{\PP_L(\sigma_u \seq c)}{\PP_L(\sigma_u \sneq c)} \le \frac{1}{q-\Delta-1}. 
\]
Since $q-2 \ge \alpha(\Delta-1)$, we deduce that
\[
\frac{q-\Delta-1}{q} \ge \frac{(q-2) - (\Delta-1)}{(q-2)+(\Delta-1)} \ge \frac{\alpha-1}{\alpha+1} \ge \frac{1}{4}.  
\]
It then follows that $\PP_L(\sigma_u \seq c) / \PP_L(\sigma_u \sneq c) \le 4/q$. 
\end{proof}

\section{Proof of \texorpdfstring{\cref{thm:high-dimensional,thm:spectraleig}}{Theorems 6 and 8}}
\label{sec:aux}
In this section, we prove \cref{thm:high-dimensional,thm:spectraleig}. We begin with the proof of the latter which is inspired by spectral arguments in \cite{Oppenheim1, ALO}. Then, in~\cref{sec:complexes} we import the relevant results for general simplicial complexes from \cite{AL,ALO}  that we will need for the proof of~\cref{thm:high-dimensional} and apply these results in \cref{sec:application} to the case of list-colorings. 

\subsection{Proof of \texorpdfstring{\cref{thm:spectraleig}}{Theorem 8}}
\label{sec:spectraleig}
In this subsection, we prove~\cref{thm:spectraleig} which we restate here for convenience.
\begin{thmspectraleig}
\statethmspectraleig
\end{thmspectraleig}
\begin{proof}
Let $n=|V|\geq 2$.From~\cref{foot:ergod}, we have that for every $(v,i)\in U_{G,L}$ we have that $\PP_{G,L} (\sigma_v \seq i)>0$. Note that for $(v,i),(w,k)\in U_{G,L}$ we have
\begin{equation*}
\wPc\big((v,i),(w,k)\big) = \begin{cases} 0,& \mbox{ if $v=w$, }\\
\frac{1}{n-1}\PP_{G,L} (\sigma_w \seq k\mid \sigma_v \seq i),& \mbox{ if $v\neq w$}, \end{cases}
\end{equation*}
since the normalizing factor for the $(v,i)$-row of $\wPc$ equals 
\[\sum_{w'\in V\backslash \{v\}}\sum_{k'\in L(w')}\PP_{G,L} (\sigma_{w'} \seq k', \sigma_v \seq i)=(n-1)\PP_{G,L} (\sigma_v \seq i).\]
Note that since $\wPc$ corresponds to the transition matrix of a random walk, for the diagonal matrix $\Db$ with diagonal entries given by $\{\PP_{G,L} (\sigma_v \seq i)\}_{(v,i)\in U_{G,L}}$ the matrix $\Ab=\Db^{1/2} \wPc \Db^{-1/2}$ is symmetric and hence an eigenvector $\zb$ with eigenvalue $\lambda$ of $\Ab$ corresponds to the right eigenvector $\Db^{-1/2}\zb$ and the left eigenvector $\Db^{1/2}\zb$ of $\wPc$. 

To study more carefully the eigenvalues and eigenvectors of $\wPc$, consider the column vectors $\ones=\{1\}_{(w,k)\in U_{G,L}}$ and $\pib=\big\{\PP_{G,L}(\sigma_w \seq k)\big\}_{(w,k)\in U_{G,L}}$, and observe that these are the right and left eigenvectors of $\wPc$, respectively, with eigenvalue 1. For $v\in V$, consider further the column vectors $\ones_v,\pib_v$ whose $(w,k)$-entries for $(w,k)\in U_{G,L}$ is equal to 1 and $\PP_{G,L} (\sigma_w \seq k)$, respectively, if $w= v$ and $k\in L(v)$, and 0 otherwise. Note that 
\[\ones=\sum_{v\in V}\ones_v, \quad \pib=\sum_{v\in V}\pib_v.\] 
For $v\in V$, observe further  that $\wPc\ones_v=\tfrac{1}{n-1}(\ones-\ones_v)$ and hence
\begin{equation}\label{eq:Pmatrixeq}
\wPc(\tfrac{1}{n}\ones-\ones_v)=\tfrac{1}{n}\ones-\tfrac{1}{n-1}\ones+\tfrac{1}{n-1}\ones_v=-\tfrac{1}{n-1}(\tfrac{1}{n}\ones-\ones_v),
\end{equation}
i.e., $\tfrac{1}{n}\ones-\ones_v$ is a right eigenvector of $\wPc$ with eigenvalue $-\frac{1}{n-1}$, from where it follows that $\tfrac{1}{n}\pib-\pib_v$ is the corresponding left  eigenvector of $\wPc$.  Let $u\in V$ be an arbitrary vertex, and 
\[S=\big\{\ones\big\}\bigcup \cup_{w\in V\backslash \{u\}}\big\{\tfrac{1}{n}\ones-\ones_w\big\}.\]
Note that $S$ consists of right eigenvectors of $\wPc$ which are linearly independent. Using the correspondence between left/right eigenvectors of $\wPc$ and eigenvectors of $\Ab$, we can extend $S$ to an eigenbasis $\overline{S}=\{\zb_t\}_{t\in U_{G,L}}$ of right eigenvectors of $\wPc$ so that eigenvectors in $\overline{S}\backslash S$ are perpendicular to the left eigenvectors corresponding to $S$, i.e., 
\begin{equation}\label{eq:eigsystem}
\mbox{ for each $\zb\in \overline{S}\backslash S$ it holds that $\pib^{\intercal}\zb=0$ and  $(\tfrac{1}{n}\pib-\pib_v)^{\intercal}\zb=0$ for $v\in V$.}
\end{equation}
Note, the equality for $v=u$ in \eqref{eq:eigsystem} follows from the fact that the vectors $\{\tfrac{1}{n}\pib-\pib_v\}_{v\in V}$ sum to the zero vector. 

The desired result will follow by showing that all right eigenvectors of $\wPc$ in $\overline{S}\backslash S$ are right eigenvectors of $\MM$ as well with the same eigenvalue multiplied by $n-1$; the right eigenvectors of $\wPc$ in $S$ are also eigenvectors of $\MM$ but correspond to the eigenvalue 0 of the latter. Recall from \cref{def:matrixM} that $\MM\big((v,i),(w,k)\big)=\PP_{G,L} (\sigma_w \seq k \mid \sigma_v \seq i) - \PP_{G,L} (\sigma_w \seq k)$ if $v\neq w$, and 0 otherwise, and hence we have that 
\begin{equation}\label{eq:4ff5fwwed3}
\MM=(n-1)\wPc-\ones\pib^{\intercal}+\sum_{v\in V}\ones_v\pib_v^{\intercal},
\end{equation}
where the subtraction of $\ones\pib^{\intercal}$ accounts for the subtraction of $\PP_{G,L} (\sigma_w \seq k)$, and the addition of $\sum_{v\in V}\ones_v\pib_v^{\intercal}$ corrects the zero terms of $\wPc$ that were affected by the subtraction. Using \eqref{eq:Pmatrixeq}, \eqref{eq:4ff5fwwed3} and the fact that $\ones$ is an eigenvector of $\wPc$ with eigenvalue 1, it is not hard to verify  that the vectors $\ones$ and $\tfrac{1}{n}\ones-\ones_v$ for $v\in V$ lie in the null space of $\MM$, and hence so do the vectors in $S$. Consider now an arbitrary right eigenvector $\zb\in \overline{S}\backslash S$ of $\wPc$ with eigenvalue $\lambda$. Then, from \eqref{eq:eigsystem}, we obtain that $\MM \zb=(n-1)\wPc \zb=(n-1)\lambda\zb$ as wanted, finishing the proof.
\end{proof}

\subsection{Proof of \texorpdfstring{\cref{thm:high-dimensional}}{Theorem 6} via high-dimensional simplicial complexes}
\label{sec:high-dimensional}
In this subsection, we prove~\cref{thm:high-dimensional}.
\subsubsection{Preliminaries on high-dimensional simplicial complexes}\label{sec:complexes}
In this section, we import results from high-dimensional complexes that we use for the proof of~\cref{thm:high-dimensional}. The presentation here follows largely~\cite{AL,ALO}.

Let $U=[n]$ be a ground set of elements. A simplicial complex $X$ is a family of subsets of $U$ which is downward-closed (under set inclusion); sets in $X$ are called faces and the dimension of a face is the set's cardinality minus one. For $k\in \mathbb{Z}_{\geq 0}$, we let $X(k)$ denote the subset of $X$ consisting of faces with dimension $k$. The simplicial complex $X$ is called pure if every maximal face (under set inclusion) has the same dimension. 

A weighted pure simplicial complex is a pair $(X,w)$ where $X$ is a pure $d$-dimensional simplicial complex $X$ and  $w: X(d)\rightarrow \mathbb{R}_{>0}$ be  a positive weight function on the maximal faces of $X$. The weight function is extended to every $\tau\in X$ by $w(\tau)=\sum_{\sigma\in X(d); \tau\subseteq \sigma} w(\sigma)$.  For a face $\tau\in X$, the link of $\tau$ is the simplicial complex $X_{\tau}=\bigcup_{\sigma\in X; \tau\subseteq \sigma}\{\sigma\backslash \tau\}$; the  maximal faces of $X_{\tau}$ inherit the weight $w_\tau$ from $w$, which is defined by $w_\tau=w(\sigma\cup \tau)$ for each $\sigma\in X_{\tau}$. The 1-skeleton of $X_\tau$ is a weighted graph with vertex set  $V_\tau=\{u\in U\mid \{u\}\in X_\tau\}$, edge set $E_{\tau}=\{\{u,u'\}\mid u\neq u' \mbox{ and }\{u,u'\}\in X_{\tau} \}$, and weights on the edges given by  $w_{\tau}(\{u,u'\})$ for $(u,u')\in  X_{\tau}$.

For a weighted pure  simplicial complex $(X,w)$ with dimension $d$, we can define Glauber dynamics $(\sigma_t)_{t\geq 0}$ on the maximal faces of $X$ as follows. Start from an arbitrary maximal face $\sigma_0\in X(d)$. At each time $t\geq 0$ update the current face $\sigma_{t}\in X(d)$ to $\sigma_{t+1}\in X(d)$ by selecting an element $i\in \sigma$ uniformly at random, and setting $\sigma_{t+1}=\sigma_t\cup \{i\}\backslash \{j\}$, where $i\in U$ is chosen with probability proportional to $w(\sigma_t\cup \{i\}\backslash \{j\})$. We let $\Pc^{X,w}$ denote the transition matrix of the Glauber dynamics on $(X,w)$. 

For $\alpha\in [0,1)$, a face $\tau$ of $X$ is an $\alpha$-spectral expander of $(X,w)$ if the second largest eigenvalue of the simple non-lazy random walk on the 1-skeleton of $X_\tau$ is at most $\alpha$. The main theorem we will use about high-dimensional complexes is the following.
\begin{thm}[{\cite[Theorem~1.5]{AL}}]\label{thm:simplicialcomplex}
Let $(X,w)$ be a weighted pure $d$-dimensional simplicial complex and let $\Pc=\mathcal{P}^{X,w}$ be the transition matrix of Glauber dynamics on $(X,w)$. Suppose that for $k=-1,0,\hdots,d-2$ there exists $\alpha_{k}\in [0,1)$ such that  every $k$-dimensional face $\tau\in X(k)$ is an $\alpha_k$-spectral expander of $(X,w)$. 

Then, $\lambda_2(\Pc)\leq 1-\frac{1}{d+1}\prod^{d-2}_{k=-1}(1-\alpha_k)$.
\end{thm}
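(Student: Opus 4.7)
The plan is to realize the Glauber dynamics on $(G,L)$ as the Glauber dynamics on a suitable weighted pure simplicial complex $(X,w)$ and then invoke \cref{thm:simplicialcomplex}. I take the ground set to be $U_{G,L}$ and let $X$ consist of all subsets of $U_{G,L}$ of the form $\{(v,\tau(v)) : v \in S\}$, where $\tau$ is a partial list-coloring of $(G,L)$ on some $S\subseteq V$ (in the sense of \cref{foot:partial}). Downward closure is clear, and the maximal faces are exactly the elements of $\Omega_{G,L}$, each of size $n$, so $X$ is pure of dimension $d=n-1$. Each $\sigma\in X(d)$ receives the uniform weight $w(\sigma)=1$, so the stationary distribution of $\Pc^{X,w}$ matches $\PP_{G,L}$.

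Next I check that $\Pc^{X,w}=\Pc_{G,L}$: from a maximal face $\sigma\in X(d)$, swapping $(v,\sigma(v))$ for $(v,c)$ yields another maximal face iff $c\in L(v)\bs \sigma(N_G(v))$, and all allowed swaps are equally weighted, so $\Pc^{X,w}$ picks $v\in V$ uniformly and then $c\in L(v)\bs \sigma(N_G(v))$ uniformly, matching $\Pc_{G,L}$. I then identify links with conditioned instances: for $\tau\in X(k)$ corresponding to a partial list-coloring on some $S\subseteq V$ with $|S|=s=k+1$, there is a canonical bijection between $X_\tau$ and the simplicial complex built from $(G_\tau,L_\tau)$ by the same recipe, with induced weight $w_\tau$ uniform on maximal faces. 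For any 1-face $\{(v,i),(w,k')\}$ of $X_\tau$,
\[
w_\tau(\{(v,i),(w,k')\}) = \big|\{\sigma'\in \Omega_{G_\tau,L_\tau}:\sigma'_v=i,\sigma'_{w}=k'\}\big| = |\Omega_{G_\tau,L_\tau}|\cdot \PP_{G_\tau,L_\tau}(\sigma_v=i,\sigma_{w}=k'),
\]
so the weighted 1-skeleton of $X_\tau$ is exactly $H_{G_\tau,L_\tau}$ up to an irrelevant global scaling. Hence the simple non-lazy random walk on the 1-skeleton of $X_\tau$ has transition matrix $\wPc_{G_\tau,L_\tau}$, and the local expansion hypothesis makes every $\tau\in X(k)$ an $\ell_{k+1}$-spectral expander.

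Applying \cref{thm:simplicialcomplex} with $\alpha_k=\ell_{k+1}$ for $k=-1,\dots,d-2=n-3$ then gives
\[
\lambda_2(\Pc_{G,L}) \le 1 - \frac{1}{n}\prod_{k=-1}^{n-3}(1-\ell_{k+1}) = 1 - \frac{1}{n}\prod_{s=0}^{n-2}(1-\ell_s) = 1 - \frac{1}{nL},
\]
so the spectral gap of $\Pc_{G,L}$ is at least $1/(nL)$. Combined with the standard reversible Markov chain bound $\Tmix\le n\ln(4Q)/(1-\lambda_2)$ recalled in \cref{sec:outline} (and $Q\le q$), this yields $\Tmix\le Ln^2\ln(4q)$. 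The only non-bookkeeping step is verifying purity of $X$ and of each link $X_\tau$; both reduce to the well-known fact that, under the standing assumption $|L(v)|\ge \Delta_G(v)+2$, every partial list-coloring extends to a proper list-coloring of the whole graph (see \cref{foot:ergod}), which is precisely what makes the simplicial-complex framework applicable in this setting.
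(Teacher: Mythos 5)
Your proposal does not prove the assigned statement. The statement in question is the general local-to-global theorem for weighted pure simplicial complexes (Alev--Lau's Theorem~1.5, i.e.\ \cref{thm:simplicialcomplex}), which the paper imports from \cite{AL} without proof. What you have written is instead a proof of \cref{thm:high-dimensional} --- the translation of that theorem to list-coloring instances --- and it explicitly \emph{invokes} \cref{thm:simplicialcomplex} as a black box. As a proof of the assigned statement this is circular: you assume exactly the bound $\lambda_2(\Pc)\leq 1-\frac{1}{d+1}\prod_{k=-1}^{d-2}(1-\alpha_k)$ that you are being asked to establish, and no step of your argument addresses a general weighted pure complex $(X,w)$ at all.

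What is missing is the actual content of the Alev--Lau result: an analysis of the Glauber dynamics (the down-up walk) on the maximal faces $X(d)$ in terms of the local random walks on the $1$-skeletons of the links $X_\tau$. The standard route compares the down-up and up-down walks level by level and runs an induction on the dimension (a Garland/Oppenheim-type descent argument, cf.\ \cite{KO,Oppenheim1}), showing that $\alpha_k$-spectral expansion of every $k$-dimensional face forces the spectral gap of the level-$d$ walk to be at least $\frac{1}{d+1}\prod_{k=-1}^{d-2}(1-\alpha_k)$; none of these ingredients appears in your write-up. For what it is worth, the argument you did give --- identifying $(G,L)$ with a pure $(n-1)$-dimensional complex with uniform weights, identifying links with conditioned instances $(G_\tau,L_\tau)$ and their weighted $1$-skeletons with $H_{G_\tau,L_\tau}$, and then converting the resulting spectral gap into the mixing-time bound $Ln^2\ln(4q)$ --- is essentially the paper's own proof of \cref{thm:high-dimensional} and is correct in that role; it simply answers a different question from the one posed.
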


\subsubsection{Application to list-colorings and proof of \texorpdfstring{\cref{thm:high-dimensional}}{Theorem 6}}
\label{sec:application}
We are now ready to prove~\cref{thm:high-dimensional} which we restate here for convenience.
\begin{thmhighdimensional}
\statethmhighdimensional
\end{thmhighdimensional}
\begin{proof}
A list-coloring  instance $(G,L)$ with $G=(V,E)$ and $L=\{L(v)\}_{v\in V}$ can be viewed as a weighted pure simplicial complex $(X_{G,L},w_{G,L})$ as follows. 

The ground set of elements is going to be the set $U_{G,L}=\{(v,i)\mid v\in V, i\in L(v)\}$. Then, a subset $S=\{(v_{1},i_1),\hdots, (v_{s},i_s)\}$ of $U_{G,L}$ is in 1-1 correspondence with a (partial) coloring assignment where vertex $v_j$ gets the color $i_j$ for $j=1,\hdots, s$. 

We let $X=X_{G,L}$ be the subsets of $U$ which correspond to the set of all partial list-colorings of the instance $(G,L)$, cf. \cref{foot:partial}. Then $X$ is a downward-closed collection of subsets; in fact, $X$ is a pure $(n-1)$-dimensional complex, since every element of $X$ contains at most 1 element from each of the sets $S_v=\{(v,i)\mid v\in V, i\in L(v)\}$ for $v\in V$  and maximal faces contain exactly one (since they correspond to the set of list-colorings $\Omega_{G,L}$). We let $w_{G,L}$ equal 1 for all maximal faces of $X$.

With these definitions, it remains to note that transitions for Glauber dynamics on $(G,L)$ are in 1-1 correspondence with transitions for Glauber dynamics on $(X_{G,L},w_{G,L})$, and that local expansion of $\alpha_s$ for a conditioned list-coloring instance $(G_\tau,L_{\tau})$ for some partial list-coloring $\tau$ on $S\subseteq V$ with $|S|=s$ translates into the face $\tau\in X_{G,L}(s-1)$ being an $\alpha_s$-spectral expander. Hence, the result follows by applying~\cref{thm:simplicialcomplex}.
\end{proof}


\begin{thebibliography}{99}

\bibitem{AL}
V. L. Alev and L. C. Lau. 
Improved analysis of higher order random walks and applications. 
In {\em Proceedings of the 52nd Annual ACM Symposium on Theory of Computing (STOC)}, 2020.

\bibitem{ALO}
N. Anari, K. Liu, and S. O. Gharan. 
Spectral independence in high-dimensional expanders and applications to the hardcore model. 
Preprint, 2020. Available from arXiv at:
\href{https://arxiv.org/abs/2001.00303}{arXiv:2001.00303}



 \bibitem{Barvinok}
A. Barvinok.
{\em Combinatorics and Complexity of Partition Functions}. Algorithms and Combinatorics, volume 30.
Springer, 2016.

\bibitem{CDMPP}
S. Chen, M. Delcourt, A. Moitra, G. Perarnau, and L. Postle.
Improved bounds for randomly sampling colorings via linear programming.
In {\em Proceedings of the 30th Annual ACM-SIAM Symposium on Discrete Algorithms (SODA)},
pages 2216--2234, 2019.


\bibitem{CLV}
Z. Chen, K. Liu, and E. Vigoda.
Rapid mixing of {G}lauber dynamics up to uniqueness via contraction.
Preprint, 2020.
Available from arXiv at: 
\href{https://arxiv.org/abs/2004.09083}{arXiv:2004.09083}

\bibitem{DFHV}
M. E. Dyer, A. M. Frieze, T. P. Hayes, and E. Vigoda.
Randomly coloring constant degree graphs.
{\em Random Structures and Algorithms}, 43(2):181--200, 2013.

\bibitem{GSV:ising}
A. Galanis, D. \v{S}tefankovi\v{c}, and E. Vigoda.
\newblock Inapproximability of the partition function for the antiferromagnetic
  {I}sing and hard-core models.
{\em Combinatorics, Probability \& Computing}, 25(4):500--559, 2016.

\bibitem{GSV}
A. Galanis, D. \v{S}tefankovi\v{c}, and E. Vigoda.
\newblock Inapproximability for antiferromagnetic spin systems in the tree non-uniqueness region.
{\em Journal of the ACM}, 62(6):article no. 50, 2015.



\bibitem{GK}
D. Gamarnik and D. Katz.
Correlation decay and deterministic {FPTAS} for counting list-colorings of a graph.
{\em Journal of Discrete Algorithms}, 12:29--47, 2012.

\bibitem{GKM}
D. Gamarnik, D. Katz, and S. Misra.
Strong spatial mixing for list coloring of graphs.
{\em Random Structures \& Algorithms}, 46(4):599--613, 2015.

\bibitem{GMP}
L.A. Goldberg, R. Martin, and M. Paterson. 
Strong spatial mixing with fewer colors for lattice graphs. 
{\em SIAM Journal on Computing}, 35(2):486-517, 2005.

\bibitem{Hayes}
T.P.~Hayes.
Local uniformity properties for {G}lauber dynamics on graph colorings. 
{\em Random Structures \& Algorithms}, 43:139--180, 2013.

\bibitem{HV}
T.~P. Hayes and E. Vigoda.
\newblock A non-Markovian coupling for randomly sampling colorings.
In \emph{Proceedings of the 44th Annual IEEE Symposium on Foundations of
Computer Science} (FOCS),
pages 618--627, 2003.

\bibitem{Jerrum}
M.~R. Jerrum.
A very simple algorithm for estimating the number of $k$-colorings of
  a low-degree graph. 
{\em Random Structures and Algorithms}, 7(2):157--165, 1995. 

\bibitem{JVV} M.R. Jerrum, L.G. Valiant and V.V. Vazirani.
Random generation of combinatorial structures from a uniform distribution,
{\em Theoretical Computer Science} 43(2-3):169--188, 1986.

\bibitem{Hub} M. Huber. 
Approximation algorithms for the normalizing constant of {G}ibbs distributions. 
{\em Annals of Applied Probability}, 25(2): 974--985, 2015. 

\bibitem{KO} T. Kaufman and I. Oppenheim. 
High order random walks: beyond spectral gap. 
In {\em Approximation, Randomization, and Combinatorial Optimization. Algorithms and Techniques (APPROX/RANDOM 2018)}, pages 47:1--47:17, 2018.

\bibitem{Kol} V. Kolmogorov.
A faster approximation algorithm for the {G}ibbs partition function.
In {\em Conference On Learning Theory, {COLT} 2018}, pages 228--249, 2018.

\bibitem{LP}
D. A. Levin and Y. Peres.
{\em Markov Chains and Mixing Times}, 2nd edition. American Mathematical Society, 2017.

\bibitem{LLY13}
L. Li, P. Lu, and Y. Yin.
Correlation decay up to uniqueness in spin systems.
In {\em Proceedings of the 24th Annual ACM-SIAM Symposium on
  Discrete Algorithms (SODA)}, pages 67--84, 2013.

\bibitem{LSS}
J. Liu, A. Sinclair, and P. Srivastava.
A deterministic algorithm for counting colorings with $2\Delta$ colors.
In {\em Proceedings of the 60th Annual IEEE Symposium on Foundations of Computer Science (FOCS)},
pages 1380--1404, 2019.

\bibitem{LY}
P. Lu and Y. Yin.
Improved FPTAS for multi-spin systems.
In {\em Approximation, Randomization, and Combinatorial Optimization. Algorithms and Techniques (APPROX/RANDOM 2013)}, pages 67--84, 2013.


\bibitem{PR17}
V. Patel and G. Regts.
Deterministic polynomial-time approximation   algorithms for partition functions and graph polynomials.
\emph{SIAM Journal on
  Computing}, 46(6): 1893--1919, 2017.

\bibitem{Oppenheim1}
I. Oppenheim. 
Local spectral expansion approach to high dimensional expanders Part I: descent of spectral gaps. 
{\em Discrete and Computational Geometry}, 59:293--330, 2018.


\bibitem{PR}
H. Peters and G. Regts. 
On a conjecture of {S}okal concerning roots of the independence polynomial. 
{\em The Michigan Mathematical Journal}, 68(1):33--55, 2019.

\bibitem{SST}
A. Sinclair, P. Srivastava, and M. Thurley.
Approximation algorithms for two-state anti-ferromagnetic spin
  systems on bounded degree graphs.
{\em J. Stat. Phys.}, 155(4):666--686, 2014.

\bibitem{SS}
S. Shao and Y. Sun. 
Contraction: a unified perspective of correlation decay and zero-freeness of 2-spin systems. 
In {\em Proceedings of the 47th International Colloquium on Automata, Languages and Programming (ICALP)}, 2020.

 \bibitem{Sly}
A. Sly. Computational transition at the uniqueness threshold. In
{\em Proceedings of the 51st Annual IEEE Symposium on Foundations of
Computer Science (FOCS)}, pages 287--296, 2010.

\bibitem{SlySun}
A. Sly and N. Sun.
The computational hardness of counting in two-spin models on
  $d$-regular graphs.
{\em  The Annals of Probability}, 42(6):2383--2416, 2014.


\bibitem{SVV}
D. \v{S}tefankovi\v{c}, S. Vempala, and E. Vigoda.
Adaptive simulated annealing: a near-optimal Connection between sampling and counting.
{\em Journal of the ACM}, 56(3):1--36, 2009.

\bibitem{Vigoda}
E.~Vigoda.
Improved bounds for sampling colorings.
{\em Journal of Mathematical Physics}, 41(3):1555--1569, 2000.

\bibitem{Weitz}
D. Weitz.
{Counting independent sets up to the tree threshold}.
In {\em Proceedings of the 38th Annual ACM Symposium on
  Theory of Computing (STOC)}, pages 140--149, 2006.

\end{thebibliography}
\end{document}